   \newcommand\SkipToFmtEnd{}%
   \newcommand\EndFmtInput{}%
   \long\def\SkipToFmtEnd#1\EndFmtInput{}%
\newcommand\ReadOnlyOnce[1]{\@ifundefined{#1}{\@namedef{#1}{}}\SkipToFmtEnd}
\DeclareFontFamily{OT1}{cmtex}{}
\DeclareFontShape{OT1}{cmtex}{m}{n}
  {<5><6><7><8>cmtex8
   <9>cmtex9
   <10><10.95><12><14.4><17.28><20.74><24.88>cmtex10}{}
\DeclareFontShape{OT1}{cmtex}{m}{it}
  {<-> ssub * cmtt/m/it}{}
\DeclareFontShape{OT1}{cmtt}{bx}{n}
  {<5><6><7><8>cmtt8
   <9>cmbtt9
   <10><10.95><12><14.4><17.28><20.74><24.88>cmbtt10}{}
\DeclareFontShape{OT1}{cmtex}{bx}{n}
  {<-> ssub * cmtt/bx/n}{}
\newcommand{\Conid}[1]{\mathit{#1}}
\newcommand{\Varid}[1]{\mathit{#1}}
\newcommand{\anonymous}{\kern0.06em \vbox{\hrule\@width.5em}}
\newcommand{\bind}{\mathbin{>\!\!\!>\mkern-6.7mu=}}
\newcommand{\sequ}{\mathbin{>\!\!\!>}}
\newdimen\mathindent\mathindent\leftmargini}%
\def\resethooks{%
  \global\let\SaveRestoreHook\empty
  \global\let\ColumnHook\empty}
\newcommand*{\savecolumns}[1][default]%
  {\g@addto@macro\SaveRestoreHook{\savecolumns[#1]}}
\newcommand*{\restorecolumns}[1][default]%
  {\g@addto@macro\SaveRestoreHook{\restorecolumns[#1]}}
\newcommand*{\aligncolumn}[2]%
  {\g@addto@macro\ColumnHook{\column{#1}{#2}}}
\newcommand{\onelinecommentchars}{\quad-{}- }
\newcommand{\commentbeginchars}{\enskip\{-}
\newcommand{\commentendchars}{-\}\enskip}
\newcommand{\visiblecomments}{%
  \let\onelinecomment=\onelinecommentchars
  \let\commentbegin=\commentbeginchars
  \let\commentend=\commentendchars}
\newcommand{\invisiblecomments}{%
  \let\onelinecomment=\empty
  \let\commentbegin=\empty
  \let\commentend=\empty}
\newlength{\blanklineskip}
\newcommand{\hsindent}[1]{\quad}
\let\hspre\empty
\let\hspost\empty
\newcommand{\hsnewpar}[1]%
  {{\parskip=0pt\parindent=0pt\par\vskip #1\noindent}}
\newcommand{\hscodestyle}{}
\newcommand{\sethscode}[1]%
  {\expandafter\let\expandafter\hscode\csname #1\endcsname
   \expandafter\let\expandafter\endhscode\csname end#1\endcsname}
   \let\hspre\(\let\hspost\)%
   \let\hspre\(\let\hspost\)%
\newcommand{\plainhs}{\sethscode{plainhscode}}
\def\codeframewidth{\arrayrulewidth}
   \let\endoflinesave=\\
   \framedhslinecorrect\endoflinesave{.5ex}\hline
\newcommand{\framedhslinecorrect}[2]%
  {#1[#2]}
\def\column##1##2{}%
   \newcommand\>[1][]{}\newcommand\<[1][]{}\newcommand\\[1][]{}%
   \def\fromto##1##2##3{##3}%
\let\orighscode=\hscode
   \let\origendhscode=\endhscode
   \def\endhscode{\def\hscode{\endgroup\def\@currenvir{hscode}\\}\begingroup}
\def\hscode{\endgroup\def\@currenvir{hscode}}}%
   \global\let\hscode=\orighscode
   \global\let\endhscode=\origendhscode}%
\newcommand{\hcancel}[1]{%
    \tikz[baseline=(tocancel.base)]{
        \node[inner sep=0pt,outer sep=0pt] (tocancel) {#1};
        \draw[red, thick] (tocancel.south west) -- (tocancel.north east);
    }%
}%
\newcommand{\ignore}[1]{}
\newcommand{\Red}[1]{{\color{red} #1}}
\newcommand{\Blue}[1]{{\color{blue} #1}}
\newcommand{\concept}[1]{}
\long\def\comment#1{}
\newcommand{\tocite}[1]{\Red{[cite]}}
\long\def\cut#1{}
\definecolor{gray}{RGB}{84,84,84}
\definecolor{dark-green}{RGB}{ 0,100,  0}
\newif\ifextended
\newcommand{\inputproof}[1]{\input{#1}}
\newcommand{\inlong}[1]{\Red{#1}}
\newcommand{\inputproof}[1]{}
\newcommand{\inlong}[1]{}
\newcommand{\nsu}{{no-sensitive-upgrade}}
\newcommand{\Nsu}{{No-sensitive-upgrade}}
\newcommand{\pu}{{permissive-upgrade}}
\newcommand{\Pu}{{Permissive-upgrade}}
\newcommand{\flows}{\sqsubseteq}
\newcommand{\lub}{\sqcup}
\newcommand{\glb}{\sqcap}
\newcommand{\lto}{\longrightarrow}
\newcommand{\lattice}{\ensuremath{\ell}}
\newcommand{\lcurr}{\ensuremath{l_{\textrm{cur}}}}
\newcommand{\Coloneqq}{::=} 
\newcommand{\fresh}[1]{\ensuremath{\textrm{fresh}(#1)}}
\newcommand{\ruleref}[1]{(\textsc{#1})}
\newcommand{\conf}[2]{\langle #1 \ensuremath{|} #2\rangle}
\newcommand{\tconf}[3]{\{ #1, #2 \ensuremath{|} #3\}}
\newcommand{\thread}[3]{\langle #1, #2, #3\rangle}
\newcommand{\lio}{\ensuremath{\lambda^{\textsf{\tiny LIO}}_{\ensuremath{\lattice}}}}
\newcommand{\liofs}{\ensuremath{\lambda^{\textsf{\tiny LIO}}_{\ensuremath{\lattice,\textsc{fs}}}}}
\newcommand{\lioconc}{\ensuremath{\lambda^{\textsf{\tiny $\parallel$-LIO}}_{\ensuremath{\lattice}}}}
\newcommand{\lioafs}{\ensuremath{\lambda^{\textsf{\tiny LIO}}_{\ensuremath{\lattice,\textsc{fs+au}}}}}
\begin{document}

\frontmatter

\title{On Dynamic Flow-Sensitive Floating-Label Systems}
\ifextended
\subtitle{Extended Version}
\fi

\author{Pablo Buiras\inst{1} \and Deian Stefan\inst{2} \and Alejandro Russo\inst{1}}
\institute{Chalmers University of Technology \and Stanford University}

\maketitle

\begin{abstract}
  Flow-sensitive analysis for information-flow control (IFC) allows data
  structures to have mutable security labels, i.e., labels that can change over
  the course of the computation.
  This feature is often used to boost the permissiveness of the IFC monitor, by
  rejecting fewer runs of programs, and to reduce the burden of explicit label
  annotations.
  However, adding flow-sensitive constructs (e.g., references or files) to a
  dynamic IFC system is subtle and may also introduce high-bandwidth covert
  channels.
  %
  %
  In this work, we extend LIO---a language-based floating-label
  system---with flow-sensitive references.
  The key insight to safely manipulating the label of a reference is to not
  only consider the label on the data stored in the reference, i.e., the
  reference label, but also the \emph{label on the reference label} itself.
  Taking this into consideration, we provide an \emph{upgrade} primitive that
  can be used to change the label of a reference in a safe manner.
  We additionally provide a mechanism for automatic upgrades
  to eliminate the burden of determining when a reference should be
  upgraded.
  This approach naturally extends to a concurrent setting, which has
  not been previously considered by dynamic flow-sensitive systems.
  For both our sequential and concurrent calculi we prove non-interference by
  embedding the flow-sensitive system into the original,
  flow-insensitive LIO calculus---a surprising result on its own.
\end{abstract}

\section{Introduction}
\label{sec:intro}

Modern software systems are composed of many complex components that handle
sensitive data.
In many cases (e.g., mobile and web applications) these disparate
components are provided by different authors, of varying
trustworthiness.
Unfortunately, because today's software development tools do not
provide a means for protecting sensitive data from untrusted code,
data theft and corruption is prevalent.


Information-flow control (IFC) is a promising approach to security
that provides data confidentiality and integrity in the presence of
untrusted code.
At a high level, IFC tracks and controls the flow of information through a
system according to a security policy, usually
\emph{non-interference}~\cite{Goguen:Meseguer:Noninterference}.
%
Non-interference states that public events should not depend on
sensitive data and dually, trusted data should not be affected by
untrusted events.
Hence, with IFC, the program is guaranteed to preserve data
confidentiality and integrity, even when composed of untrusted
components.
Indeed, this appealing guarantee has recently led to significant
research and development efforts that use IFC
to secure web
applications (e.g.,~\cite{DeGroef:2012:FWB:2382196.2382275,
giffin:hails, yang:2013:towards, Hedin13}) and mobile platforms
(e.g.,~\cite{Enck:2010,android:esorics13}).


To ensure data confidentiality and integrity, these dynamic IFC
systems associate \emph{security labels} with data and monitor where
such data can flow~\cite{myers:dlm, Stefan:2011}.
In this paper, we use the labels \ensuremath{\textbf{H}} and \ensuremath{\textbf{L}}, to respectively
denote secret and public data, and ensure that information cannot flow
from a secret entity into a public one, i.e., the labels are ordered
such that \ensuremath{\textbf{L}\;\flows\;\textbf{H}} and \ensuremath{\textbf{H}\;\not\flows\;\textbf{L}}.
In general, the partial order \ensuremath{\flows} (label check) is used to
govern the allowed flows.
We remark that our results apply to arbitrary lattices that may
 also express integrity concerns~\cite{myers:dlm, Stefan:2011}, we
 only use the two-point lattice for simplicity of exposition.

One of the facets of IFC analysis lies in how such labels, when associated with
objects, are treated~\cite{Hunt:2006}.
Specifically, some IFC systems (e.g.,~\cite{Breeze, Aeolus, stefan:lio,
stefan:addressing-covert, zeldovich:histar, Efstathopoulos:2005, krohn:flume})
treat labels on objects as \emph{immutable} and do not allow for changes over
the lifetime of the program, i.e., labels of objects are \emph{flow-insensitive}.
%
In contrast, other systems (e.g.,~\cite{Zdancewic02programminglanguages, Austin:Flanagan:PLAS09,
Austin:Flanagan:PLAS10}) are \emph{flow-sensitive}, i.e., they allow object
labels to change, in certain conditions, according to the sensitivity of the
data that is stored in the object.
In general, these flow-sensitive systems are more permissive, i.e., they allow
programs that flow-insensitive monitors would reject.

Consider, for instance, a web application that writes to a labeled
log while servicing user requests.
If the label of the log is \ensuremath{\textbf{L}}, a flow-insensitive IFC monitor would disallow
writing any sensitive data (e.g., error messages containing user-supplied data)
to the log, since this would constitute a leak.
However, in a flow-sensitive system, the label of the log can change (to
\ensuremath{\textbf{H}}), as to accommodate the kinds of data being written to the log.
For many applications, allowing labels to change in such a way is very
desirable---it alleviates the burden of having to a-priori determine
the precise labels of objects (e.g., the log).

\begin{wrapfigure}{r}{0.40\columnwidth}
  \small
\begin{hscode}\SaveRestoreHook
\column{B}{@{}>{\hspre}l<{\hspost}@{}}%
\column{6}{@{}>{\hspre}l<{\hspost}@{}}%
\column{E}{@{}>{\hspre}l<{\hspost}@{}}%
\>[B]{}\Varid{l}{}\<[6]%
\>[6]{}\mathbin{:=}\Conid{True}{}\<[E]%
\\
\>[B]{}\Varid{tmp}{}\<[6]%
\>[6]{}\mathbin{:=}\Conid{False}{}\<[E]%
\\
\>[B]{}\mathbf{if}\;\Varid{h}\;\mathbf{then}\;\Varid{tmp}\mathbin{:=}\Conid{True}{}\<[E]%
\\
\>[B]{}\mathbf{if}\;\neg \;\Varid{tmp}\;\mathbf{then}\;\Varid{l}\mathbin{:=}\Conid{False}{}\<[E]%
\ColumnHook
\end{hscode}\resethooks
\caption{\small\label{fig:fsattackintro} Flow-sensitive attack}
\end{wrapfigure}
Unfortunately, naively introducing flow-sensitive objects to a dynamic
IFC system can turn label changes into a covert channel~\cite{Russo:2010}.
%
Consider the code fragment of Figure~\ref{fig:fsattackintro}
where references \ensuremath{\Varid{l}} and \ensuremath{\Varid{h}} are respectively labeled \ensuremath{\textbf{L}} and
\ensuremath{\textbf{H}}.
By naively allowing arbitrary label changes---even if the new label is
more restricting---we can leak the contents of \ensuremath{\Varid{h}} into \ensuremath{\Varid{l}}.
In particular, suppose that the temporary variable \ensuremath{\Varid{tmp}} is initially
labeled \ensuremath{\textbf{L}}.
If the value stored in \ensuremath{\Varid{h}} is \ensuremath{\Conid{True}}, then in the first conditional we
assign \ensuremath{\Conid{True}} into \ensuremath{\Varid{tmp}} and raise its label to \ensuremath{\textbf{H}}, reflecting the
fact that the branch condition depends on sensitive data.
Since the \ensuremath{\Varid{tmp}} is \ensuremath{\Conid{True}}, the branch condition for the second conditional is
\ensuremath{\Conid{False}} and thus the value and label of \ensuremath{\Varid{l}} are left intact, i.e., \ensuremath{\Conid{True}} at
\ensuremath{\textbf{L}}.
However, if \ensuremath{\Varid{h}} is \ensuremath{\Conid{False}}, then the value and label of \ensuremath{\Varid{tmp}} do not
change---the first assignment is not executed.
Instead, the second assignment, which sets \ensuremath{\Varid{l}} to \ensuremath{\Conid{False}}, is
performed; since the label of the branch condition is \ensuremath{\textbf{L}}, the
label of \ensuremath{\Varid{l}} remains \ensuremath{\textbf{L}}.
Note that in both cases the label of \ensuremath{\Varid{l}} stays \ensuremath{\textbf{L}}, but the value of \ensuremath{\Varid{l}} is
the same as the secret \ensuremath{\Varid{h}}.
(Hence why \emph{label change} is considered a covert channel.)
In systems such as LIO and Breeze, which allow labels to be inspected, this
attack can be further simplified by simply checking the label of \ensuremath{\Varid{tmp}} after
the first assignment---if the secret is true then the label will be \ensuremath{\textbf{H}},
otherwise it will be \ensuremath{\textbf{L}}.

This attack is not new, and, to ensure that the covert channel is
not introduced when adding flow-sensitive references in such a way,
several solutions have already been proposed.
These solutions fall into roughly three categories.
First, the IFC monitor can incorporate static information to ensure that such
leaks are disallowed~\cite{Russo:2010}.
Second, the IFC monitor can forbid certain label changes, depending on the
context (e.g., the program counter (\emph{pc})
label~\cite{sabelfeld:language-based-iflow}).
For instance, the \emph{no-sensitive upgrades} policy disallows raising the
label of a public reference in a sensitive context (e.g., when a branch
condition is \ensuremath{\textbf{H}})~\cite{Zdancewic02programminglanguages,
Austin:Flanagan:PLAS09}.
And, third, the monitor can disallow branches that depend on certain variables,
for which the label was mutated, as done by the \emph{permissive upgrades}
policy~\cite{Austin:Flanagan:PLAS10}.

In this paper, we take a fresh perspective on flow-sensitivity 
in the context of coarse-grained floating-label systems, in
particular, the LIO IFC system~\cite{stefan:lio,
stefan:addressing-covert}. 
LIO brings ideas from IFC Operating Systems---notably,
HiStar~\cite{zeldovich:histar}---into a language-based setting.
In particular, LIO takes an OS-like coarse-grained approach by associating a
single ``current'' floating-label with a computation (and everything in scope),
instead of heterogeneously labeling every variable, as typically done by
language-based systems (e.g.,~\cite{jif,FlowCaml}).
This floating-label is raised (e.g., from \ensuremath{\textbf{L}} to \ensuremath{\textbf{H}}) to accommodate
reading sensitive data and thus serves as a form of ``taint'' reflecting the
sensitive of data in context, i.e., LIO is flow-sensitive in the current label.
(This can be seen as raising the \emph{pc} in more traditional language-based
systems.)
In turn, the LIO monitor uses the ``current'' label to
restrict where the computation can write (e.g., once the current label
is raised to \ensuremath{\textbf{H}}, it can no longer write to references labeled
\ensuremath{\textbf{L}}).
However, like other IFC systems, 
LIO is \emph{flow-insensitive} in
object labels. 

This work extends the LIO IFC system, both the sequential and concurrent
versions, to incorporate flow-sensitive references.
A key insight of this work is to consider labels of references as
being composed of two elements: the reference label describing the
confidentiality (integrity) of the stored value, and another label, called
\emph{the label on the label}, which describes the confidentiality
(integrity) of the reference label itself.
Our monitor, then only forbids changing a label of a reference if
\emph{the label on the label is below the current floating-label}.
Inspired by~\cite{Hedin13}, we add a primitive for safely and
explicitly \emph{upgrading} labels.
%
This boosts the permissiveness of LIO, and, for instance, allows programs, such
at the logging web application described above, which would otherwise be
rejected by the IFC monitor.

To reduce the burden of introducing upgrade annotations, our
calculus provides a means for automatically upgrading references for which the
computation is about to ``lose'' write access, i.e., before tainting the
computation by raising the current label, we first upgrade all the references
whose labels are below the (new) current label.
While secure, this feature facilitates a form of \emph{label creep},
wherein all flow-sensitive references might end up with labels that
are ``too high.''
To further address this, we propose a block-structured primitive which
only upgrades the labels of declared flow-sensitive references, while
disallowing writes to undeclared ones.

By taking a fresh perspective on flow-sensitivity, we also show that our
flow-sensitive extension
can be entirely encoded using existing flow-insensitive
constructs---the key insight is to explicitly model flow-sensitive values as
\emph{nested flow-insensitive labeled references}.
In the context of LIO, this encoding has the added benefit of allowing us to prove
non-interference by simply invoking previous results.
Equally important, the sequential semantics for LIO with flow-sensitive
references directly extend to the concurrent setting.

The contributions of this paper are as follows:
\begin{itemize}

\item We extend LIO to incorporate flow-sensitive objects, with a focus on
references.
Specifically, we introduce two explicit primitives to safely raise (\ensuremath{\textbf{upgrade}})
or downgrade (\ensuremath{\textbf{downgrade}}) the security label of references.
This extension not only increases LIO's permissiveness, but also provides a
means for safely combining flow-insensitive and flow-sensitive references.

\item We present a uniform treatment for flow-insensitive and
flow-sensitive references in both sequential and concurrent settings. To the
best of our knowledge, we are the first to analyze the challenges of purely
dynamic monitors with flow-sensitive references in the presence of concurrency.

\item A non-interference proof for the different calculi that leverages the
encoding of flow-sensitive references using flow-insensitive constructs.

\end{itemize}
%

%
The novel aspect of this article, with respect to its conference
version~\cite{buiras:on-dynamic}, is the extension of our formal
results to consider a \ensuremath{\textbf{downgrade}} primitive that further boosts
permissiveness. Additionally, we compare our approach with
\emph{no-sensitive-upgrade}~\cite{Zdancewic02programminglanguages} and
\emph{permissive-upgrade}~\cite{Austin:Flanagan:PLAS09}---two known
policies for label changes.
%

We remark that while our development focuses on LIO, we believe that
our results generalize to other sequential and concurrent
floating-label systems (e.g.,~\cite{Breeze, Efstathopoulos:2005,
zeldovich:histar, krohn:flume}).

The rest of the paper is organized as follows. Section~\ref{sec:background}
provides an introduction to LIO and its formalization.
Section~\ref{sec:flow-sensitive} presents our flow-sensitivity extensions and
enforcement mechanism. Section~\ref{sec:conc} extends this approach to the
concurrent setting. Section~\ref{sec:soundness} presents the embedding of our
enforcement using flow-insensitive constructs, from which our formal security
guarantees follow.  We discuss related work in Section~\ref{sec:related} and
conclude in Section~\ref{sec:conclusion}.


\section{Introduction to LIO}
\label{sec:background}

LIO is a language-level IFC system, implemented as a library in
Haskell.  The library provides a new \emph{monad}, \ensuremath{\Conid{LIO}}, atop which
programmers implement computations, which may use the LIO
API to perform side effects (e.g., mutate a reference or write to a
file).
%
%
%

The \ensuremath{\Conid{LIO}} monad implements a purely dynamic execution monitor.
%
Specifically, \ensuremath{\Conid{LIO}} encapsulates the state necessary to enforce
IFC for the computation under evaluation.
Part of this state is the current (floating) label.
Intuitively, the current label serves a role similar to the program
counter (\ensuremath{\Varid{pc}}) of more-traditional IFC systems (e.g.,~\cite{FlowCaml}):
it is used to restrict the current
computation from performing side-effects that may compromise the
confidentiality or integrity of data (e.g., by restricting where the
current computation may write).

%
%

To soundly reason about IFC, every piece of data \emph{must} be
labeled, including literals, terms, and labels themselves.
However and different from most language-based systems
(e.g.,~\cite{myers:jif, FlowCaml, Breeze}) where every value is
explicitly labeled,
LIO takes a coarse-grained approach and uses the current label to protect all
values in scope.
As in IFC operating systems~\cite{Efstathopoulos:2005,
zeldovich:histar}, in LIO, the current label \ensuremath{\lcurr} is the label on
all the non-explicitly labeled values in the context of a computation.
%
%


To allow for computations on resistive data, LIO raises the current
label to protect newly read data.
That is, the current label is raised to ``float'' above the labels of
all the objects read by the current computation.
Raising the current label allows computations to flexibly read data,
at the cost of being more limited in where they can subsequently
write.
Concretely, a computation with current label \ensuremath{\lcurr} can read data
labeled \ensuremath{\Varid{l}_{\Varid{d}}} by raising its current label to \ensuremath{\lcurr'\mathrel{=}\lcurr\;\lub\;\Varid{l}_{\Varid{d}}}, but can thereafter only write to entities labeled \ensuremath{\Varid{l}_{\Varid{e}}} if
\ensuremath{\lcurr'\;\flows\;\Varid{l}_{\Varid{e}}}.
Hence, for example, a public LIO computation can read secret data by
first raising \ensuremath{\lcurr} from \ensuremath{\textbf{L}} to \ensuremath{\textbf{H}}. Importantly, however, the
new current label prevents the computation from subsequently writing
to public entities.


\subsection{\lio: A coarse-grained IFC calculus}


\begin{figure}[t]
\small
\centering
\begin{hscode}\SaveRestoreHook
\column{B}{@{}>{\hspre}l<{\hspost}@{}}%
\column{9}{@{}>{\hspre}l<{\hspost}@{}}%
\column{14}{@{}>{\hspre}c<{\hspost}@{}}%
\column{14E}{@{}l@{}}%
\column{16}{@{}>{\hspre}c<{\hspost}@{}}%
\column{16E}{@{}l@{}}%
\column{19}{@{}>{\hspre}l<{\hspost}@{}}%
\column{E}{@{}>{\hspre}l<{\hspost}@{}}%
\>[B]{}\mathrm{Values}\;{}\<[9]%
\>[9]{}\Varid{v}{}\<[14]%
\>[14]{}\Coloneqq{}\<[14E]%
\>[19]{}\Conid{True}\mid \Conid{False}\mid ()\mid \lambda \Varid{x}.\Varid{t}\mid \lattice\mid LIO^{\Red{\textsf{{\tiny TCB}}}}\;\Varid{t}{}\<[E]%
\\
\>[B]{}\mathrm{Terms}\;{}\<[9]%
\>[9]{}\Varid{t}{}\<[14]%
\>[14]{}\Coloneqq{}\<[14E]%
\>[19]{}\Varid{v}\mid \Varid{x}\mid \Varid{t}\;\Varid{t}\mid \mathbf{fix}\;\Varid{t}\mid \mathbf{if}\;\Varid{t}\;\mathbf{then}\;\Varid{t}\;\mathbf{else}\;\Varid{t}{}\<[E]%
\\
\>[14]{}\hsindent{2}{}\<[16]%
\>[16]{}\mid {}\<[16E]%
\>[19]{}\Varid{t}\;\otimes\;\Varid{t}\mid \mathbf{return}\;\Varid{t}\mid \Varid{t}\bind \Varid{t}\mid \mathbf{getLabel}{}\<[E]%
\\
\>[B]{}\mathrm{Types}\;{}\<[9]%
\>[9]{}\tau{}\<[14]%
\>[14]{}\Coloneqq{}\<[14E]%
\>[19]{}\Conid{Bool}\mid ()\mid \tau\to \tau\mid \lattice\mid \Conid{LIO}\;\tau{}\<[E]%
\\
\>[B]{}\mathrm{Ops}_\lattice\;{}\<[9]%
\>[9]{}\otimes{}\<[14]%
\>[14]{}\Coloneqq{}\<[14E]%
\>[19]{}\lub\mid \glb\mid \flows{}\<[E]%
\ColumnHook
\end{hscode}\resethooks
\caption{\label{fig:sos:base} Syntactic categories for base \lio.}
\end{figure}
We give the precise semantics for LIO by extending the simply-typed,
call-by-name $\lambda$-calculus;
we call this extended IFC calculus \lio.
The formal syntax of the core \lio{} calculus, parametric in the label
type \ensuremath{\lattice}, is given in Fig.~\ref{fig:sos:base}.
Syntactic categories \ensuremath{\Varid{v}}, \ensuremath{\Varid{t}}, and \ensuremath{\tau} represent values, terms, and
types, respectively.
Values include standard primitives (Booleans, unit, and $\lambda$-abstractions)
and terminals corresponding to labels (\ensuremath{\lattice}) and monadic values (\ensuremath{LIO^{\Red{\textsf{{\tiny TCB}}}}\;\Varid{t}}).\footnote{We restrict our formalization to computations implemented in the
  \ensuremath{\Conid{LIO}} monad and only consider Haskell features relevant to IFC, similar to the
  presentation of LIO in~\cite{stefan:2012:arxiv-flexible}.}
We note
that values of the form \ensuremath{LIO^{\Red{\textsf{{\tiny TCB}}}}\;\Varid{t}} denote computations subject to security
checks. (In fact, security checks are only applied to such values.)
Terms are composed of standard constructs (values, variables \ensuremath{\Varid{x}}, function
application, the \ensuremath{\mathbf{fix}} operator, and conditionals), terminals corresponding to
label operations (\ensuremath{\Varid{t}\;\otimes\;\Varid{t}}, where \ensuremath{\lub} is the join, \ensuremath{\glb} is the meet, and
\ensuremath{\flows} is the partial-order on labels), standard monadic operators (\ensuremath{\mathbf{return}\;\Varid{t}} and \ensuremath{\Varid{t}\bind \Varid{t}}), and \ensuremath{\mathbf{getLabel}}, a term for inspecting the current
label, as further explained below.
We do not consider terms annotated with \ensuremath{\cdot\;{}^{\Red{\textsf{{\tiny{TCB}}}}}} as part of the
surface syntax, i.e., such syntax nodes are not made available to
programmers and are solely used internally in our semantic
description.
Types consist of Booleans, unit, function types, labels, and \ensuremath{\Conid{LIO}}
computations; since the \lio{} type system is standard, we do not
discuss it further.

%
We include monadic terms in our calculus since (in Haskell) monads
dictate the evaluation order of a program and encapsulate all
side-effects, including I/O~\cite{moggi:monads, wadler:monads};
LIO leverages monads to precisely control what (side-effecting)
operations the programmer is allowed to perform at any given time.
\begin{wrapfigure}{r}{0.30\columnwidth}
\vspace{-12pt}
\begin{hscode}\SaveRestoreHook
\column{B}{@{}>{\hspre}l<{\hspost}@{}}%
\column{5}{@{}>{\hspre}l<{\hspost}@{}}%
\column{E}{@{}>{\hspre}l<{\hspost}@{}}%
\>[B]{}\mathbf{do}\;{}\<[5]%
\>[5]{}\Varid{x}\leftarrow \Varid{t}{}\<[E]%
\\
\>[5]{}\mathbf{return}\;(\Varid{x}\mathbin{+}\mathrm{1}){}\<[E]%
\ColumnHook
\end{hscode}\resethooks
\caption{\small\label{fig:do}\ensuremath{\mathbf{do}}-notation}
\end{wrapfigure}
In particular, an LIO program is simply a computation in the \ensuremath{\Conid{LIO}} monad,
composed from simpler monadic terms using \emph{return} and \emph{bind}.
Term \ensuremath{\mathbf{return}\;\Varid{t}} produces a computation which simply returns the value
denoted by \ensuremath{\Varid{t}}.
Term \ensuremath{\bind }, called \emph{bind}, is used to sequence
LIO computations. Specifically, term \ensuremath{\Varid{t}\bind (\lambda \Varid{x}.\Varid{t}')} takes the result
produced by term \ensuremath{\Varid{t}} and applies function \ensuremath{\lambda \Varid{x}.\Varid{t}'} to it.  (This
operator allows computation \ensuremath{\Varid{t}'} to depend on the value produced by
\ensuremath{\Varid{t}}.) 
We sometimes use Haskell’s \ensuremath{\mathbf{do}}-notation to write such monadic
computations. For example, the term \ensuremath{\Varid{t}\bind \lambda \Varid{x}.\mathbf{return}\;(\Varid{x}\mathbin{+}\mathrm{1})}, which
simply adds 1 to the value produced by the term \ensuremath{\Varid{t}}, can be written
using \ensuremath{\mathbf{do}}-notation as shown in Figure~\ref{fig:do}.
%

%
%
A top-level \lio{} computation is a \emph{configuration} of the form
\ensuremath{\conf{\Sigma}{\Varid{t}}}, where \ensuremath{\Varid{t}} is the monadic term and \ensuremath{\Sigma} is the state associated with
the term.
As in~\cite{stefan:lio,stefan:addressing-covert}, we take an
imperative approach to modeling the LIO state as a separate component
of the configuration (as opposed to being part of the term).
We partially define the state of \lio~ to (at least) contain the
current label \ensuremath{\lcurr}, i.e., \ensuremath{\Sigma\mathrel{=}(\lcurr,\mathbin{...})}; here, \ensuremath{\mathbin{...}} denotes
other parts of the state not relevant at this point.
Under this definition, a top-level well-typed \lio{} term has the form \ensuremath{\Delta,\Gamma\vdash\Varid{t}\mathbin{:}\Conid{LIO}\;\tau}, where \ensuremath{\Delta} is the store typing, and \ensuremath{\Gamma} is the usual type
environment.
%

\begin{figure}[t]
\small
\begin{hscode}\SaveRestoreHook
\column{B}{@{}>{\hspre}l<{\hspost}@{}}%
\column{5}{@{}>{\hspre}l<{\hspost}@{}}%
\column{E}{@{}>{\hspre}l<{\hspost}@{}}%
\>[B]{}\Red{E}{}\<[5]%
\>[5]{}\Coloneqq\Red{E}\;\Varid{t}\mid \mathbf{fix}\;\Red{E}\mid \mathbf{if}\;\Red{E}\;\mathbf{then}\;\Varid{t}\;\mathbf{else}\;\Varid{t}\mid \Red{E}\;\otimes\;\Varid{t}\mid \Varid{v}\;\otimes\;\Red{E}{}\<[E]%
\\
\>[B]{}\textbf{\Blue{E}}{}\<[5]%
\>[5]{}\Coloneqq[\mskip1.5mu \mskip1.5mu]\mid \Red{E}\mid \textbf{\Blue{E}}\bind \Varid{t}{}\<[E]%
\ColumnHook
\end{hscode}\resethooks
\begin{mathpar}
\\
\inferrule[getLabel]
{ \ensuremath{\Sigma\mathrel{=}(\lcurr,\mathbin{...})}  }
{
\ensuremath{\conf{\Sigma}{\textbf{\Blue{E}}\;[\mskip1.5mu \mathbf{getLabel}\mskip1.5mu]}\lto\conf{\Sigma}{\textbf{\Blue{E}}\;[\mskip1.5mu \mathbf{return}\;\lcurr\mskip1.5mu]}}
}
\end{mathpar}
\caption{Evaluation contexts and \ensuremath{\mathbf{getLabel}} reduction rule.\label{fig:sos:rules-abr}}
\end{figure}
We use evaluation contexts in the style of Felleisen and Hieb to specify the
reduction rules for \lio~\cite{felleisen1992revised}.
Figure~\ref{fig:sos:rules-abr} defines the evaluation contexts for pure
terms (\ensuremath{\Red{E}}) and monadic terms (\ensuremath{\textbf{\Blue{E}}}) for the base \lio.
The definitions are standard; we solely highlight that monadic terms
are evaluated only at the outermost use of bind (\ensuremath{\textbf{\Blue{E}}\bind \Varid{t}}), as in
Haskell.
For the base \lio, we also give the reduction rule for the monadic
term \ensuremath{\mathbf{getLabel}}, which simply retrieves the current label.
As shown later, it is precisely this label that is used to restrict
the reads/writes performed by the current computation.
%
%
%
The rest of the reduction rules for the base calculus are straight
forward and given Appendix~\ref{sec:app:sem}.
%

%
%

\subsection{Labeled values}

Using \ensuremath{\lcurr} as the label on all terms in scope makes it trivial to deal with
implicit flows. Branch conditions, which are simply values of type \ensuremath{\Conid{Bool}}, are
already implicitly labeled with \ensuremath{\lcurr}. Consequently, all the subsequent writes cannot leak
this bit---the current label restricts all the possible writes.
%
%
However, this coarse-grained labeling approach suffers from a severe restriction: a piece
of code cannot, for example, 
write a public
value (e.g., \ensuremath{\mathrm{42}}) to a public channel labeled \ensuremath{\textbf{L}} after observing secret
data, even if the value is independent from the secret---once secret data is read, the current label is raised to \ensuremath{\textbf{H}} thereby
``over tainting'' the public data in scope.

\begin{figure}[t] 
\small
\begin{hscode}\SaveRestoreHook
\column{B}{@{}>{\hspre}l<{\hspost}@{}}%
\column{6}{@{}>{\hspre}l<{\hspost}@{}}%
\column{17}{@{}>{\hspre}l<{\hspost}@{}}%
\column{E}{@{}>{\hspre}l<{\hspost}@{}}%
\>[B]{}\Varid{v}{}\<[6]%
\>[6]{}\Coloneqq\cdots{}\<[17]%
\>[17]{}\mid Lb^{\Red{\textsf{{\tiny TCB}}}}\;\Varid{l}\;\Varid{t}{}\<[E]%
\\
\>[B]{}\Varid{t}{}\<[6]%
\>[6]{}\Coloneqq\cdots{}\<[17]%
\>[17]{}\mid \textbf{label}\;\Varid{t}\;\Varid{t}\mid \textbf{unlabel}\;\Varid{t}\mid \textbf{labelOf}\;\Varid{t}{}\<[E]%
\\
\>[B]{}\tau{}\<[6]%
\>[6]{}\Coloneqq\cdots{}\<[17]%
\>[17]{}\mid \Conid{Labeled}\;\tau{}\<[E]%
\\
\>[B]{}\Red{E}{}\<[6]%
\>[6]{}\Coloneqq\cdots{}\<[17]%
\>[17]{}\mid \textbf{label}\;\Red{E}\;\Varid{t}\mid \textbf{unlabel}\;\Red{E}\mid \textbf{labelOf}\;\Red{E}{}\<[E]%
\ColumnHook
\end{hscode}\resethooks

\begin{mathpar}
\inferrule[label]
{ \ensuremath{\Sigma\mathrel{=}(\lcurr,\mathbin{...})}\\
  \ensuremath{\lcurr\;\flows\;\Varid{l}}
}
{
\ensuremath{\conf{\Sigma}{\textbf{\Blue{E}}\;[\mskip1.5mu \textbf{label}\;\Varid{l}\;\Varid{t}\mskip1.5mu]}\lto\conf{\Sigma}{\textbf{\Blue{E}}\;[\mskip1.5mu \mathbf{return}\;(Lb^{\Red{\textsf{{\tiny TCB}}}}\;\Varid{l}\;\Varid{t})\mskip1.5mu]}}
}
\and

\inferrule[unlabel]
{ \ensuremath{\Sigma\mathrel{=}(\lcurr,\mathbin{...})}\\
  \ensuremath{\lcurr'\mathrel{=}\lcurr\;\lub\;\Varid{l}}\\
  \ensuremath{\Sigma'\mathrel{=}(\lcurr',\mathbin{...})}
}
{
\ensuremath{\conf{\Sigma}{\textbf{\Blue{E}}\;[\mskip1.5mu \textbf{unlabel}\;(Lb^{\Red{\textsf{{\tiny TCB}}}}\;\Varid{l}\;\Varid{t})\mskip1.5mu]}\lto\conf{\Sigma'}{\textbf{\Blue{E}}\;[\mskip1.5mu \mathbf{return}\;\Varid{t}\mskip1.5mu]}}
}
\and
\inferrule[labelOf]
{ }
{ \ensuremath{\Red{E}\;[\mskip1.5mu \textbf{labelOf}\;(Lb^{\Red{\textsf{{\tiny TCB}}}}\;\Varid{l}\;\Varid{t})\mskip1.5mu]\lto\Red{E}\;[\mskip1.5mu \Varid{l}\mskip1.5mu]} }
\end{mathpar}
\caption{Extending base \lio{} with labeled values\label{fig:sos:labeled}.}
\end{figure}

To address this limitation, LIO provides \ensuremath{\Conid{Labeled}} values.
A \ensuremath{\Conid{Labeled}} value is a term that is explicitly protected by a label,
other than the current label.
%
%
Figure~\ref{fig:sos:labeled} shows the extension of the base \lio{}
with \ensuremath{\Conid{Labeled}} values.

The \ensuremath{\textbf{label}} terminal is used to explicitly label a term.
As rule \ruleref{label} shows,
\ensuremath{\textbf{label}\;\Varid{l}\;\Varid{t}} associates the supplied
label \ensuremath{\Varid{l}} with term \ensuremath{\Varid{t}} by wrapping the term with the \ensuremath{Lb^{\Red{\textsf{{\tiny TCB}}}}}
constructor.
Importantly, it first asserts that the new label (\ensuremath{\Varid{l}}), which will
used to protect \ensuremath{\Varid{t}}, is at least as restricting as the current label,
i.e., \ensuremath{\lcurr\;\flows\;\Varid{l}}.

Dually, terminal \ensuremath{\textbf{unlabel}} unwraps explicitly labeled values.
As defined in rule \ruleref{unlabel}, given a labeled value
\ensuremath{Lb^{\Red{\textsf{{\tiny TCB}}}}\;\Varid{l}\;\Varid{t}}, \ensuremath{\textbf{unlabel}} returns the wrapped term \ensuremath{\Varid{t}}.
Since the returned term is no longer explicitly labeled by \ensuremath{\Varid{l}}, and
is instead protected by the current label, \ensuremath{\lcurr} must be at least
as restricting as \ensuremath{\Varid{l}}.
To ensure this, the current label is raised from \ensuremath{\lcurr} to \ensuremath{\lcurr\;\lub\;\Varid{l}}, capturing the fact that the remaining computation might depend on
\ensuremath{\Varid{t}}.
This rule highlights the fact that the current label always
``floats'' above the labels of the values observed by the current
computation.

The \ensuremath{\textbf{labelOf}} function provides a means for inspecting the label of a
labeled value.
As detailed by reduction rule \ruleref{labelOf}, given a labeled value
\ensuremath{Lb^{\Red{\textsf{{\tiny TCB}}}}\;\Varid{l}\;\Varid{t}}, the function returns the label \ensuremath{\Varid{l}} protecting term
\ensuremath{\Varid{t}}.
This allows code to check the label of a labeled value before deciding
to unlabel it, and thereby raising the current label.
It it worth noting that regardless of the current label in the
configuration, the label of a labeled value can be
inspected---hence labels are effectively ``public.''\footnote{
  Since labeled values can be nested, this only applies to
  the labels of top-level labeled values.
  Indeed, even these labels are not public---they are protected by the
  current label.
  However, since code can always observe objects labeled at the
  current label, this is akin to being public.
}

A common problem with dynamic IFC systems is \emph{label
creep}~\cite{sabelfeld:language-based-iflow}---the raising of the
current label to a point where the computation can no longer do
anything useful.
%
%
%
To avoid label creep, LIO provides \ensuremath{\textbf{toLabeled}} as a way to
allow the current label to be \emph{temporarily} raised during the execution of
a given computation.
%
We extend the terms and the pure evaluation context as \ensuremath{\Varid{t}\Coloneqq\cdots\;|\;\textbf{toLabeled}\;\Varid{t}\;\Varid{t}} and \ensuremath{\Red{E}\Coloneqq\cdots\;|\;\textbf{toLabeled}\;\Red{E}\;\Varid{t}}, respectively, and give the precise semantics for
\ensuremath{\textbf{toLabeled}} as follows:
\begin{mathpar}
\inferrule[toLabeled]
{
\ensuremath{\Sigma\mathrel{=}(\lcurr,\mathbin{...})}\\
\ensuremath{\lcurr\;\flows\;\Varid{l}} \\
\ensuremath{\conf{\Sigma}{\Varid{t}}\lto^*\conf{\Sigma'}{LIO^{\Red{\textsf{{\tiny TCB}}}}\;\Varid{t}'}} \\
\ensuremath{\Sigma'\mathrel{=}(\lcurr',\mathbin{...})}\\
\ensuremath{\lcurr'\;\flows\;\Varid{l}}\\
\ensuremath{\Sigma''\mathrel{=}\Sigma\;\!\ltimes\!\;\Sigma'}
}
{
\ensuremath{\conf{\Sigma}{\textbf{\Blue{E}}\;[\mskip1.5mu \textbf{toLabeled}\;\Varid{l}\;\Varid{t}\mskip1.5mu]}\lto\conf{\Sigma''}{\textbf{\Blue{E}}\;[\mskip1.5mu \textbf{label}\;\Varid{l}\;\Varid{t}'\mskip1.5mu]}}
}
\end{mathpar}
If the current label at the point of executing
\ensuremath{\textbf{toLabeled}\;\Varid{l}\;\Varid{t}} is \ensuremath{\lcurr}, \ensuremath{\textbf{toLabeled}} evaluates \ensuremath{\Varid{t}} to completion
(\ensuremath{\conf{\Sigma}{\Varid{t}}\lto^*\conf{\Sigma'}{LIO^{\Red{\textsf{{\tiny TCB}}}}\;\Varid{t}'}}) and
restores the current label to \ensuremath{\lcurr}, i.e., \ensuremath{\textbf{toLabeled}} provides a
separate context in which \ensuremath{\Varid{t}} is evaluated.
(Here, the state merge function \ensuremath{\!\ltimes\!} is defined as: \ensuremath{\Sigma\;\!\ltimes\!\;\Sigma'\triangleq{}\Sigma}, in the next section we present an alternative definition.)
We note that returning the result of evaluating \ensuremath{\Varid{t}} directly (e.g., as
\ensuremath{\conf{\Sigma}{\textbf{\Blue{E}}\;[\mskip1.5mu \textbf{toLabeled}\;\Varid{l}\;\Varid{t}\mskip1.5mu]}\lto\conf{\Sigma''}{\textbf{\Blue{E}}\;[\mskip1.5mu \Varid{t}'\mskip1.5mu]}}) would allow for
trivial leaks; thus, \ensuremath{\textbf{toLabeled}} labels \ensuremath{\Varid{t}'} with \ensuremath{\Varid{l}} (\ensuremath{\conf{\Sigma''}{\textbf{\Blue{E}}\;[\mskip1.5mu \textbf{label}\;\Varid{l}\;\Varid{t}'\mskip1.5mu]}}).
This effectively states that the result of \ensuremath{\Varid{t}} is protected by label
\ensuremath{\Varid{l}}, as opposed to the current label (\ensuremath{\lcurr'}) at the point \ensuremath{\Varid{t}} completed.
Importantly, this requires that the result not be more sensitive than
\ensuremath{\Varid{l}}, i.e., \ensuremath{\lcurr'\;\flows\;\Varid{l}}.


\subsection{Labeled references}

To complete the description of LIO, we extend the \lio{} calculus  with
mutable, flow-insensitive references.
Conceptually, flow-insensitive references are simply mutable \ensuremath{\Conid{Labeled}} values.
Like labeled values, the label of a reference is immutable and serves to
protect the underlying term.
%
%
The immutable label makes the semantics straightforward: writing a
term to a reference amounts to ensuring that the reference label is as
restrictive as the current label, i.e., the reference label must be
above the current label; reading from a reference taints the current
label with the reference label.
%

\begin{figure}[t]
\small
\begin{hscode}\SaveRestoreHook
\column{B}{@{}>{\hspre}l<{\hspost}@{}}%
\column{6}{@{}>{\hspre}l<{\hspost}@{}}%
\column{17}{@{}>{\hspre}l<{\hspost}@{}}%
\column{E}{@{}>{\hspre}l<{\hspost}@{}}%
\>[B]{}\Varid{v}{}\<[6]%
\>[6]{}\Coloneqq\cdots{}\<[17]%
\>[17]{}\mid Ref^{\Red{\textsf{{\tiny TCB}}}}_{\textsc{fi}}\;\Varid{l}\;\Varid{a}{}\<[E]%
\\
\>[B]{}\Varid{t}{}\<[6]%
\>[6]{}\Coloneqq\cdots{}\<[17]%
\>[17]{}\mid \textbf{newRef}_{\Varid{s}}\;\Varid{t}\;\Varid{t}\mid \textbf{writeRef}_{\Varid{s}}\;\Varid{t}\;\Varid{t}\mid \textbf{readRef}_{\Varid{s}}\;\Varid{t}{}\<[E]%
\\
\>[17]{}\mid \textbf{labelOf}_{\Varid{s}}\;\Varid{t}\mid \textbf{copyRef}\;\Varid{t}\;\Varid{t}{}\<[E]%
\\
\>[B]{}\tau{}\<[6]%
\>[6]{}\Coloneqq\cdots{}\<[17]%
\>[17]{}\mid Ref_{\Varid{s}}\;\tau{}\<[E]%
\\
\>[B]{}\Red{E}{}\<[6]%
\>[6]{}\Coloneqq\cdots{}\<[17]%
\>[17]{}\mid \textbf{newRef}_{\Varid{s}}\;\Red{E}\;\Varid{t}\mid \textbf{writeRef}_{\Varid{s}}\;\Red{E}\;\Varid{t}\mid \textbf{readRef}_{\Varid{s}}\;\Red{E}{}\<[E]%
\\
\>[17]{}\mid \textbf{labelOf}_{\Varid{s}}\;\Red{E}\mid \textbf{copyRef}\;\Red{E}\;\Varid{t}\mid \textbf{copyRef}\;\Varid{v}\;\Red{E}{}\<[E]%
\ColumnHook
\end{hscode}\resethooks
\begin{center}
\begin{mathpar}
\inferrule[newRef-\ensuremath{\textsc{fi}}]
{ \ensuremath{\Sigma\mathrel{=}(\lcurr,\mu_\textsc{fi},\mathbin{...})} \and
  \ensuremath{\lcurr\;\flows\;\Varid{l}} \ \\
  \ensuremath{\mu_\textsc{fi}'\mathrel{=}\mu_\textsc{fi}\;[\mskip1.5mu \Varid{a}\;\mapsto\;Lb^{\Red{\textsf{{\tiny TCB}}}}\;\Varid{l}\;\Varid{t}\mskip1.5mu]} \and
  \ensuremath{\Sigma'\mathrel{=}(\lcurr,\mu_\textsc{fi}',\mathbin{...})}
}
{ \ensuremath{\conf{\Sigma}{\textbf{\Blue{E}}\;[\mskip1.5mu \textbf{newRef}_{\textsc{fi}}\;\Varid{l}\;\Varid{t}\mskip1.5mu]}\lto\conf{\Sigma'}{\textbf{\Blue{E}}\;[\mskip1.5mu \mathbf{return}\;(Ref^{\Red{\textsf{{\tiny TCB}}}}_{\textsc{fi}}\;\Varid{l}\;\Varid{a})\mskip1.5mu]}} }
{ \ensuremath{\fresh{\Varid{a}}} }

\and

\inferrule[readRef-\ensuremath{\textsc{fi}}]
{ \ensuremath{\Sigma\mathrel{=}(\lcurr,\mu_\textsc{fi},\mathbin{...})}
}
{ \ensuremath{\conf{\Sigma}{\textbf{\Blue{E}}\;[\mskip1.5mu \textbf{readRef}_{\textsc{fi}}\;(Ref^{\Red{\textsf{{\tiny TCB}}}}_{\textsc{fi}}\;\Varid{l}\;\Varid{a})\mskip1.5mu]}\lto\conf{\Sigma}{\textbf{\Blue{E}}\;[\mskip1.5mu \textbf{unlabel}\;\mu_\textsc{fi}\;(\Varid{a})\mskip1.5mu]}} }

\and
\inferrule[writeRef-\ensuremath{\textsc{fi}}]
{ \ensuremath{\Sigma\mathrel{=}(\lcurr,\mu_\textsc{fi},\mathbin{...})} \and
  \ensuremath{\lcurr\;\flows\;\Varid{l}} \\
  \ensuremath{\mu_\textsc{fi}'\mathrel{=}\mu_\textsc{fi}\;[\mskip1.5mu \Varid{a}\;\mapsto\;Lb^{\Red{\textsf{{\tiny TCB}}}}\;\Varid{l}\;\Varid{t}\mskip1.5mu]} \and
  \ensuremath{\Sigma'\mathrel{=}(\lcurr,\mu_\textsc{fi}',\mathbin{...})}
}
{ \ensuremath{\conf{\Sigma}{\textbf{\Blue{E}}\;[\mskip1.5mu \textbf{writeRef}_{\textsc{fi}}\;(Ref^{\Red{\textsf{{\tiny TCB}}}}_{\textsc{fi}}\;\Varid{l}\;\Varid{a})\;\Varid{t}\mskip1.5mu]}\lto\conf{\Sigma'}{\textbf{\Blue{E}}\;[\mskip1.5mu \mathbf{return}\;()\mskip1.5mu]}} }

\and

\inferrule[labelOf-\ensuremath{\textsc{fi}}]
{ }
{ \ensuremath{\Red{E}\;[\mskip1.5mu \textbf{labelOf}_{\textsc{fi}}\;(Ref^{\Red{\textsf{{\tiny TCB}}}}_{\textsc{fi}}\;\Varid{l}\;\Varid{a})\mskip1.5mu])\lto\Red{E}\;[\mskip1.5mu \Varid{l}\mskip1.5mu]} }

\and

\inferrule[copyRef]
{ \ensuremath{\Sigma\mathrel{=}(\lcurr,\mu_\textsc{fi},\mathbin{...})} \and \ensuremath{\Varid{l}_{1}\;\flows\;\Varid{l}_{2}}
  \and \ensuremath{\lcurr\;\flows\;\Varid{l}_{2}} \\
  \and
  \ensuremath{Lb^{\Red{\textsf{{\tiny TCB}}}}\;\Varid{l}_{1}\;\Varid{v}_{1}\mathrel{=}\mu_\textsc{fi}\;(\Varid{a}_{1})} \and
  \ensuremath{\mu_\textsc{fi}'\mathrel{=}\mu_\textsc{fi}\;[\mskip1.5mu \Varid{a}_{2}\;\mapsto\;Lb^{\Red{\textsf{{\tiny TCB}}}}\;\Varid{l}_{2}\;\Varid{v}_{1}\mskip1.5mu]}
  \and \ensuremath{\Sigma'\mathrel{=}(\lcurr,\mu_\textsc{fi}',\mathbin{...})}
}
{ \ensuremath{\conf{\Sigma}{\textbf{\Blue{E}}\;[\mskip1.5mu \textbf{copyRef}\;(Ref^{\Red{\textsf{{\tiny TCB}}}}_{\textsc{fi}}\;\Varid{l}_{1}\;\Varid{a}_{1})\;(Ref^{\Red{\textsf{{\tiny TCB}}}}_{\textsc{fi}}\;\Varid{l}_{2}\;\Varid{a}_{2})\mskip1.5mu]}\lto\conf{\Sigma'}{\textbf{\Blue{E}}\;[\mskip1.5mu \mathbf{return}\;()\mskip1.5mu]}} }

\end{mathpar}
\caption{Extending \lio{} with references.
\label{fig:sos:refs}}
\end{center}
\end{figure}
%
The syntactic extensions to our calculus are shown in Figure~\ref{fig:sos:refs}.
We use meta-variable \ensuremath{\Varid{s}} to distinguish flow-insensitive (\textrm{FI})
and flow-sensitive (\textrm{FS}) productions---the latter are described
in Section \ref{sec:flow-sensitive}.
%
%
We also extend configurations to contain a reference (memory)
store \ensuremath{\mu_\textsc{fi}}: \ensuremath{\Sigma\mathrel{=}(\lcurr,\mu_\textsc{fi},\mathbin{...})};
\ensuremath{\mu_\textsc{fi}} maps memory addresses---spanned over by metavariable \ensuremath{\Varid{a}}---to \ensuremath{\Conid{Labeled}} values.
%
%

%
When creating a flow-insensitive reference, \ensuremath{\textbf{newRef}_{\textsc{fi}}\;\Varid{l}\;\Varid{t}} creates a labeled
value that guards \ensuremath{\Varid{t}} with label \ensuremath{\Varid{l}} (\ensuremath{Lb^{\Red{\textsf{{\tiny TCB}}}}\;\Varid{l}\;\Varid{t}}) and stores it in the
memory store at a fresh address \ensuremath{\Varid{a}} (\ensuremath{\mu_\textsc{fi}\;[\mskip1.5mu \Varid{a}\;\mapsto\;Lb^{\Red{\textsf{{\tiny TCB}}}}\;\Varid{l}\;\Varid{t}\mskip1.5mu]}).
Subsequently, the function returns a value of the form \ensuremath{Ref^{\Red{\textsf{{\tiny TCB}}}}_{\textsc{fi}}\;\Varid{l}\;\Varid{a}} which simply encapsulates the reference label and address where the
term is stored.
We remark that since any references created within a \ensuremath{\textbf{toLabeled}}
block may outlive the \ensuremath{\textbf{toLabeled}} block computation,
the merge function used in rule~\ruleref{toLabeled} must also account
for this, i.e.,
\ensuremath{(\lcurr,\mu_\textsc{fi},\mathbin{...})\;\!\ltimes\!\;(\lcurr',\mu_\textsc{fi}',\mathbin{...})\mathrel{=}(\lcurr,\mu_\textsc{fi}',\mathbin{...})}.
%

Rule \ruleref{readRef-\ensuremath{\textsc{fi}}} gives the semantics for reading a
labeled reference; reading the term stored at address \ensuremath{\Varid{a}} simply
amounts to unlabeling the value \ensuremath{\mu\;(\Varid{a})} stored at the underlying
address (\ensuremath{\textbf{unlabel}\;\mu_\textsc{fi}\;(\Varid{a})}).

Terminal \ensuremath{\textbf{writeRef}_{\textsc{fi}}} is used to update the memory store with a new
term.
%
Note that \ensuremath{\textbf{writeRef}_{\textsc{fi}}} \emph{leaves the label of the reference
intact}, i.e., the label of a flow-insensitive reference is never
changed, but, as rule~\ruleref{writeRef-fi} shows in turn, requires
the current label to be below the reference label when performing the
write (\ensuremath{\lcurr\;\flows\;\Varid{l}}).

Terminal \ensuremath{\textbf{labelOf}_{\textsc{fi}}} has the benefit of allowing code to always inspect the
label of a reference.

Terminal \ensuremath{\textbf{copyRef}} is to copy the contents of one reference to
another, without inspecting the contents of either reference.
As given by rule~\ruleref{copyRef}, the function copies the contents
of a labeled reference into another one, as long as the
source-reference label (\ensuremath{\Varid{l}_{1}}) flows to the target-reference label
(\ensuremath{\Varid{l}_{2}}) and the usual condition for writing to an entity with label
\ensuremath{\Varid{l}_{2}} also holds (\ensuremath{\lcurr\;\flows\;\Varid{l}_{2}}). Since the computation does not
read the source reference, the current label remains unchanged.  We
remarks that while \ensuremath{\textbf{copyRef}} can be encoded using \ensuremath{\textbf{toLabeled}}, we
introduce \ensuremath{\textbf{copyRef}} explicitly since the use of \ensuremath{\textbf{toLabeled}} is
prohibited in concurrent settings and our results rely on such a
feature in both contexts (see Section~\ref{sec:soundness}),

\section{Flow-sensitivity extensions}
\label{sec:flow-sensitive}

The flow-insensitive references described in the previous section
are inflexible.
Consider, for example, an application that uses a reference as a log.
Since the log may contain sensitive information, it is important that the
reference be labeled.
Equally important is to be able to read the log at any point in the program to,
for instance, save it to a file.
Although labeling the reference with the top element in the security
lattice ($\top$) would always allow writes to the log, and \ensuremath{\textbf{toLabeled}}
can be used to read the log and then write it to a file, this is
unsatisfactory: it assumes the existence of a top element, which in
some practical IFC systems, including HiStar~\cite{zeldovich:histar}
and Hails~\cite{giffin:hails}, does not exist.
Moreover, it almost always over-approximates the sensitivity of the
log.
Hence, for example, a computation that never reads sensitive data, yet
wishes to read the log content as to send error message to a user over
the network (e.g., as done in a web application) cannot do so---LIO
prevents the computation from reading the log, which results in the
computation getting tainted by $\top$, and subsequently writing to the
network.\footnote{
  Here, as in most IFC systems, we assume the network is public.
}
It is clear that even for such a simple use case, having references
with labels that vary according to the sensitivity of what is stored
in the reference is useful.

However, naively implementing flow-sensitive references can effectively
introduce label changes as a covert channel.
%
Suppose that we allow for the label of a reference to be raised to the
current label at the time of the \ensuremath{\textbf{writeRef}_{{}}}.
So, for example, if the label of our log reference is \ensuremath{\textbf{L}} and the
computation has read sensitive data (such that the current label is
\ensuremath{\textbf{H}}), subsequently writing to the log will raise the label of the
reference to \ensuremath{\textbf{H}}.
Unfortunately, while this may appear safe, as previously shown
in~\cite{Austin:Flanagan:PLAS09,Russo:2010, Austin:Flanagan:PLAS10},
the approach is unsound.

The code fragment in Figure~\ref{fig:fs-attack} defines a function,
\ensuremath{\Varid{leakRef}}, that can be used to leak the contents of a reference by
leveraging the newly introduced covert channel: the label of references.
(In this and future examples we use function \ensuremath{\textbf{when}} to denote an \ensuremath{\mathbf{if}}
statement without the \ensuremath{\mathbf{else}} branch and \ensuremath{(\mathbin{\$})} as lightweight notation
for function application, i.e., \ensuremath{\Varid{f}\mathbin{\$}\Varid{x}} is the same as \ensuremath{\Varid{f}\;(\Varid{x})}.)
\begin{figure}[t]
\vspace{-0.5em}
\small
\begin{hscode}\SaveRestoreHook
\column{B}{@{}>{\hspre}l<{\hspost}@{}}%
\column{3}{@{}>{\hspre}l<{\hspost}@{}}%
\column{9}{@{}>{\hspre}l<{\hspost}@{}}%
\column{24}{@{}>{\hspre}l<{\hspost}@{}}%
\column{E}{@{}>{\hspre}l<{\hspost}@{}}%
\>[B]{}\Varid{leakRef}\mathbin{::}Ref_{\textsf{{\tiny FS}}}\;{}\;\Conid{Bool}\to \Conid{LIO}\;\Conid{Bool}{}\<[E]%
\\
\>[B]{}\Varid{leakRef}\;\Varid{href}\mathrel{=}\mathbf{do}{}\<[E]%
\\
\>[B]{}\hsindent{3}{}\<[3]%
\>[3]{}\Varid{tmp}{}\<[9]%
\>[9]{}\leftarrow \textbf{newRef}_{{}}\;\textbf{L}\;(){}\<[E]%
\\
\>[B]{}\hsindent{3}{}\<[3]%
\>[3]{}\textbf{toLabeled}\;\textbf{H}\mathbin{\$}\mathbf{do}\;{}\<[24]%
\>[24]{}\Varid{h}\leftarrow \textbf{readRef}_{{}}\;\Varid{href}{}\<[E]%
\\
\>[24]{}\textbf{when}\;\Varid{h}\mathbin{\$}\textbf{writeRef}_{{}}\;\Varid{tmp}\;(){}\<[E]%
\\
\>[B]{}\hsindent{3}{}\<[3]%
\>[3]{}\mathbf{return}\mathbin{\$}\textbf{labelOf}_{{}}\;\Varid{tmp}\equiv \textbf{H}{}\<[E]%
\ColumnHook
\end{hscode}\resethooks
\cut{$}
\caption{Attack on LIO with naive treatment of flow-sensitive references. We omit
subscripts for clarity.  \label{fig:fs-attack}}
\end{figure}
To illustrate an attack, suppose that the current label is public
(\ensuremath{\textbf{L}}) and \ensuremath{\Varid{leakRef}} is called with a secret (\ensuremath{\textbf{H}}) reference
(\ensuremath{\Varid{href}}).
\ensuremath{\Varid{leakRef}} first creates a public reference \ensuremath{\Varid{tmp}} and, then, within the
\ensuremath{\textbf{toLabeled}} block---which is used to ensure that the current label remains
\ensuremath{\textbf{L}}---the label of this reference is changed to \ensuremath{\textbf{H}} if the secret stored in
\ensuremath{\Varid{href}} is \ensuremath{\Conid{True}}, and left intact (\ensuremath{\textbf{L}}) if the secret is \ensuremath{\Conid{False}}.
The value stored in \ensuremath{\Varid{href}} is revealed by simply inspecting the label
of the \ensuremath{\Varid{tmp}} reference.\footnote{%
The use of \ensuremath{\textbf{labelOf}_{{}}} is not fundamental to this attack and in
Appendix~\ref{sec:app:fs-attack2} we show an alternative attack that
does not rely on such label inspection.
}

Fundamentally, the label protecting the \emph{label} of an object,
such as a reference or labeled value, is the current label \ensuremath{\lcurr} at
the time of creation.
Hence, to modify the label of the object within some context (e.g., \ensuremath{\textbf{toLabeled}}
block) wherein the current label is \ensuremath{\lcurr'}, \emph{it must be the case that} \ensuremath{\lcurr'\;\flows\;\lcurr}, i.e., we must be able to write data at sensitivity level
\ensuremath{\lcurr'} into an entity---the label of the object---labeled \ensuremath{\lcurr}.
This restriction is especially important if \ensuremath{\lcurr\;\sqsubset\;\lcurr'} and we can restore the current label from \ensuremath{\lcurr'} to \ensuremath{\lcurr},
since a leak would then be observable within the program itself.
In the case where the label of the object is immutable, as is the case for
flow-insensitive references (and labeled values), this is not a concern: even if
the current label is raised to \ensuremath{\lcurr'} and then restored to \ensuremath{\lcurr}, we do not
learn any information more sensitive than \ensuremath{\lcurr}---the label of the label at the
time of creation---by inspecting the label of the reference (or
value): the label has not changed!

\begin{figure} 
\small
\begin{hscode}\SaveRestoreHook
\column{B}{@{}>{\hspre}l<{\hspost}@{}}%
\column{6}{@{}>{\hspre}l<{\hspost}@{}}%
\column{17}{@{}>{\hspre}l<{\hspost}@{}}%
\column{E}{@{}>{\hspre}l<{\hspost}@{}}%
\>[B]{}\Varid{v}{}\<[6]%
\>[6]{}\Coloneqq\cdots{}\<[17]%
\>[17]{}\mid Ref^{\Red{\textsf{{\tiny TCB}}}}_{\textsc{fs}}\;\Varid{t}{}\<[E]%
\\
\>[B]{}\Varid{t}{}\<[6]%
\>[6]{}\Coloneqq\cdots{}\<[17]%
\>[17]{}\mid \textbf{upgrade}_\textsc{fs}\;\Varid{t}\;\Varid{t}\mid {\Uparrow}\mid \textbf{downgrade}_\textsc{fs}\;\Varid{t}\;\Varid{t}{}\<[E]%
\\
\>[B]{}\Red{E}{}\<[6]%
\>[6]{}\Coloneqq\cdots{}\<[17]%
\>[17]{}\mid \textbf{upgrade}_\textsc{fs}\;\Red{E}\;\Varid{t}\mid \textbf{upgrade}_\textsc{fs}\;\Varid{v}\;\Red{E}{}\<[E]%
\\
\>[17]{}\mid \textbf{downgrade}_\textsc{fs}\;\Red{E}\;\Varid{t}\mid \textbf{downgrade}_\textsc{fs}\;\Varid{v}\;\Red{E}{}\<[E]%
\ColumnHook
\end{hscode}\resethooks

\begin{mathpar}
\inferrule[newRef-\ensuremath{\textsc{fs}}]
{ \ensuremath{\Sigma\mathrel{=}(\lcurr,\mu_\textsc{fi},\mu_\textsc{fs})}\\
  \ensuremath{\lcurr\;\flows\;\Varid{l}}\\
  \ensuremath{\fresh{\Varid{a}}}\\
  \ensuremath{\mu_\textsc{fs}'\mathrel{=}\mu_\textsc{fs}\;[\mskip1.5mu \Varid{a}\;\mapsto\;Lb^{\Red{\textsf{{\tiny TCB}}}}\;\lcurr\;(Lb^{\Red{\textsf{{\tiny TCB}}}}\;\Varid{l}\;\Varid{t})\mskip1.5mu]}\\
  \ensuremath{\Sigma'\mathrel{=}(\lcurr,\mu_\textsc{fi},\mu_\textsc{fs}')}
}
{ \ensuremath{\conf{\Sigma}{\textbf{\Blue{E}}\;[\mskip1.5mu \textbf{newRef}_{\textsc{fs}}\;\Varid{l}\;\Varid{t}\mskip1.5mu]}\lto\conf{\Sigma'}{\textbf{\Blue{E}}\;[\mskip1.5mu \mathbf{return}\;(Ref^{\Red{\textsf{{\tiny TCB}}}}_{\textsc{fs}}\;\Varid{a})\mskip1.5mu]}} }

\and

\inferrule[readRef-\ensuremath{\textsc{fs}}]
{ \ensuremath{\Sigma\mathrel{=}(\lcurr,\mu_\textsc{fi},\mu_\textsc{fs})} \\
  \ensuremath{\mu_\textsc{fs}\;(\Varid{a})\mathrel{=}Lb^{\Red{\textsf{{\tiny TCB}}}}\;\Varid{l}\;(Lb^{\Red{\textsf{{\tiny TCB}}}}\;\Varid{l'}\;\Varid{t})}\\
  \ensuremath{\Varid{l''}\mathrel{=}\Varid{l}\;\lub\;\Varid{l'}}
}
{ \ensuremath{\conf{\Sigma}{\textbf{\Blue{E}}\;[\mskip1.5mu \textbf{readRef}_{\textsc{fs}}\;(Ref^{\Red{\textsf{{\tiny TCB}}}}_{\textsc{fs}}\;\Varid{a})\mskip1.5mu]}\lto\conf{\Sigma}{\textbf{\Blue{E}}\;[\mskip1.5mu \textbf{unlabel}\;(Lb^{\Red{\textsf{{\tiny TCB}}}}\;\Varid{l''}\;\Varid{t})\mskip1.5mu]}} }

\and

\inferrule[writeRef-\ensuremath{\textsc{fs}}]
{ \ensuremath{\Sigma\mathrel{=}(\lcurr,\mu_\textsc{fi}\;\mu_\textsc{fs})}\\
  \ensuremath{\mu_\textsc{fs}\;(\Varid{a})\mathrel{=}Lb^{\Red{\textsf{{\tiny TCB}}}}\;\Varid{l}\;(Lb^{\Red{\textsf{{\tiny TCB}}}}\;\Varid{l'}\;\Varid{t}')} \\
  \ensuremath{\lcurr\;\flows\;(\Varid{l}\;\lub\;\Varid{l'})}\\
  \ensuremath{\mu_\textsc{fs}'\mathrel{=}\mu_\textsc{fs}\;[\mskip1.5mu \Varid{a}\;\mapsto\;Lb^{\Red{\textsf{{\tiny TCB}}}}\;\Varid{l}\;(Lb^{\Red{\textsf{{\tiny TCB}}}}\;\Varid{l'}\;\Varid{t})\mskip1.5mu]}\\
  \ensuremath{\Sigma'\mathrel{=}(\lcurr,\mu_\textsc{fi},\mu_\textsc{fs}')}
}
{ \ensuremath{\conf{\Sigma}{\textbf{\Blue{E}}\;[\mskip1.5mu \textbf{writeRef}_{\textsc{fs}}\;(Ref^{\Red{\textsf{{\tiny TCB}}}}_{\textsc{fs}}\;\Varid{a})\;\Varid{t}\mskip1.5mu]}\lto\conf{\Sigma'}{\textbf{\Blue{E}}\;[\mskip1.5mu \mathbf{return}\;()\mskip1.5mu]}} }

\and

\inferrule[writeRef-\ensuremath{\textsc{fs}}-fail]
{ \ensuremath{\Sigma\mathrel{=}(\lcurr,\mu_\textsc{fi}\;\mu_\textsc{fs})}\\
  \ensuremath{\mu_\textsc{fs}\;(\Varid{a})\mathrel{=}Lb^{\Red{\textsf{{\tiny TCB}}}}\;\Varid{l}\;(Lb^{\Red{\textsf{{\tiny TCB}}}}\;\Varid{l'}\;\Varid{t}')} \\
  \ensuremath{\lcurr\;\not\flows\;(\Varid{l}\;\lub\;\Varid{l'})}
}
{ \ensuremath{\conf{\Sigma}{\textbf{\Blue{E}}\;[\mskip1.5mu \textbf{writeRef}_{\textsc{fs}}\;(Ref^{\Red{\textsf{{\tiny TCB}}}}_{\textsc{fs}}\;\Varid{a})\;\Varid{t}\mskip1.5mu]}\lto\conf{\Sigma'}{\textbf{\Blue{E}}\;[\mskip1.5mu \textbf{unlabel}\;(Lb^{\Red{\textsf{{\tiny TCB}}}}\;\Varid{l}\;{\Uparrow})\mskip1.5mu]}} }

\and

\inferrule[labelOf-\ensuremath{\textsc{fs}}]
{ \ensuremath{\Sigma\mathrel{=}(\lcurr,\mu_\textsc{fi},\mu_\textsc{fs})}\\
  \ensuremath{\mu_\textsc{fs}\;(\Varid{a})\mathrel{=}Lb^{\Red{\textsf{{\tiny TCB}}}}\;\Varid{l}\;(Lb^{\Red{\textsf{{\tiny TCB}}}}\;\Varid{l'}\;\Varid{t})}
}
{ \ensuremath{\conf{\Sigma}{\textbf{\Blue{E}}\;[\mskip1.5mu \textbf{labelOf}_{\textsc{fs}}\;(Ref^{\Red{\textsf{{\tiny TCB}}}}_{\textsc{fs}}\;\Varid{a})\mskip1.5mu]}\lto\conf{\Sigma}{\textbf{\Blue{E}}\;[\mskip1.5mu \textbf{unlabel}\;(Lb^{\Red{\textsf{{\tiny TCB}}}}\;\Varid{l}\;\Varid{l'})\mskip1.5mu]}}}

\and

\inferrule[upgradeRef]
{ \ensuremath{\Sigma\mathrel{=}(\lcurr,\mu_\textsc{fi},\mu_\textsc{fs})}\\
  \ensuremath{\mu_\textsc{fs}\;(\Varid{a})\mathrel{=}Lb^{\Red{\textsf{{\tiny TCB}}}}\;\Varid{l}\;(Lb^{\Red{\textsf{{\tiny TCB}}}}\;\Varid{l''}\;\Varid{v})}\\
  \ensuremath{\lcurr\;\flows\;\Varid{l}} \\
  \ensuremath{\mu_\textsc{fs}'\mathrel{=}\mu_\textsc{fs}\;[\mskip1.5mu \Varid{a}\;\mapsto\;Lb^{\Red{\textsf{{\tiny TCB}}}}\;\Varid{l}\;(Lb^{\Red{\textsf{{\tiny TCB}}}}\;(\Varid{l''}\;\lub\;\Varid{l'})\;\Varid{v})\mskip1.5mu]}\\
  \ensuremath{\Sigma'\mathrel{=}(\lcurr,\mu_\textsc{fi},\mu_\textsc{fs}')}
}
{ \ensuremath{\conf{\Sigma}{\textbf{\Blue{E}}\;[\mskip1.5mu \textbf{upgrade}_\textsc{fs}\;(Ref^{\Red{\textsf{{\tiny TCB}}}}_{\textsc{fs}}\;\Varid{a})\;\Varid{l'}\mskip1.5mu]}\lto\conf{\Sigma'}{\textbf{\Blue{E}}\;[\mskip1.5mu \mathbf{return}\;()\mskip1.5mu]}} }

\and

\inferrule[downgradeRef]
{ \ensuremath{\Sigma\mathrel{=}(\lcurr,\mu_\textsc{fi},\mu_\textsc{fs})}\\
  \ensuremath{\mu_\textsc{fs}\;(\Varid{a})\mathrel{=}Lb^{\Red{\textsf{{\tiny TCB}}}}\;\Varid{l}\;(Lb^{\Red{\textsf{{\tiny TCB}}}}\;\Varid{l''}\;\Varid{v})}\\
  \ensuremath{\lcurr\;\flows\;\Varid{l}} \\
  \ensuremath{\mu_\textsc{fs}'\mathrel{=}\mu_\textsc{fs}\;[\mskip1.5mu \Varid{a}\;\mapsto\;Lb^{\Red{\textsf{{\tiny TCB}}}}\;\Varid{l}\;(Lb^{\Red{\textsf{{\tiny TCB}}}}\;(\Varid{l}\;\lub\;(\Varid{l''}\;\glb\;\Varid{l'}))\;{\Uparrow})\mskip1.5mu]}\\
  \ensuremath{\Sigma'\mathrel{=}(\lcurr,\mu_\textsc{fi},\mu_\textsc{fs}')}
}
{ \ensuremath{\conf{\Sigma}{\textbf{\Blue{E}}\;[\mskip1.5mu \textbf{downgrade}_\textsc{fs}\;(Ref^{\Red{\textsf{{\tiny TCB}}}}_{\textsc{fs}}\;\Varid{a})\;\Varid{l'}\mskip1.5mu]}\lto\conf{\Sigma'}{\textbf{\Blue{E}}\;[\mskip1.5mu \mathbf{return}\;()\mskip1.5mu]}} }
\end{mathpar}
\caption{\liofs{}: extension of \lio{} with flow-sensitive references.\label{fig:sos:fs}}
\end{figure}

Thus, to extend LIO with flow-sensitive references, we must account for the label
on the label of the reference at the time of creation, \ensuremath{\lcurr}.
(This label is, however, immutable.)
In turn, when changing the label of the reference, we must ensure that no data
from the context at the time of the change, whose label is \ensuremath{\lcurr'}, is leaked
into the label of the reference by ensuring that \ensuremath{\lcurr'\;\flows\;\lcurr},
i.e., we can write data labeled \ensuremath{\lcurr'} into the label that is labeled \ensuremath{\lcurr}.

Formally, we extend the \lio{} syntax and reduction rules as shown in
Figure~\ref{fig:sos:fs}; we call this calculus \liofs{}.
To create a flow-sensitive reference \ensuremath{\textbf{newRef}_{\textsc{fs}}\;\Varid{l}\;\Varid{t}} creates a labeled
value that guards \ensuremath{\Varid{t}} with label \ensuremath{\Varid{l}} (\ensuremath{Lb^{\Red{\textsf{{\tiny TCB}}}}\;\Varid{l}\;\Varid{t}}).
Since we wish to allow programmers to modify the label \ensuremath{\Varid{l}} of the
reference, we additionally store the label on \ensuremath{\Varid{l}}, i.e., the current
label \ensuremath{\lcurr}, by simply labeling the already-guarded term (\ensuremath{\mu_\textsc{fs}'\mathrel{=}\mu_\textsc{fs}\;[\mskip1.5mu \Varid{a}\;\mapsto\;Lb^{\Red{\textsf{{\tiny TCB}}}}\;\lcurr\;(Lb^{\Red{\textsf{{\tiny TCB}}}}\;\Varid{l}\;\Varid{t})\mskip1.5mu]}), as shown in
rule~\ruleref{newRef-\ensuremath{\textsc{fs}}}.
Primitive \ensuremath{\textbf{newRef}_{\textsc{fs}}} returns a \ensuremath{Ref^{\Red{\textsf{{\tiny TCB}}}}_{\textsc{fs}}\;\Varid{a}} which simply encapsulates the fresh
reference address where the doubly-labeled term is stored.
Different from the constructor \ensuremath{Ref^{\Red{\textsf{{\tiny TCB}}}}_{\textsc{fi}}},  the constructor \ensuremath{Ref^{\Red{\textsf{{\tiny TCB}}}}_{\textsc{fs}}}
does not encapsulate the label of the reference.
This is precisely because the label of a flow-sensitive reference is
mutable and must be looked up in the store.
As given by rule~\ruleref{labelOf-\ensuremath{\textsc{fs}}}, \ensuremath{\textbf{labelOf}_{\textsc{fs}}} returns the label of the
reference after raising the current label (with \ensuremath{\textbf{unlabel}}) to account for the
fact that the label of the reference \ensuremath{\Varid{l'}} is a value at sensitivity level \ensuremath{\Varid{l}},
i.e., we raise the current label to the join of the current label and the label
on the label.

The rule for reading flow-sensitive references is standard.
As given by rule \ruleref{readREf-fs}, \ensuremath{\textbf{readRef}_{\textsc{fs}}} simply raises the
current label to the join of the reference label and label on the
reference label (\ensuremath{\Varid{l}\;\lub\;\Varid{l'}}) and returns the protected value.
This reflects the fact that the computation is observing both data at
level \ensuremath{\Varid{l}} (the label on the reference) and \ensuremath{\Varid{l'}} (the actual term).

The rule for writing flow-sensitive references deserves more
attention.
\begin{wrapfigure}{r}{0.4\columnwidth}
\begin{hscode}\SaveRestoreHook
\column{B}{@{}>{\hspre}l<{\hspost}@{}}%
\column{5}{@{}>{\hspre}l<{\hspost}@{}}%
\column{E}{@{}>{\hspre}l<{\hspost}@{}}%
\>[B]{}\mathbf{do}\;{}\<[5]%
\>[5]{}\Varid{r}\leftarrow \textbf{newRef}_{\textsc{fs}}\;\textbf{H}\;(){}\<[E]%
\\
\>[5]{}\textbf{readRef}_{\textsc{fs}}\;\Varid{r}{}\<[E]%
\\
\>[5]{}\textbf{writeRef}_{\textsc{fs}}\;\Varid{r}\;(){}\<[E]%
\ColumnHook
\end{hscode}\resethooks
\caption{\small \label{fig:permissive} Permissiveness test.}
\end{wrapfigure}
First, \ensuremath{\textbf{writeRef}_{\textsc{fs}}} as given by rule~\ruleref{writeRef-fs}, ensures
that the current computation can write to the reference by checking
that \ensuremath{\lcurr\;\flows\;(\Varid{l}\;\lub\;\Varid{l'})}.
We impose this condition instead of the two conditions \ensuremath{\lcurr\;\flows\;\Varid{l}} and \ensuremath{\lcurr\;\flows\;\Varid{l'}}---which respectively check that
the current computation can modify both, the label of the reference,
and the reference itself---since it is more permissive, yet still
safe.
When imposing the two conditions independently, certain programs, such
as the one given in Figure~\ref{fig:permissive}, would fail.
In this program, we first create a flow-sensitive reference labeled
\ensuremath{\textbf{H}} when the current label is \ensuremath{\textbf{L}} (and thus the label on \ensuremath{\textbf{H}}
is \ensuremath{\textbf{L}}).
Then, we raise the label by reading from the reference.
Finally, we attempt to write to the reference.
Under our semantics, this program behaves as expected; however, when
imposing the two conditions independently, the write fails---the
current label does not flow to the label on the label of the
reference.

Another case for \ensuremath{\textbf{writeRef}_{\textsc{fs}}} which we must handle is when current
label does not flow to the join of the reference label, i.e.,
\ensuremath{\lcurr\;\not\flows\;\Varid{l}\;\lub\;\Varid{l'}}, and the write is disallowed.
If the semantics simply got stuck, the current label (at the point of
the stuck term) would
not reflect the fact that the success of applying such rule depends on the
label \ensuremath{\Varid{l'}}, which is itself protected by \ensuremath{\Varid{l}}.
Indeed, this might lead to information leaks and we thus provide an
explicit rule, \ruleref{writeref-fs-fail}, for this failure case that
first raises the current label (via \ensuremath{\textbf{unlabel}}) to \ensuremath{\Varid{l}} and then
diverges; in the rule, \ensuremath{{\Uparrow}} represents a divergent term for which
we do not provide a reduction rule.

%
Note that \ensuremath{\textbf{writeRef}_{\textsc{fs}}} does not modify the label of the reference.
This is, in part, because we wish to keep the difference between
flow-insensitive and flow-sensitive references as small as possible.
Instead, we provide \ensuremath{\textbf{upgrade}_\textsc{fs}} precisely for this purpose; this primitive is
used to raise the label of a reference.
Rule \ruleref{upgradeRef} is straight forward---it simply ensures that the
current computation can modify the label of the reference by checking that the
current label flows to the label on the label (\ensuremath{\lcurr\;\flows\;\Varid{l}}).
Similarly, \ensuremath{\textbf{downgrade}_\textsc{fs}} is used to lower the label of the reference,
destroying its contents, i.e., replacing its value with \ensuremath{{\Uparrow}}.
%
Rules \ruleref{upgradeRef} and \ruleref{downgradeRef} are analogous;
the main difference is that the former uses the join operation to
combine the old and new labels (\ensuremath{\Varid{l''}\;\lub\;\Varid{l'}}), whereas the latter uses
the meet operation (\ensuremath{\Varid{l''}\;\glb\;\Varid{l'}}).
The \ensuremath{\textbf{downgrade}_\textsc{fs}} primitive is useful when one
wishes to store information that is less sensitive into a reference.
%
Both \ensuremath{\textbf{upgrade}_\textsc{fs}} and \ensuremath{\textbf{downgrade}_\textsc{fs}} highlight that it
is safe to raise or lower the label of a flow-sensitive reference,
if that the label on the label still flows to the final label in the
nested \ensuremath{Lb^{\Red{\textsf{{\tiny TCB}}}}} structure.


\subsection{Automatic upgrades}
\label{sec:flow-sensitive:auto}

We can use \liofs{} to implement various applications that rely on
flow-sensitive references, even those that rely on policies such as the
popular no-sensitive upgrades~\cite{Austin:Flanagan:PLAS09}.
%
%
Using \liofs{}, we can also safely implement our logging application using a
flow-sensitive reference.
Unfortunately, our system (and others like it) requires that we insert
\ensuremath{\textbf{upgrade}}s before we raise the current label so that it is possible to
write references in a more-sensitive context, e.g., to modify a public reference
after reading a secret.
In the case of the logging example, we would need to upgrade the label
before reading any sensitive data, if we later wish to write to
the log.

\begin{figure}[t] 
\small
\begin{mathpar}
\inferrule[upgradeStore]
{\ensuremath{\Sigma\mathrel{=}(\lcurr,\mu_\textsc{fi},\mu_\textsc{fs})}\\
 \ensuremath{\mu_\textsc{fs}\mathrel{=}\{\mskip1.5mu \Varid{a}_{1}\;\mapsto\;\Varid{v}_{1},\mathbin{...},\Varid{a}_n\;\mapsto\;\Varid{v}_n\mskip1.5mu\}}\\
 \ensuremath{\Varid{t}_\Varid{i}\mathrel{=}\textbf{upgrade}_\textsc{fs}\;(Ref^{\Red{\textsf{{\tiny TCB}}}}_{\textsc{fs}}\;\Varid{a}_\Varid{i})\;\Varid{l},\Varid{i}\mathrel{=}\mathrm{1},\mathbin{...},\Varid{n}}
}
{\ensuremath{\conf{\Sigma}{\textbf{\Blue{E}}\;[\mskip1.5mu \textbf{upgradeStore}_\textsc{fs}\;\Varid{l}\mskip1.5mu]}\lto\conf{\Sigma}{\textbf{\Blue{E}}\;[\mskip1.5mu \Varid{t}_{1}\sequ \mathbin{...}\sequ \Varid{t}_\Varid{n}\mskip1.5mu]}}}
\and
\inferrule[unlabel-au]
{ \ensuremath{\Sigma\mathrel{=}(\lcurr,\mu_\textsc{fi},\mu_\textsc{fs})}\\
  \ensuremath{\lcurr'\mathrel{=}\lcurr\;\lub\;\Varid{l}}\\
  \ensuremath{\conf{\Sigma}{\textbf{upgradeStore}_\textsc{fs}\;\lcurr'}\lto^*\conf{\lcurr,\mu_\textsc{fi},\mu_\textsc{fs}'}{LIO^{\Red{\textsf{{\tiny TCB}}}}\;()}}\\
  \ensuremath{\Sigma'\mathrel{=}(\lcurr',\mu_\textsc{fi},\mu_\textsc{fs}')}
}
{
\ensuremath{\conf{\Sigma}{\textbf{\Blue{E}}\;[\mskip1.5mu \textbf{unlabel}\;(Lb^{\Red{\textsf{{\tiny TCB}}}}\;\Varid{l}\;\Varid{t})\mskip1.5mu]}\lto\conf{\Sigma'}{\textbf{\Blue{E}}\;[\mskip1.5mu \mathbf{return}\;\Varid{t}\mskip1.5mu]}}
}
\end{mathpar}
\caption{\lioafs{}: Extending \liofs{} with auto-upgrades.\label{fig:sos:afs}}
\end{figure}

\begin{figure}[t] 
\small
\begin{hscode}\SaveRestoreHook
\column{B}{@{}>{\hspre}l<{\hspost}@{}}%
\column{6}{@{}>{\hspre}l<{\hspost}@{}}%
\column{17}{@{}>{\hspre}l<{\hspost}@{}}%
\column{27}{@{}>{\hspre}l<{\hspost}@{}}%
\column{30}{@{}>{\hspre}l<{\hspost}@{}}%
\column{51}{@{}>{\hspre}l<{\hspost}@{}}%
\column{E}{@{}>{\hspre}l<{\hspost}@{}}%
\>[B]{}\Varid{v}{}\<[6]%
\>[6]{}\Coloneqq\cdots{}\<[17]%
\>[17]{}\mid \overline{{\Varid{v},\mathbin{...}}}\mid \epsilon_{\,\overline{\tau, ...}}{}\<[E]%
\\
\>[B]{}\Varid{t}{}\<[6]%
\>[6]{}\Coloneqq\cdots{}\<[17]%
\>[17]{}\mid \overline{{\Varid{t},\mathbin{...}}}\mid \textbf{withRefs}_\textsc{fs}\;\Varid{t}\;\Varid{t}{}\<[E]%
\\
\>[B]{}\tau{}\<[6]%
\>[6]{}\Coloneqq\cdots{}\<[17]%
\>[17]{}\mid \overline{{\tau,\mathbin{...}}}{}\<[E]%
\\
\>[B]{}\Red{E}{}\<[6]%
\>[6]{}\Coloneqq\cdots{}\<[17]%
\>[17]{}\mid \overline{{\Red{E},\Varid{t},\mathbin{...}}}\mid \overline{{\Varid{v},\Red{E},\Varid{t},\mathbin{...}}}\mid \textbf{withRefs}_\textsc{fs}\;\Red{E}\;\Varid{t}{}\<[E]%
\\[\blanklineskip]%
\>[B]{}\Varid{addrs}(\epsilon_{\,\overline{\tau, ...}}){}\<[51]%
\>[51]{}\triangleq{}\emptyset{}\<[E]%
\\
\>[B]{}\Varid{addrs}(\overline{{Ref^{\Red{\textsf{{\tiny TCB}}}}_{\textsc{fs}}\;\Varid{a}_{1},Ref^{\Red{\textsf{{\tiny TCB}}}}_{\textsc{fs}}\;\Varid{a}_{2},\mathbin{...}}}){}\<[51]%
\>[51]{}\triangleq{}\{\mskip1.5mu \Varid{a}_{1},\Varid{a}_{2},\mathbin{...}\mskip1.5mu\}{}\<[E]%
\\[\blanklineskip]%
\>[B]{}\Varid{addrs}^+_{\mu}(\epsilon_{\,\overline{\tau, ...}}){}\<[30]%
\>[30]{}\triangleq{}\emptyset{}\<[E]%
\\
\>[B]{}\Varid{addrs}^+_{\mu}(\overline{{\Varid{v}_{1},\Varid{v}_{2},\mathbin{...}}}){}\<[30]%
\>[30]{}\triangleq{}\bigcup\;\{\mskip1.5mu \Varid{addrs}^+_{\mu}(\Varid{v}_{1}),\Varid{addrs}^+_{\mu}(\Varid{v}_{2}),\mathbin{...}\mskip1.5mu\}{}\<[E]%
\\[\blanklineskip]%
\>[B]{}\Varid{addrs}^+_{\mu}(Ref^{\Red{\textsf{{\tiny TCB}}}}_{\textsc{fs}}\;\Varid{a}){}\<[27]%
\>[27]{}\triangleq{}\{\mskip1.5mu \Varid{a}\mskip1.5mu\}\cup\Varid{addrs}^+_{\mu}(\mu\;(\Varid{a})){}\<[E]%
\\
\>[B]{}\Varid{addrs}^+_{\mu}(\Varid{v}){}\<[27]%
\>[27]{}\triangleq{}\emptyset{}\<[E]%
\ColumnHook
\end{hscode}\resethooks
\begin{mathpar}
\inferrule[withRefs-Ctx]
{\ensuremath{\Sigma\mathrel{=}(\lcurr,\mu_\textsc{fi},\mu_\textsc{fs})}\\
 \ensuremath{\mu_\textsc{fs}'\mathrel{=}\{\mskip1.5mu \Varid{a}\;\mapsto\;\mu_\textsc{fs}\;(\Varid{a})\;|\;\Varid{a}\;\in\;\textrm{dom}\;\mu_\textsc{fs}\;\cap\;(\Varid{addrs}^+_{\mu_\textsc{fs}}(\Varid{v}))\mskip1.5mu\}}\\
 \ensuremath{\conf{\lcurr,\mu_\textsc{fi},\mu_\textsc{fs}'}{\textbf{\Blue{E}}\;[\mskip1.5mu \Varid{t}\mskip1.5mu]}\lto\conf{\lcurr',\mu_\textsc{fi}',\mu_\textsc{fs}''}{\textbf{\Blue{E}}\;[\mskip1.5mu \Varid{t}'\mskip1.5mu]}}\\
 \ensuremath{\Sigma''\mathrel{=}(\lcurr',\mu_\textsc{fi}',\mu_\textsc{fs}''\;\!\ltimes\!\;\mu_\textsc{fs})}\\
 \ensuremath{\Varid{v}'\mathrel{=}\Varid{addrs}^{-1}(\textrm{dom}\;\mu_\textsc{fs}'')}
}
{\ensuremath{\conf{\Sigma}{\textbf{\Blue{E}}\;[\mskip1.5mu \textbf{withRefs}_\textsc{fs}\;\Varid{v}\;\Varid{t}\mskip1.5mu]}\lto\conf{\Sigma''}{\textbf{\Blue{E}}\;[\mskip1.5mu \textbf{withRefs}_\textsc{fs}\;\Varid{v}'\;\Varid{t}'\mskip1.5mu]}}}
\and
\inferrule[withRefs-Done]
{ }
{\ensuremath{\conf{\Sigma}{\textbf{\Blue{E}}\;[\mskip1.5mu \textbf{withRefs}_\textsc{fs}\;\Varid{v}\;\Varid{v}'\mskip1.5mu]}\lto\conf{\Sigma}{\textbf{\Blue{E}}\;[\mskip1.5mu \Varid{v}'\mskip1.5mu]}}}
\and
\inferrule[Type-withRef]
{ \ensuremath{\Delta'\mathrel{=}\{\mskip1.5mu \Varid{a}\;\mapsto\;\Delta\;(\Varid{a})\;|\;\Varid{a}\;\in\;\textrm{dom}\;\Delta\;\cap\;(\Varid{addrs}(\Varid{v}))\mskip1.5mu\}}\\
  \ensuremath{\Delta',\Gamma\vdash\Varid{v}\mathbin{:}\overline{{Ref_{\textsc{fs}}\;\tau_{1},\mathbin{...}}}}\\
  \ensuremath{\Delta',\Gamma\vdash\Varid{t}\mathbin{:}\Conid{LIO}\;\tau}
}
{ \ensuremath{\Delta,\Gamma\vdash\textbf{withRefs}_\textsc{fs}\;\Varid{v}\;\Varid{t}\mathbin{:}\Conid{LIO}\;\tau} }
\end{mathpar}
\caption{Extending \liofs{} and \lioafs{} with \ensuremath{\textbf{withRefs}_\textsc{fs}}.\label{fig:sos:withRefs}}
\end{figure}

We provide an extension to \liofs{} that can
be used to automatically upgrade references.
This extension, called \lioafs{}, is given in Figure~\ref{fig:sos:afs}.
Intuitively, whenever the current label is about to be raised, we first upgrade
all the references in the \ensuremath{\mu_\textsc{fs}} store 
and then raise the current label.
Rule \ruleref{upgradeStore} upgrades every reference in the flow-sensitive store
\ensuremath{\mu_\textsc{fs}} by executing \ensuremath{\Varid{t}_{\mathrm{1}}\sequ \Varid{t}_{\mathrm{2}}\sequ \cdots\sequ \Varid{t}_\Varid{n}}, where
\ensuremath{\Varid{t}_\Varid{i}\mathrel{=}\textbf{upgrade}_\textsc{fs}\;(Ref^{\Red{\textsf{{\tiny TCB}}}}_{\textsc{fs}}\;\Varid{a}_\Varid{i})\;\Varid{l}}. Term \ensuremath{\Varid{t}\sequ \Varid{t}'} is
similar to bind except that it discards the result produced by
\ensuremath{\Varid{t}}.
%
Since \ensuremath{\textbf{unlabel}} is the only function that raises the current label, we
augment the \ruleref{unlabel} rule with \ruleref{unlabel-au}, given in
Figure~\ref{fig:sos:afs}.
This ensures that as the computation progresses it does not ``lose''
write access to its references.
Returning to our logging example, with auto-upgrades the reference
used as the log never needs to be explicitly upgraded and can always
be written to---an interface expected of a log.

Recall  that \ensuremath{\textbf{toLabeled}} is used to avoid label creep by allowing code
to only temporarily raise the current label.
Unfortunately, with auto-upgrades, when the current label gets raised
within a \ensuremath{\textbf{toLabeled}} block, the upgrades of the flow-sensitive references remain even
after the current label is restored.
Thus, reading from any flow-sensitive reference after the \ensuremath{\textbf{toLabeled}}
block will raise the current label to (at least) the current label at the end
of the \ensuremath{\textbf{toLabeled}} block (since all references are upgraded every time 
the current label gets raised).
This can be used to carry out a \emph{poison pill}-like
attack~\cite{Breeze}, wherein the (usually untrusted) computation
executing within the \ensuremath{\textbf{toLabeled}} block will render the outer computation useless
via label creep.
(We note that this attack is possible in \liofs{} without the auto-upgrade, but
requires the attacker to manually insert all the upgrades.)

To address this issue, we extend \liofs{} (and \lioafs{}) with \ensuremath{\textbf{withRefs}_\textsc{fs}\;\Varid{v}\;\Varid{t}}, which takes a bag (strict heterogeneous list) \ensuremath{\Varid{v}} of references
and a computation \ensuremath{\Varid{t}}, and executes \ensuremath{\Varid{t}} in a configuration where the
flow-sensitive reference store only contains the
subset of references \ensuremath{\Varid{v}} (and any nested references).
This extension and type rule \ruleref{Type-withRef}, which ensures
that a term cannot access a reference outside its store, are shown in
Figure~\ref{fig:sos:withRefs}.
%

%
A bag is either empty \ensuremath{\epsilon_{\,\overline{\tau, ...}}}, or it may contain a set of references
of (potentially) distinct types \ensuremath{\overline{{\Varid{v},\mathinner{\ldotp\ldotp}}}}.
Rules \ruleref{withRefs-Ctx} and \ruleref{withRefs-Done} precisely
define the semantics of this new primitive, where the meta-level
function \ensuremath{\Varid{addrs}(\cdot )} converts a bag of references to a set of their
corresponding addresses, \ensuremath{\Varid{addrs}^{-1}(\cdot )} performs the inverse conversion, and $\;\ensuremath{\!\ltimes\!}\;$
is used to merge the stores, giving preferences to the left-hand-side
store, i.e., when there is a discrepancy on a stored value between
both stores, it chooses the one appearing on the left-hand-side.
The function \ensuremath{\Varid{addrs}^+_{\mu}(\cdot )} computes the closure of \ensuremath{\Varid{addrs}(\cdot )} under store
\ensuremath{\mu}, so as to include the addresses of arbitrarily-nested
references. Note that if we did not include these addresses in the
restricted store \ensuremath{\mu_\textsc{fs}'}, evaluation might get stuck if the program
attempted a \ensuremath{\textbf{readRef}_{\textsc{fs}}} operation on a nested reference.
We note that \ruleref{withRefs-Ctx} is triggered until the term under
evaluation is reduced to a value, at which point \ruleref{withRefs-Done} is
triggered, returning said value; we specify this big-step rule in terms of
small-steps to facilitate the formalization of our concurrent calculus (see
Section~\ref{sec:conc}).
Aside from the modeling of bags, the \ensuremath{\textbf{withRefs}_\textsc{fs}} primitive is
straightforward and mostly standard;
indeed, the programming paradigm is similar to that already present in
some mainstream languages (e.g., C++'s lambda closures require the
programmer to specify the captured references).
Lastly, we note that the poison pill attack can now be addressed by
simply wrapping \ensuremath{\textbf{toLabeled}} with \ensuremath{\textbf{withRefs}_\textsc{fs}}, which prevents
(untrusted) code within the \ensuremath{\textbf{toLabeled}} block from upgrading arbitrary
references.




\section{Concurrency}
\label{sec:conc}

\concept{extending \lioafs{}}
In this section, we consider flow-sensitive references in the presence
of concurrency (e.g., a web application in which request-handling
threads share a common log).
Concretely, we extend our sequential \liofs{} and \lioafs{} calculi with
threads and a new terminal, \ensuremath{\textbf{forkLIO}}, for dynamically creating new threads, as
in the concurrent version of LIO~\cite{stefan:addressing-covert}.
Intuitively, this concurrent calculus \lioconc{} simply defines a scheduler over
sequential threads, such that taking a step in the concurrent calculus amounts
to taking a step in a sequential thread and context switching to a different
one.
For brevity, we restrict our discussion in this section to the case where the
underlying sequential calculus is \lioafs{}, since this calculus extends
\liofs.

\begin{figure}
\small
\begin{hscode}\SaveRestoreHook
\column{B}{@{}>{\hspre}l<{\hspost}@{}}%
\column{6}{@{}>{\hspre}l<{\hspost}@{}}%
\column{17}{@{}>{\hspre}l<{\hspost}@{}}%
\column{E}{@{}>{\hspre}l<{\hspost}@{}}%
\>[B]{}\Varid{t}{}\<[6]%
\>[6]{}\Coloneqq\cdots{}\<[17]%
\>[17]{}\mid \textbf{forkLIO}\;\Varid{t}\mid \hcancel{\textbf{toLabeled}\ t\ t}{}\<[E]%
\ColumnHook
\end{hscode}\resethooks
\begin{mathpar}
\inferrule[forkLIO]
{ }
{
\ensuremath{\conf{\Sigma}{\textbf{\Blue{E}}\;[\mskip1.5mu \textbf{forkLIO}\;\Varid{t}\mskip1.5mu]}\overset{\Varid{fork}(\Varid{t})}{\lto}\conf{\Sigma}{\textbf{\Blue{E}}\;[\mskip1.5mu \mathbf{return}\;()\mskip1.5mu]}}
}
\and
\inferrule[withRefs-Opt]
{
\ensuremath{\Varid{v}\mathrel{=}\Varid{addrs}^{-1}((\Varid{addrs}(\Varid{v}_{1}))\;\cap\;(\Varid{addrs}(\Varid{v}_{2})))}\\
\ensuremath{\conf{\Sigma}{\textbf{\Blue{E}}\;[\mskip1.5mu \textbf{withRefs}_\textsc{fs}\;\Varid{v}\;\Varid{t}\mskip1.5mu]\lto\conf{\Sigma'}{\textbf{\Blue{E}}\;[\mskip1.5mu \Varid{t}'\mskip1.5mu]}}}
}
{
\ensuremath{\conf{\Sigma}{\textbf{\Blue{E}}\;[\mskip1.5mu \textbf{withRefs}_\textsc{fs}\;\Varid{v}_{1}\;(\textbf{withRefs}_\textsc{fs}\;\Varid{v}_{2}\;\Varid{t})\mskip1.5mu]}\lto\conf{\Sigma'}{\textbf{\Blue{E}}\;[\mskip1.5mu \Varid{t}'\mskip1.5mu]}}
}
\and
\inferrule[T-step]
{
\ensuremath{\Sigma\mathrel{=}(\lcurr,\mu_\textsc{fi},\mu_\textsc{fs})}\\
\ensuremath{\conf{\Sigma}{\textbf{withRefs}_\textsc{fs}\;\Varid{v}\;\Varid{t}}\lto\conf{\Sigma'}{\Varid{t}'}}\\
\ensuremath{\Sigma'\mathrel{=}(\lcurr',\mu_\textsc{fi}',\mu_\textsc{fs}')}\\
\ensuremath{\Varid{v}'\mathrel{=}\Varid{addrs}^{-1}(\textrm{dom}\;\mu_\textsc{fs}')}
}
{
\ensuremath{\tconf{\mu_\textsc{fi}}{\mu_\textsc{fs}}{\thread{\lcurr}{\Varid{v}}{\Varid{t}},\Varid{k}_{2},\mathbin{...}}\lto\tconf{\mu_\textsc{fi}'}{\mu_\textsc{fs}'}{\Varid{k}_{2},\mathbin{...},\thread{\lcurr'}{\Varid{v}'}{\Varid{t}'}}}
}
\and
\inferrule[T-stuck]
{ }
{
\ensuremath{\tconf{\mu_\textsc{fi}}{\mu_\textsc{fs}}{\thread{\lcurr}{\Varid{v}}{{\Uparrow}},\Varid{k}_{2},\mathbin{...}}\lto\tconf{\mu_\textsc{fi}}{\mu_\textsc{fs}}{\Varid{k}_{2},\mathbin{...}}}
}
\and
\inferrule[T-done]
{ }
{
\ensuremath{\tconf{\mu_\textsc{fi}}{\mu_\textsc{fs}}{\thread{\lcurr}{\Varid{v}}{\Varid{v}'},\Varid{k}_{2},\mathbin{...}}\lto\tconf{\mu_\textsc{fi}}{\mu_\textsc{fs}}{\Varid{k}_{2},\mathbin{...}}}
}

\and
\inferrule[T-fork]
{
\ensuremath{\Sigma\mathrel{=}(\lcurr,\mu_\textsc{fi},\mu_\textsc{fs})}\\
\ensuremath{\conf{\Sigma}{\textbf{withRefs}_\textsc{fs}\;\Varid{v}\;\Varid{t}}\overset{\Varid{fork}(\Varid{t}')}{\lto}\conf{\Sigma'}{\Varid{t}''}}\\
\ensuremath{\Sigma'\mathrel{=}(\lcurr',\mu_\textsc{fi}',\mu_\textsc{fs}')}\\
\ensuremath{\Varid{v}'\mathrel{=}\Varid{addrs}^{-1}(\textrm{dom}\;\mu_\textsc{fs}')}\\
\ensuremath{\Varid{k}_\textrm{new}\mathrel{=}\thread{\lcurr'}{\Varid{v}'}{\Varid{t}'}}
}
{
\ensuremath{\tconf{\mu_\textsc{fi}}{\mu_\textsc{fs}}{\thread{\lcurr}{\Varid{v}}{\Varid{t}},\Varid{k}_{2},\mathbin{...}}\lto\tconf{\mu_\textsc{fi}'}{\mu_\textsc{fs}'}{\Varid{k}_{2},\mathbin{...},\thread{\lcurr'}{\Varid{v}'}{\Varid{t}''},\Varid{k}_\textrm{new}}}
}
\end{mathpar}
  \caption{Semantics for \lioconc{}, parametric in the flow-sensitivity policy,
  i.e., with and without auto-upgrade\label{fig:sos:conc}.}
\end{figure}

Figure~\ref{fig:sos:conc} shows our extended concurrent calculus, \lioconc{}.
A concurrent program configuration has the form \ensuremath{\tconf{\mu_\textsc{fi}}{\mu_\textsc{fs}}{\Varid{k}_{1},\Varid{k}_{2},\mathbin{...}}},
where \ensuremath{\mu_\textsc{fi}} and \ensuremath{\mu_\textsc{fs}} are respectively the flow-insensitive and flow-sensitive
stores shared by all the threads \ensuremath{\Varid{k}_{1},\Varid{k}_{2},\mathbin{...}} in the program.
Since the memory stores are global, a thread \ensuremath{\Varid{k}} is simply a tuple
encapsulating the current label of the thread \ensuremath{\lcurr}, the term under
evaluation \ensuremath{\Varid{t}}, and a bag of references \ensuremath{\Varid{v}} the thread may access, i.e.,
\ensuremath{\Varid{k}\mathrel{=}\thread{\lcurr}{\Varid{v}}{\Varid{t}}}.

The reduction rules for concurrent programs are mostly standard.
Rule \ruleref{T-step} specifies that if the first thread in the thread pool
takes a step in \lioafs{}, the whole concurrent program takes a step, moving
the thread to the end of the pool.
We note that the thread term \ensuremath{\Varid{t}} executed with its stored current label
\ensuremath{\lcurr}, and a subset of the flow-sensitive memory store, by wrapping it in
\ensuremath{\textbf{withRefs}_\textsc{fs}}.
While the use of \ensuremath{\textbf{withRefs}_\textsc{fs}} makes the extension straightforward, one
peculiarity arises: since \ruleref{T-step} always wraps the thread term \ensuremath{\Varid{t}}
with \ensuremath{\textbf{withRefs}_\textsc{fs}}, if \ensuremath{\Varid{t}} does not reduce in one step to a value, and instead
reduces to a term \ensuremath{\Varid{t}'}, the next time the thread is scheduled, we will superfluously
wrap \ensuremath{\textbf{withRefs}_\textsc{fs}\;\Varid{t}'} with yet another \ensuremath{\textbf{withRefs}_\textsc{fs}}---thus preventing the thread
from making progress!
To address this problem, we extend the calculus with rule \ruleref{withRefs-Opt} that
collapses nested \ensuremath{\textbf{withRefs}_\textsc{fs}} blocks.\footnote{
This change also requires modifying \ruleref{withRefs-Ctx} to not be triggered when
the term being evaluated is a \ensuremath{\textbf{withRefs}_\textsc{fs}} term.
}

Rules \ruleref{T-done} and \ruleref{T-stuck} specify that once a thread term
has reduced to a value or got stuck, which is represented by \ensuremath{{\Uparrow}}, the
scheduler removes it from the thread pool and schedules the next thread.

As shown in Figure~\ref{fig:sos:conc}, to allow for dynamic thread creation, we
extend \lioafs{}'s terms with \ensuremath{\textbf{forkLIO}}, and add a new reduction rule that sends
an event to the scheduler, specifying the term to execute in a new
thread.\footnote{
In fact, the reduction rule for \lioafs{} must be changed to account for events.
However, since \ensuremath{\Varid{fork}} is the only event in our system, we treat \ensuremath{\lto} as
implicitly carrying an empty event.
}
Rule \ruleref{T-fork} describes the corresponding scheduler rule, triggered
when a \ensuremath{\Varid{fork}\;(\Varid{t}')} event is received.
Here, we create a new thread \ensuremath{\Varid{k}_\textrm{new}} whose current label \ensuremath{\lcurr'} and partition
of the store, i.e., bag of references \ensuremath{\Varid{v}'}, is the same as that of the parent
thread; the term evaluated in the newly created thread is provided in the
event: \ensuremath{\Varid{t}'}.
Subsequently, we add the new thread to the thread pool.

The final modification in extending \lioafs{} to \lioconc{}
is the removal of \ensuremath{\textbf{toLabeled}} from the underlying
calculus.
As described in~\cite{stefan:addressing-covert}, we must remove \ensuremath{\textbf{toLabeled}} to
guarantee termination-sensitive non-interference.
Importantly, however, \ensuremath{\textbf{forkLIO}} with synchronization primitives (e.g.,
flow-insensitive labeled MVars, as discussed
in~\cite{stefan:addressing-covert}) can be used to provide functionality
equivalent to that of \ensuremath{\textbf{toLabeled}}.
Due to space constraints we omit synchronization primitives from
\lioconc{};
we only remark that extending \lioconc{} to provide flow-sensitive
labeled MVars follows in a straightforward way.

\begin{figure}[t]
\small
\begin{hscode}\SaveRestoreHook
\column{B}{@{}>{\hspre}l<{\hspost}@{}}%
\column{3}{@{}>{\hspre}l<{\hspost}@{}}%
\column{9}{@{}>{\hspre}l<{\hspost}@{}}%
\column{17}{@{}>{\hspre}l<{\hspost}@{}}%
\column{E}{@{}>{\hspre}l<{\hspost}@{}}%
\>[B]{}\Varid{leakRef}\mathbin{::}Ref^{\Red{\textsf{{\tiny TCB}}}}_{{}}\;\Conid{Bool}\to \Conid{LIO}\;\Conid{Bool}{}\<[E]%
\\
\>[B]{}\Varid{leakRef}\;\Varid{href}\mathrel{=}\mathbf{do}{}\<[E]%
\\
\>[B]{}\hsindent{3}{}\<[3]%
\>[3]{}\Varid{tmp}{}\<[9]%
\>[9]{}\leftarrow \textbf{newRef}_{{}}\;\textbf{L}\;(){}\<[E]%
\\
\>[B]{}\hsindent{3}{}\<[3]%
\>[3]{}\textbf{forkLIO}\mathbin{\$}\mathbf{do}\;{}\<[17]%
\>[17]{}\Varid{h}\leftarrow \textbf{readRef}_{{}}\;\Varid{href}{}\<[E]%
\\
\>[17]{}\textbf{when}\;\Varid{h}\mathbin{\$}\textbf{writeRef}_{{}}\;\Varid{tmp}\;(){}\<[E]%
\\
\>[B]{}\hsindent{3}{}\<[3]%
\>[3]{}\Varid{delay}{}\<[E]%
\\
\>[B]{}\hsindent{3}{}\<[3]%
\>[3]{}\mathbf{return}\mathbin{\$}\textbf{labelOf}_{{}}\;\Varid{tmp}\equiv \textbf{H}{}\<[E]%
\ColumnHook
\end{hscode}\resethooks
\cut{$}
\caption{Attack on concurrent LIO with naive flow-sensitive reference
extension.
\label{fig:fs-conc-attack}}
\end{figure}
Since the flow-sensitive attack in Figure~\ref{fig:fs-attack} relied
on \ensuremath{\textbf{toLabeled}} to restore the current label, a natural question, given
that we remove \ensuremath{\textbf{toLabeled}}, is whether we can use the naive
flow-sensitive reference semantics of Section~\ref{sec:flow-sensitive}
for concurrent LIO.
As shown by the attack code in Figure~\ref{fig:fs-conc-attack}, in
which we use \ensuremath{\textbf{forkLIO}} instead of \ensuremath{\textbf{toLabeled}} to address a potential
label creep, the fundamental problem remains: the label on the
reference label is not protected!
This precisely motivated our principled approach of extending \lioafs{} to a
concurrent setting as opposed to extending concurrent LIO with flow-sensitive
references.



\section{Formal results}
\label{sec:soundness}

In this section, we show that our flow-sensitive enforcement can be
embedded into the flow-insensitive version of LIO. Additionally, we
provide security guarantees in terms of non-interference definitions by reusing
previous results on LIO.

\subsection{Embedding into \lio{}}

Every flow-sensitive reference
with label \ensuremath{\Varid{l}_{\Varid{d}}} created in a context where the current label is \ensuremath{\Varid{l}_{\Varid{o}}} (and
thus stored in \ensuremath{\mu_\textsc{fs}} as \ensuremath{Lb^{\Red{\textsf{{\tiny TCB}}}}\;\Varid{l}_{\Varid{o}}\;(Lb^{\Red{\textsf{{\tiny TCB}}}}\;\Varid{l}_{\Varid{d}}\;\Varid{t})}), can be represented
by a flow-insensitive reference with label \ensuremath{\Varid{l}_{\Varid{o}}}, whose contents are
another flow-insensitive reference containing \ensuremath{\Varid{t}} and labeled \ensuremath{\Varid{l}_{\Varid{d}}}.

Figure~\ref{fig:fs-exts-semantics-impl} gives our encoding of
the flow-sensitive reference operations in terms of flow-insensitive
references. For a given
store \ensuremath{\Sigma}, we define the \ensuremath{\llbracket \mathbin{-}\rrbracket_{\textsc{fi}}^{\Sigma}} function, which given a term
\ensuremath{\Varid{t}} in \liofs{}, produces a term \ensuremath{\llbracket \Varid{t}\rrbracket_{\textsc{fi}}^{\Sigma}} in \lio{}, 
expanding the definitions of flow-sensitive operations in
terms of flow-insensitive ones. This function is applied
homomorphically in all other cases. We use the \ensuremath{\Conid{WrapRef}} constructor
to mark the flow-insensitive references that are being used to
represent flow-sensitive ones, so as to distinguish them from other
flow-insensitive references. The functions \ensuremath{\Varid{wrap}} and \ensuremath{\Varid{unwrap}} are
used to add and remove this boundary encoding.
In the embedding of \ensuremath{\textbf{writeRef}_{\textsc{fs}}}, we use \ensuremath{\textbf{toLabeled}} to make any
changes to the current label (possibly caused by reading the outer
reference) local to this operation. Inside \ensuremath{\textbf{toLabeled}}, the code
fetches the inner reference (\ensuremath{\textbf{readRef}_{\textsc{fi}}}), and then performs the
actual write of the new value.  If this fails, the computation
diverges, but, importantly, the current label was raised (with
\ensuremath{\textbf{readRef}_{\textsc{fi}}}) to reflect the fact that the label on the label of the
reference was observed.
The embedding of \ensuremath{\textbf{upgrade}_\textsc{fs}} relies on flow-insensitive primitives to implement
the \ensuremath{\textbf{upgrade}} operation. As in \ensuremath{\textbf{writeRef}_{\textsc{fs}}}, a \ensuremath{\textbf{toLabeled}} block is used to
delimit the taint on the current label. Inside the block, the code fetches the
inner reference (\ensuremath{\textbf{readRef}_{\textsc{fi}}}), which taints the current label with \ensuremath{\Varid{l'}}, and
makes a new reference \ensuremath{\Varid{n}} (\ensuremath{\textbf{newRef}_{\textsc{fi}}}) with the upgraded label (\ensuremath{\lcurr\;\lub\;(\Varid{l}\;\lub\;\textbf{labelOf}\;\Varid{i})}) and an undefined value (\ensuremath{\bot }). Observe that the
operation for creating the reference always succeeds since its label is above
the current label, i.e., \ensuremath{\lcurr}. Then, \ensuremath{\textbf{copyRef}} is used to copy the value of the
original inner reference into the new one, \ensuremath{\Varid{n}}. As before, this action always
succeeds because the label of the reference bound to \ensuremath{\Varid{i}} (\ensuremath{\textbf{labelOf}\;\Varid{i}}) is below
the label of the new reference \ensuremath{\Varid{n}}. Finally, the reference \ensuremath{\Varid{n}} is stored in
place of the original inner reference using \ensuremath{\textbf{writeRef}_{\cdot }}. Importantly, this
instruction only succeeds when the current label at the time of writing, i.e.,
\ensuremath{\Varid{lc}\;\lub\;\Varid{l'}} in Figure~\ref{fig:fs-exts-semantics-impl}, is below or
equal to \ensuremath{\Varid{l'}} (the label of the outer reference), i.e., \ensuremath{\Varid{lc}\;\lub\;\Varid{l'}\;\flows\;\Varid{l'}}. This restrictions holds when the current label at the
time of upgrade, i.e., \ensuremath{\Varid{lc}}, is below or equal to \ensuremath{\Varid{l'}}---effectively
encoding the non-sensitive upgrade policy for label changes. The
embedding of \ensuremath{\textbf{downgrade}_\textsc{fs}} follows similarly, except that the label of
the new reference is computed using \ensuremath{\glb} instead of \ensuremath{\lub} (to achieve
the downgrade), and the \ensuremath{\textbf{copyRef}} step is omitted, since the original
value must be destroyed.
We remark that
that the mapping mimics the behavior described by the rules in
Figure~\ref{fig:sos:fs}.
%

\begin{figure}
\small
\begin{hscode}\SaveRestoreHook
\column{B}{@{}>{\hspre}l<{\hspost}@{}}%
\column{3}{@{}>{\hspre}l<{\hspost}@{}}%
\column{4}{@{}>{\hspre}l<{\hspost}@{}}%
\column{5}{@{}>{\hspre}l<{\hspost}@{}}%
\column{7}{@{}>{\hspre}l<{\hspost}@{}}%
\column{10}{@{}>{\hspre}l<{\hspost}@{}}%
\column{11}{@{}>{\hspre}l<{\hspost}@{}}%
\column{34}{@{}>{\hspre}l<{\hspost}@{}}%
\column{36}{@{}>{\hspre}l<{\hspost}@{}}%
\column{38}{@{}>{\hspre}l<{\hspost}@{}}%
\column{40}{@{}>{\hspre}l<{\hspost}@{}}%
\column{E}{@{}>{\hspre}l<{\hspost}@{}}%
\>[B]{}\Varid{wrap}\;\Varid{r}\triangleq{}\Conid{WrapRef}\;\Varid{r}{}\<[E]%
\\
\>[B]{}\Varid{unwrap}\;(\Conid{WrapRef}\;\Varid{r})\triangleq{}\Varid{r}{}\<[E]%
\\[\blanklineskip]%
\>[B]{}\llbracket Ref^{\Red{\textsf{{\tiny TCB}}}}_{\textsc{fs}}\;\Varid{r}\rrbracket_{\textsc{fi}}^{(\lcurr,\mu_\textsc{fi},\mu_\textsc{fs})}\triangleq{}\Varid{wrap}\;(Ref^{\Red{\textsf{{\tiny TCB}}}}_{\textsc{fi}}\;(\textbf{labelOf}_{\textsc{fi}}\;\mu_\textsc{fs}\;(\Varid{r}))\;\Varid{r}){}\<[E]%
\\[\blanklineskip]%
\>[B]{}\llbracket \textbf{newRef}_{\textsc{fs}}\rrbracket_{\textsc{fi}}^{\Sigma}\triangleq{}\lambda \Varid{l}\;\Varid{t}.\mathbf{do}{}\<[E]%
\\
\>[B]{}\hsindent{3}{}\<[3]%
\>[3]{}\Varid{i}{}\<[10]%
\>[10]{}\leftarrow \textbf{newRef}_{\textsc{fi}}\;\Varid{l}\;\Varid{t}{}\<[E]%
\\
\>[B]{}\hsindent{3}{}\<[3]%
\>[3]{}\lcurr{}\<[10]%
\>[10]{}\leftarrow \mathbf{getLabel}{}\<[E]%
\\
\>[B]{}\hsindent{3}{}\<[3]%
\>[3]{}\Varid{o}{}\<[10]%
\>[10]{}\leftarrow \textbf{newRef}_{\textsc{fi}}\;\lcurr\;\Varid{i}{}\<[E]%
\\
\>[B]{}\hsindent{3}{}\<[3]%
\>[3]{}\mathbf{return}\;(\Varid{wrap}\;\Varid{o}){}\<[E]%
\\[\blanklineskip]%
\>[B]{}\llbracket \textbf{readRef}_{\textsc{fs}}\rrbracket_{\textsc{fi}}^{\Sigma}\triangleq{}\lambda \Varid{r}.\textbf{readRef}_{\textsc{fi}}\;(\Varid{unwrap}\;\Varid{r})\bind \textbf{readRef}_{\textsc{fi}}{}\<[E]%
\\[\blanklineskip]%
\>[B]{}\llbracket \textbf{writeRef}_{\textsc{fs}}\rrbracket_{\textsc{fi}}^{\Sigma}\triangleq{}\lambda \Varid{r}\;\Varid{t}.\mathbf{let}\;\Varid{o}\mathrel{=}\Varid{unwrap}\;\Varid{r}\;\mathbf{in}\;\mathbf{do}{}\<[E]%
\\
\>[B]{}\hsindent{3}{}\<[3]%
\>[3]{}\lcurr\leftarrow \mathbf{getLabel}{}\<[E]%
\\
\>[B]{}\hsindent{3}{}\<[3]%
\>[3]{}\textbf{toLabeled}\;(\lcurr\;\lub\;(\textbf{labelOf}\;\Varid{o}))\mathbin{\$}\mathbf{do}{}\<[E]%
\\
\>[3]{}\hsindent{2}{}\<[5]%
\>[5]{}\Varid{i}\leftarrow \textbf{readRef}_{\textsc{fi}}\;\Varid{o}{}\<[E]%
\\
\>[3]{}\hsindent{2}{}\<[5]%
\>[5]{}\textbf{writeRef}_{\textsc{fi}}\;\Varid{i}\;\Varid{t}{}\<[E]%
\\[\blanklineskip]%
\>[B]{}\llbracket \textbf{labelOf}_{\textsc{fs}}\rrbracket_{\textsc{fi}}^{\Sigma}\triangleq{}\lambda \Varid{r}.{}\<[E]%
\\
\>[B]{}\hsindent{3}{}\<[3]%
\>[3]{}\textbf{readRef}_{\textsc{fi}}\;(\Varid{unwrap}\;\Varid{r})\bind \mathbf{return}.\textbf{labelOf}_{\textsc{fi}}{}\<[E]%
\\[\blanklineskip]%
\>[B]{}\llbracket \textbf{upgrade}_\textsc{fs}\rrbracket_{\textsc{fi}}^{\Sigma}\triangleq{}\lambda \Varid{r}\;\Varid{l}.\mathbf{let}\;{}\<[34]%
\>[34]{}\Varid{o}{}\<[38]%
\>[38]{}\mathrel{=}\Varid{unwrap}\;\Varid{r}{}\<[E]%
\\
\>[34]{}\Varid{l'}{}\<[38]%
\>[38]{}\mathrel{=}\textbf{labelOf}\;\Varid{o}\;\mathbf{in}\;\mathbf{do}{}\<[E]%
\\
\>[B]{}\hsindent{5}{}\<[5]%
\>[5]{}\Varid{lc}\leftarrow \mathbf{getLabel}{}\<[E]%
\\
\>[B]{}\hsindent{5}{}\<[5]%
\>[5]{}\textbf{toLabeled}\;(\Varid{lc}\;\lub\;\Varid{l'})\mathbin{\$}\mathbf{do}{}\<[E]%
\\
\>[5]{}\hsindent{2}{}\<[7]%
\>[7]{}\Varid{i}{}\<[11]%
\>[11]{}\leftarrow \textbf{readRef}_{\textsc{fi}}\;\Varid{o}{}\<[E]%
\\
\>[5]{}\hsindent{2}{}\<[7]%
\>[7]{}\lcurr\leftarrow \mathbf{getLabel}{}\<[E]%
\\
\>[5]{}\hsindent{2}{}\<[7]%
\>[7]{}\Varid{n}{}\<[10]%
\>[10]{}\leftarrow \textbf{newRef}_{\textsc{fi}}\;(\lcurr\;\lub\;(\Varid{l}\;\lub\;\textbf{labelOf}\;\Varid{i}))\;\bot {}\<[E]%
\\
\>[5]{}\hsindent{2}{}\<[7]%
\>[7]{}\textbf{copyRef}\;\Varid{i}\;\Varid{n}{}\<[E]%
\\
\>[5]{}\hsindent{2}{}\<[7]%
\>[7]{}\textbf{writeRef}_{\textsc{fi}}\;\Varid{o}\;\Varid{n}{}\<[E]%
\\[\blanklineskip]%
\>[B]{}\llbracket \textbf{downgrade}_\textsc{fs}\rrbracket_{\textsc{fi}}^{\Sigma}\triangleq{}\lambda \Varid{r}\;\Varid{l}.\mathbf{let}\;{}\<[36]%
\>[36]{}\Varid{o}{}\<[40]%
\>[40]{}\mathrel{=}\Varid{unwrap}\;\Varid{r}{}\<[E]%
\\
\>[36]{}\Varid{l'}{}\<[40]%
\>[40]{}\mathrel{=}\textbf{labelOf}\;\Varid{o}\;\mathbf{in}\;\mathbf{do}{}\<[E]%
\\
\>[B]{}\hsindent{5}{}\<[5]%
\>[5]{}\Varid{lc}\leftarrow \mathbf{getLabel}{}\<[E]%
\\
\>[B]{}\hsindent{5}{}\<[5]%
\>[5]{}\textbf{toLabeled}\;(\Varid{lc}\;\lub\;\Varid{l'})\mathbin{\$}\mathbf{do}{}\<[E]%
\\
\>[5]{}\hsindent{2}{}\<[7]%
\>[7]{}\Varid{i}{}\<[11]%
\>[11]{}\leftarrow \textbf{readRef}_{\textsc{fi}}\;\Varid{o}{}\<[E]%
\\
\>[5]{}\hsindent{2}{}\<[7]%
\>[7]{}\lcurr\leftarrow \mathbf{getLabel}{}\<[E]%
\\
\>[5]{}\hsindent{2}{}\<[7]%
\>[7]{}\Varid{n}{}\<[10]%
\>[10]{}\leftarrow \textbf{newRef}_{\textsc{fi}}\;(\lcurr\;\lub\;(\Varid{l}\;\glb\;\textbf{labelOf}\;\Varid{i}))\;\bot {}\<[E]%
\\
\>[5]{}\hsindent{2}{}\<[7]%
\>[7]{}\textbf{writeRef}_{\textsc{fi}}\;\Varid{o}\;\Varid{n}{}\<[E]%
\\[\blanklineskip]%
\>[B]{}\llbracket \textbf{withRefs}_\textsc{fs}\;\Varid{v}\;\Varid{t}\rrbracket_{\textsc{fi}}^{(\lcurr,\mu_\textsc{fi},\mu_\textsc{fs})}\triangleq{}\llbracket \Varid{t}\rrbracket_{\textsc{fi}}^{(\lcurr,\mu_\textsc{fi},\mu_\textsc{fs}')}{}\<[E]%
\\
\>[B]{}\hsindent{4}{}\<[4]%
\>[4]{}\mathbf{where}{}\<[E]%
\\
\>[4]{}\hsindent{1}{}\<[5]%
\>[5]{}\mu_\textsc{fs}'{}\<[10]%
\>[10]{}\mathrel{=}\{\mskip1.5mu \Varid{a}\;\mapsto\;\mu_\textsc{fs}\;(\Varid{a})\;|\;\Varid{a}\;\in\;\textrm{dom}\;\mu_\textsc{fs}\;\cap\;(\Varid{addrs}^+_{\mu_\textsc{fs}}(\Varid{v}))\mskip1.5mu\}{}\<[E]%
\ColumnHook
\end{hscode}\resethooks
\caption{Implementation mapping for flow-sensitive references. For all other terms, the function is applied homomorphically.\label{fig:fs-exts-semantics-impl}}
\end{figure}

We extend this definition naturally to convert \liofs{} environments
into \lio{} environments, by having \ensuremath{\llbracket (\lcurr,\mu_\textsc{fi},\mu_\textsc{fs})\rrbracket_{\textsc{fi}}\triangleq{}(\lcurr,\mu_\textsc{fi}')} where \ensuremath{\mu_\textsc{fi}'} is obtained by extending \ensuremath{\mu_\textsc{fi}} with the pair of bindings \ensuremath{\Varid{a}_\Varid{i}\;\mapsto\;Lb^{\Red{\textsf{{\tiny TCB}}}}\;\Varid{l}_{\Varid{o}}\;(Ref^{\Red{\textsf{{\tiny TCB}}}}_{\textsc{fi}}\;\Varid{l}_{\Varid{d}}\;\Varid{b}_{\Varid{i}}),\Varid{b}_{\Varid{i}}\;\mapsto\;(Lb^{\Red{\textsf{{\tiny TCB}}}}\;\Varid{l}_{\Varid{d}}\;\Varid{v})} (with \ensuremath{\Varid{b}_{\Varid{i}}} being a fresh name) for each binding of the form \ensuremath{\Varid{a}_\Varid{i}\;\mapsto\;Lb^{\Red{\textsf{{\tiny TCB}}}}\;\Varid{l}_{\Varid{o}}\;(Lb^{\Red{\textsf{{\tiny TCB}}}}\;\Varid{l}_{\Varid{d}}\;\Varid{v}_i)} in
\ensuremath{\mu_\textsc{fs}}.
Note that the domains of \ensuremath{\mu_\textsc{fi}} and \ensuremath{\mu_\textsc{fs}} are disjoint because the
\ensuremath{\fresh{\cdot }} predicate that we use in the semantics is assumed to produce
globally unique addresses.

In order to prove that our implementation is correct with respect to
the semantics, we show that, if we take a program with flow-sensitive
operations, and expand those operations, replacing them by the code in
Figure~\ref{fig:fs-exts-semantics-impl}, then its behavior corresponds
with the flow-sensitive semantics.




\begin{theorem}[Embedding \liofs{} in \lio{}]
\label{thm:eq} Let \ensuremath{\Varid{t}} be a well-typed term in \liofs{}.
  Then if \ensuremath{\conf{\Sigma}{\Varid{t}}\lto^*\conf{\Sigma'}{\Varid{v}}}, we have \ensuremath{\conf{\llbracket \Sigma\rrbracket_{\textsc{fi}}}{\llbracket \Varid{t}\rrbracket_{\textsc{fi}}^{\Sigma}}\lto^*\conf{\llbracket \Sigma'\rrbracket_{\textsc{fi}}}{\llbracket \Varid{v}\rrbracket_{\textsc{fi}}^{\Sigma}}}, and if
  \ensuremath{\conf{\Sigma}{\Varid{t}}\lto^*\conf{\Sigma'}{{\Uparrow}}}, then
  \ensuremath{\conf{\llbracket \Sigma\rrbracket_{\textsc{fi}}}{\llbracket \Varid{t}\rrbracket_{\textsc{fi}}^{\Sigma}}\lto^*\conf{\llbracket \Sigma'\rrbracket_{\textsc{fi}}}{{\Uparrow}}}.
\end{theorem}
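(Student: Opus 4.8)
The plan is to prove a \emph{forward simulation}: each reduction step in \liofs{} is matched by zero or more steps of the encoded program in \lio{}, with terms and states related by the state-indexed encoding $\llbracket-\rrbracket_{\textsc{fi}}$ of Figure~\ref{fig:fs-exts-semantics-impl}. I would first establish a single-step lemma: whenever $\conf{\Sigma}{t}\lto\conf{\Sigma'}{t'}$ in \liofs{}, then $\conf{\llbracket \Sigma\rrbracket_{\textsc{fi}}}{\llbracket t\rrbracket_{\textsc{fi}}^{\Sigma}}\lto^*\conf{\llbracket \Sigma'\rrbracket_{\textsc{fi}}}{\llbracket t'\rrbracket_{\textsc{fi}}^{\Sigma'}}$ in \lio{}. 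The theorem then follows by induction on the reduction sequence, since the state-indexed encodings chain: the target state reached after simulating the first $n$ steps is precisely the one from which the simulation of step $n{+}1$ departs. (The statement writes the final value's encoding with the initial $\Sigma$; I would either restate it with $\Sigma'$ or note that a returned value's encoding is insensitive to the part of the store that changed, so the two coincide.)

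The single-step lemma is by case analysis on the \liofs{} rule used. All constructs inherited from base \lio{} are encoded homomorphically and reduce by the identical rule, so those cases are immediate. The content lies in the flow-sensitive operations: for each I would unfold the relevant clause of Figure~\ref{fig:fs-exts-semantics-impl} and drive the resulting \lio{} term through its (deterministic, bounded) reduction, checking that label checks and the resulting store agree with the source rule. The key fact to maintain as a store invariant is that the label-on-the-label flows to the reference label, i.e. every entry $Lb\ l_o\ (Lb\ l_d\ v)$ of $\mu_\textsc{fs}$ satisfies $l_o\flows l_d$; this is preserved by \ruleref{newRef-fs} (which stores under $\lcurr\flows l$), by \ruleref{upgradeRef} (which only joins into $l_d$), and by \ruleref{downgradeRef} (which joins $l_o$ back in). Under this invariant $l_o\lub l_d=l_d$, so the guard $\lcurr\flows(l\lub l')$ of \ruleref{writeRef-fs} coincides with the guard $\lcurr\lub l_o\flows l_d$ that the inner \ruleref{writeRef-fi} of the encoding must meet after reading the outer reference, and the net effect of the encoding on \lcurr{} (raise-and-restore via \ensuremath{\textbf{toLabeled}} for writes/upgrades, monotone raising for \ruleref{readRef-fs} and \ruleref{labelOf-fs}) reproduces exactly the current label prescribed by each source rule.

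Two structural points need care. First, \ruleref{toLabeled} is big-step: its premise contains a complete sub-reduction of $t$ to a monadic value. Since \ensuremath{\textbf{toLabeled}} is encoded homomorphically, I would discharge its case by invoking the induction hypothesis on that sub-reduction to supply the analogous premise for the target \ruleref{toLabeled}; this forces the induction to be organized on total step count (counting nested reductions), not on surface length. Second, \ensuremath{\textbf{withRefs}_\textsc{fs}} needs a small auxiliary lemma stating that restricting $\mu_\textsc{fs}$ to the reachable sub-store (via $\Varid{addrs}^+$) does not alter the behavior of a term that touches only those addresses, after which its encoding clause follows routinely.

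I expect the divergence case to be the main obstacle. In \liofs{} a top-level $\Uparrow$ arises only from \ruleref{writeRef-fs-fail}, which raises \lcurr{} to $\lcurr\lub l_o$ (via the \ensuremath{\textbf{unlabel}} of $Lb\ l_o\ {\Uparrow}$) and then sticks. In the encoding the failure manifests \emph{inside} the \ensuremath{\textbf{toLabeled}} block: the outer reference is read (raising the current label by $l_o$, matching the source), and the following \ruleref{writeRef-fi} is disabled precisely when $\lcurr\not\flows l_d$, i.e. exactly when the source guard $\lcurr\not\flows(l\lub l')$ fails. The delicate part is reconciling the source behavior (raise, then stick) with the encoding, where the failing write is trapped under a \ensuremath{\textbf{toLabeled}} that, being big-step, never commits: one must argue that $\Uparrow$ and stuckness propagate through \ruleref{toLabeled} and the outer bind context so that the target actually reaches $\conf{\llbracket \Sigma'\rrbracket_{\textsc{fi}}}{{\Uparrow}}$ with the current label left at $\lcurr\lub l_o$, rather than a distinct stuck configuration. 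This is the one place where the otherwise tidy step-for-step correspondence requires extra argument.
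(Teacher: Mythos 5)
Your overall architecture---a single-step embedding lemma followed by induction on the length of the reduction sequence---matches the paper's, and several details you flag are genuine improvements on the published argument: the store invariant $l_o\flows l_d$ (which the paper's \ruleref{writeRef-fs} case silently relies on when it lets the inner $\textbf{writeRef}_{\textsc{fi}}$ succeed), the re-indexing of the reduct's encoding by $\Sigma'$ (the paper's lemma writes $\llbracket t'\rrbracket_{\textsc{fi}}^{\Sigma}$, which is not even well defined for \ruleref{newRef-fs}, since the fresh address does not occur in $\Sigma$), and the explicit handling of \ruleref{toLabeled} and of divergence.

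The gap is that your single-step lemma, as stated, is false. Take \ruleref{readRef-fs}: the source step leaves the state \emph{unchanged} and merely rewrites the redex to $\textbf{unlabel}\;(Lb\;(l\lub l')\;t)$---the raise of the current label is deferred to the following step. The encoding, by contrast, performs the raises inline while reading the two nested flow-insensitive references: starting from $\conf{\llbracket\Sigma\rrbracket_{\textsc{fi}}}{\llbracket t\rrbracket_{\textsc{fi}}^{\Sigma}}$, already after the first $\textbf{unlabel}$ the current label is $\lcurr\lub l$ and the redex is $\textbf{unlabel}\;(Lb\;l'\;t)$. No configuration on this path coincides with the encoding of the source's reduct, whose state component is the unraised $\llbracket\Sigma\rrbracket_{\textsc{fi}}$; so the claimed $\lto^*$ fails whenever $l\not\flows\lcurr$ (the same problem arises for \ruleref{labelOf-fs} and \ruleref{writeRef-fs-fail}). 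The encoding of a configuration and the encoding of its one-step reduct reduce only to a \emph{common} configuration, not to one another. This is exactly why the paper states its Lemma~\ref{lem:step-embedding} as joinability ($\beta$-equivalence: some $Y$ is reachable from both) and why its proof of the theorem additionally invokes \emph{uniqueness of normal forms} in \lio{}: one chains $\beta$-equivalences along the run, notes that $\conf{\llbracket\Sigma'\rrbracket_{\textsc{fi}}}{\llbracket v\rrbracket_{\textsc{fi}}}$ is a normal form, and uses confluence to convert equivalence into actual reduction. Your induction leans precisely on strictness (``the target state reached after simulating the first $n$ steps is the one from which step $n{+}1$ departs'') and collapses once the lemma is weakened to its true form; to repair the proof you must either adopt the paper's confluence-based stitching or run the simulation against a relation on \lio{} configurations that is coarser than syntactic equality of encodings (e.g., joinability itself).
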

\begin{proof}
See Appendix~\ref{app:embedding}.
\end{proof}
While straight forward, this theorem highlights an important result: in
floating label systems, flow-sensitive references can be encoded in a calculus
with flow-insensitive references and explicitly labeled values.

\subsection{Security guarantees for \liofs{}, \lioafs{} and \lioconc{}}

From previous results~\cite{stefan:lio}, we know that LIO satisfies termination-insensitive
non-interference (TINI) in the sequential setting, and termination-sensitive
non-interference (TSNI) in the concurrent setting. By using the
embedding theorem we can extend these results for LIO with flow-sensitive references.

For completeness, we now present our
non-interference theorems, as straightforward applications of the theorems in
previous work. Our security results rely on the notion of
\ensuremath{\Varid{l}}-equivalence for terms and configurations, which captures the idea
of terms that cannot be distinguished by an attacker which can observe
data at level \ensuremath{\Varid{l}}. A pair of terms \ensuremath{\Varid{t}_{1},\Varid{t}_{2}} is said to be
\ensuremath{\Varid{l}}-equivalent (written \ensuremath{\Varid{t}_{1}\approx_{\Varid{l}}\Varid{t}_{2}}) if, after erasing all the
information more sensitive than \ensuremath{\Varid{l}} from \ensuremath{\Varid{t}_{1}} and \ensuremath{\Varid{t}_{2}}, we obtain
syntactically equivalent terms. This definition extends naturally to
configurations.

Intuitively, non-interference means that an attacker at level \ensuremath{\Varid{l}}
cannot distinguish among different runs of a program with
\ensuremath{\Varid{l}}-equivalent initial configurations.

\begin{theorem}[TINI for \liofs{}]
  Consider two well-typed terms \ensuremath{\Varid{t}_{1}} and \ensuremath{\Varid{t}_{2}} in
  \liofs{} which do not contain any \ensuremath{\cdot\;{}^{\Red{\textsf{{\tiny{TCB}}}}}} syntax nodes, such that \ensuremath{\Varid{t}_{1}\approx_{\Varid{l}}\Varid{t}_{2}}, where \ensuremath{\Varid{l}} is the attacker observation level.
  Let \ensuremath{\Sigma} be an initial environment, and let
\[
    \ensuremath{\conf{\Sigma}{\Varid{t}_{1}}\lto^*\conf{\Sigma_{1}}{\Varid{v}_{1}}}\mbox{ and }
    \ensuremath{\conf{\Sigma}{\Varid{t}_{2}}\lto^*\conf{\Sigma_{2}}{\Varid{v}_{2}}}
\]
  Then, we have that \ensuremath{\conf{\Sigma_{1}}{\Varid{v}_{1}}\approx_{\Varid{l}}\conf{\Sigma_{2}}{\Varid{v}_{2}}}.
\end{theorem}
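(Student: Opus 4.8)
The plan is to avoid a fresh erasure-based argument for the flow-sensitive semantics and instead obtain TINI as a corollary of the embedding theorem (Theorem~\ref{thm:eq}) together with the established TINI for base \lio{}~\cite{stefan:lio}. Concretely, I would transport the two \liofs{} executions across the embedding $\llbracket - \rrbracket_{\textsc{fi}}$ into \lio{}, invoke non-interference there, and then pull the resulting $l$-equivalence back into \liofs{}.

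The single technical ingredient that makes this work is a bridging lemma stating that \emph{erasure commutes with the embedding}: for every well-typed \liofs{} term $t$ and store $\Sigma$, $\erase{l}{\llbracket t \rrbracket_{\textsc{fi}}^{\Sigma}} = \llbracket \erase{l}{t} \rrbracket_{\textsc{fi}}^{\erase{l}{\Sigma}}$, with the analogous statement for configurations and environments. Since $\approx_l$ is defined as syntactic equality of erased terms, this lemma immediately gives both properties I need: the embedding \emph{preserves} $l$-equivalence (so I may feed embedded terms into base-LIO TINI), and, because $\llbracket - \rrbracket_{\textsc{fi}}$ is injective on erased configurations (up to renaming of the fresh inner addresses $b_i$ it introduces), it also \emph{reflects} $l$-equivalence (so I may recover a \liofs{} conclusion from an \lio{} one).

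I would prove the commutation lemma by induction on $t$. All cases are homomorphic except the flow-sensitive reference constructs, where the nested labeled structure $Lb^{\textsf{TCB}}\,l_o\,(Lb^{\textsf{TCB}}\,l_d\,t)$ must be matched against its two-level flow-insensitive encoding (an outer reference labeled $l_o$ whose contents are an inner reference labeled $l_d$ holding $t$). The crux, and the main obstacle, is checking that LIO erasure of this encoding lands exactly on the embedding of the erased flow-sensitive object in each regime: when $l_o \not\flows l$ the whole nested object disappears on both sides; when $l_o \flows l$ but $l_d \not\flows l$ the two labels remain visible while only the innermost value $t$ is erased; and when $l_d \flows l$ erasure acts pointwise. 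It is precisely here that the central design decision---protecting the mutable label $l_d$ by the ``label on the label'' $l_o$---must pay off, ensuring that the visibility of $l_d$ under erasure is governed identically in both calculi.

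Granting the lemma, the theorem follows mechanically. From $t_1 \approx_l t_2$ and commutation I get $\llbracket t_1 \rrbracket_{\textsc{fi}}^{\Sigma} \approx_l \llbracket t_2 \rrbracket_{\textsc{fi}}^{\Sigma}$. Theorem~\ref{thm:eq} turns the two hypothesized runs into \lio{} runs $\conf{\llbracket \Sigma \rrbracket_{\textsc{fi}}}{\llbracket t_i \rrbracket_{\textsc{fi}}^{\Sigma}} \lto^* \conf{\llbracket \Sigma_i \rrbracket_{\textsc{fi}}}{\llbracket v_i \rrbracket_{\textsc{fi}}^{\Sigma}}$, and base-LIO TINI then yields $\conf{\llbracket \Sigma_1 \rrbracket_{\textsc{fi}}}{\llbracket v_1 \rrbracket_{\textsc{fi}}^{\Sigma}} \approx_l \conf{\llbracket \Sigma_2 \rrbracket_{\textsc{fi}}}{\llbracket v_2 \rrbracket_{\textsc{fi}}^{\Sigma}}$. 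Applying commutation once more and then injectivity of the embedding reflects this equivalence back to $\conf{\Sigma_1}{v_1} \approx_l \conf{\Sigma_2}{v_2}$, which is the desired conclusion.
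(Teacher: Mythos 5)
Your proposal is correct and follows essentially the same route as the paper: both reduce the theorem to Theorem~\ref{thm:eq} (the embedding of \liofs{} into \lio{}) followed by an appeal to the TINI result for base \lio{} from~\cite{stefan:lio}. The only difference is one of rigor---the paper's proof is a two-sentence sketch that leaves implicit the bridging facts you isolate (that erasure commutes with the embedding, so $\approx_{l}$ is both preserved and reflected across $\llbracket \cdot \rrbracket_{\textsc{fi}}$), and making that lemma explicit is a faithful elaboration of the paper's argument rather than a departure from it.
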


\begin{proof}
  By expanding all the flow-sensitive operations in \ensuremath{\Varid{t}_{1}} and \ensuremath{\Varid{t}_{2}} using their
  definition given in Figure~\ref{fig:fs-exts-semantics-impl}, we get
  terms in \lio{}, which by Theorem~\ref{thm:eq} has equivalent
  semantics. Therefore, the result follows from the
  \lio{} TINI result of~\citep{stefan:lio}.
\end{proof}

\begin{corollary}[TINI for \lioafs{}]
  The previous non-interference result can be easily extended to \lioafs{}.
  In \lioafs{}, the \ensuremath{\textbf{unlabel}} operation
  triggers the automatic upgrades mechanism, which
  performs the \ensuremath{\textbf{upgrade}} operation for every flow-sensitive
  reference in scope before actually raising the current label.
  Regardless of how \ensuremath{\textbf{unlabel}} is used, we note that the resulting term
  (after inserting the necessary \ensuremath{\textbf{upgrade}}s), is just an \liofs{}
  term. Therefore, the main TINI result for \liofs{} applies.


\end{corollary}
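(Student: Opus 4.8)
The plan is to make precise the corollary's slogan---``\lioafs{} is just \liofs{} with automatically inserted upgrades''---by exhibiting a semantics-preserving de-sugaring from \lioafs{} into \liofs{} and then invoking the TINI theorem for \liofs{} as a black box. The structural observation driving everything is that \lioafs{} is obtained from \liofs{} by replacing rule \ruleref{unlabel} with \ruleref{unlabel-au} and adding the auxiliary \ruleref{upgradeStore}; every other reduction rule, every typing rule, and hence the notion of \ensuremath{\lequiv{\Varid{l}}}-equivalence and the erasure function \ensuremath{\erase{\Varid{l}}{\cdot}} are inherited unchanged. So the whole proof obligation reduces to accounting for the single new rule.

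First I would show that a single \ruleref{unlabel-au} macro-step factors into a finite run of ordinary \liofs{} steps. By \ruleref{upgradeStore}, the premise \ensuremath{\conf{\Sigma}{\textbf{upgradeStore}_\textsc{fs}\;\lcurr'}\lto^*\conf{(\lcurr,\mu_\textsc{fi},\mu_\textsc{fs}')}{LIO^{\Red{\textsf{{\tiny TCB}}}}\;()}} unfolds \ensuremath{\textbf{upgradeStore}_\textsc{fs}\;\lcurr'} into the explicit term \ensuremath{\Varid{t}_{1}\sequ \cdots \sequ \Varid{t}_\Varid{n}} with \ensuremath{\Varid{t}_\Varid{i}\mathrel{=}\textbf{upgrade}_\textsc{fs}\;(Ref^{\Red{\textsf{{\tiny TCB}}}}_{\textsc{fs}}\;\Varid{a}_\Varid{i})\;\lcurr'}, where \ensuremath{\Varid{a}_{1},\dots,\Varid{a}_\Varid{n}} enumerate \ensuremath{\textrm{dom}\;\mu_\textsc{fs}}. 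Each \ensuremath{\Varid{t}_\Varid{i}} is a genuine \liofs{} redex that steps by \ruleref{upgradeRef}, and the residual label raise performed by \ruleref{unlabel-au} is exactly the plain \liofs{} \ruleref{unlabel} step. Hence the de-sugared redex---obtained by inlining \ensuremath{\textbf{upgradeStore}_\textsc{fs}\;\lcurr'\sequ \textbf{unlabel}\;(Lb^{\Red{\textsf{{\tiny TCB}}}}\;\Varid{l}_{d}\;\Varid{t})} using the concrete addresses in \ensuremath{\mu_\textsc{fs}}, with \ensuremath{\lcurr'\mathrel{=}\lcurr\;\lub\;\Varid{l}_{d}} and \ensuremath{\Varid{l}_{d}} the label of the unlabeled value---is a well-typed \liofs{} term that \liofs{}-reduces to the same final store, current label, and value as the \ruleref{unlabel-au} step. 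Threading this through a whole trace by induction on its length yields a de-sugaring \ensuremath{\phi} such that \ensuremath{\conf{\Sigma}{\Varid{t}}\lto^*\conf{\Sigma'}{\Varid{v}}} in \lioafs{} holds iff the \ensuremath{\phi}-image is an \liofs{} reduction to the corresponding configuration (and likewise for divergence to \ensuremath{{\Uparrow}}). Thus \ensuremath{\phi} preserves both observable results and termination behaviour.

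The crux---and the step I expect to be the main obstacle---is showing that \ensuremath{\phi} is compatible with \ensuremath{\lequiv{\Varid{l}}}, so that two de-sugared runs start from \ensuremath{\lequiv{\Varid{l}}}-equivalent \liofs{} terms. The difficulty is that \ruleref{unlabel-au} upgrades \emph{every} reference in \ensuremath{\mu_\textsc{fs}}, so the explicit upgrade sequences that \ensuremath{\phi} inserts depend on the runtime store, whose part above the attacker level \ensuremath{\Varid{l}} may differ between two \ensuremath{\lequiv{\Varid{l}}}-equivalent runs. I would resolve this by proving that \ensuremath{\phi} respects \ensuremath{\erase{\Varid{l}}{\cdot}}, via a case split on whether the raised label is observable: when \ensuremath{\lcurr'\;\not\flows\;\Varid{l}} the whole continuation, and every upgrade it performs, lies above the observation level and is erased, so the inserted upgrades cannot be distinguished; when \ensuremath{\lcurr'\;\flows\;\Varid{l}} the inserted upgrades target references whose label-on-label is observable, and these are fixed by the low projection of \ensuremath{\mu_\textsc{fs}}, which coincides across \ensuremath{\lequiv{\Varid{l}}}-equivalent configurations. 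The care needed is to invoke only structural properties of \ensuremath{\erase{\Varid{l}}{\cdot}}---in particular that a single \ruleref{upgradeRef} on a non-observable reference commutes with erasure---established independently of the non-interference result, so as to avoid circularly assuming the very TINI we are deriving.

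Finally, given well-typed \lioafs{} terms with \ensuremath{\Varid{t}_{1}\approx_{\Varid{l}}\Varid{t}_{2}}, the previous two paragraphs produce \ensuremath{\lequiv{\Varid{l}}}-equivalent \liofs{} terms \ensuremath{\phi(\Varid{t}_{1})} and \ensuremath{\phi(\Varid{t}_{2})} whose \liofs{} runs reach configurations that are \ensuremath{\lequiv{\Varid{l}}}-equivalent to the original \lioafs{} final configurations. Applying the TINI theorem for \liofs{} to \ensuremath{\phi(\Varid{t}_{1})} and \ensuremath{\phi(\Varid{t}_{2})} and transporting the conclusion back across \ensuremath{\phi} yields \ensuremath{\conf{\Sigma_{1}}{\Varid{v}_{1}}\approx_{\Varid{l}}\conf{\Sigma_{2}}{\Varid{v}_{2}}}, as required. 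Since no erasure machinery beyond that already used for \liofs{} is introduced, this is exactly the sense in which the extension is ``easy''---the only genuine work is the \ensuremath{\lequiv{\Varid{l}}}-compatibility of the store-dependent upgrade insertion.
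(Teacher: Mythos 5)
Your proposal follows the same route as the paper's proof: unfold the auto-upgrade mechanism into explicit \ensuremath{\textbf{upgrade}_\textsc{fs}} operations, observe that what results is an \liofs{} term, and invoke the \liofs{} TINI theorem as a black box. The difference is in how much of that route the paper actually walks---its proof consists of precisely that one observation and nothing more, treating the unfolding as immediate---whereas you formalize a de-sugaring and, crucially, confront its compatibility with \ensuremath{\lequiv{\Varid{l}}}. Your single-step factorization of \ruleref{unlabel-au} through \ruleref{upgradeStore} into a sequence of \ruleref{upgradeRef} steps followed by a plain \ruleref{unlabel} is exactly what ``the resulting term is just an \liofs{} term'' means operationally, and your identification of the real subtlety---that the inserted upgrade sequence depends on the runtime store \ensuremath{\mu_\textsc{fs}}, whose high portion may differ between two \ensuremath{\Varid{l}}-equivalent runs---is a genuine gap in the paper's sketch that your erasure-based case split (upgrades under non-observable \ensuremath{\lcurr'} erase wholesale; upgrades under observable \ensuremath{\lcurr'} split into those fixed by the low projection of the store and those on high references, which commute with \ensuremath{\erase{\Varid{l}}{\cdot}}) repairs in the right way. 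One caveat: since the upgrade targets are only known at run time, your \ensuremath{\phi} is a per-trace unfolding rather than a static source translation, so the closing move cannot literally be ``apply TINI to \ensuremath{\phi(\Varid{t}_{1})} and \ensuremath{\phi(\Varid{t}_{2})}'' as fixed terms; but the simulation and erasure-compatibility lemmas you establish beforehand already carry the non-interference conclusion directly, so this is a matter of phrasing rather than substance. In short, you prove what the paper asserts, and the extra \ensuremath{\lequiv{\Varid{l}}}-compatibility lemma you supply is where the actual rigor lives.
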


For the concurrent result, we need a supporting lemma which states
that the current label is always at least as sensitive as
the label of every reference in scope. Formally,

\begin{lemma} \label{lem:lolFlowslCurr}
  Let \ensuremath{\Varid{t}} be a well-typed term in \lioconc{}, \ensuremath{\Sigma\mathrel{=}(\lcurr,\mu_\textsc{fi},\mu_\textsc{fs})} be an initial
  environment, and \ensuremath{\Varid{a}} be the address of a flow-sensitive reference \ensuremath{\Varid{r}} in \ensuremath{\Sigma},
  where \ensuremath{\mu_\textsc{fs}\;(\Varid{a})\mathrel{=}Lb^{\Red{\textsf{{\tiny TCB}}}}\;\Varid{l}_{\Varid{o}}\;(Lb^{\Red{\textsf{{\tiny TCB}}}}\;\Varid{l}_{\Varid{d}}\;\Varid{v})}. Then, if \ensuremath{\conf{\Sigma}{\Varid{t}}\lto^*\conf{\lcurr',\mu_\textsc{fi}',\mu_\textsc{fs}'}{\Varid{t}'}}, we have that \ensuremath{\Varid{l}_{\Varid{o}}\;\flows\;\lcurr'}.
\end{lemma}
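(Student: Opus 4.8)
\begin{proofsketch}
The plan is to prove something slightly stronger and more uniform, namely a \emph{store invariant}, and to recover the lemma as a special case. Say that a configuration $\conf{(\lcurr,\mu_\textsc{fi},\mu_\textsc{fs})}{t}$ is \emph{well-labeled} if for every $a' \in \dom{\mu_\textsc{fs}}$ with $\mu_\textsc{fs}(a') = Lb^{\textsf{TCB}}\,l_o'\,(Lb^{\textsf{TCB}}\,l_d'\,v')$ we have $l_o' \flows \lcurr$. I would then show (i) initial environments are well-labeled, and (ii) well-labeledness is preserved by the single-configuration reduction $\lto$ of \lioconc{}. Granting these, a routine induction on the length of $\lto^*$ shows the final configuration is well-labeled; instantiating the invariant at the specific address $a$ (whose outer label is still $l_o$, see below) gives exactly $l_o \flows \lcurr'$.

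Before the main induction I would isolate two structural facts, each by inspection of the rules in Figures~\ref{fig:sos:fs}, \ref{fig:sos:afs}, \ref{fig:sos:withRefs} and \ref{fig:sos:conc}. \textbf{Monotonicity:} every step satisfies $\lcurr \flows \lcurr'$. This holds because the only rules that touch the current label---\ruleref{unlabel}, \ruleref{unlabel-au}, and the read rules, which are defined through \ruleref{unlabel}---raise it to a join $\lcurr \lub l$. This is precisely where the argument relies on the calculus being \lioconc{} and not \lioafs{}: removing \textbf{toLabeled} eliminates the only rule capable of \emph{lowering} the current label, and the merge in \ruleref{withRefs-Ctx} keeps the inner $\lcurr'$ rather than restoring the outer one. \textbf{Outer-label stability:} for any fixed $a'$, the outer component of $\mu_\textsc{fs}(a')$ is never altered. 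Indeed \ruleref{newRef-fs} installs $l_o = \lcurr$ at creation, while \ruleref{writeRef-fs}, \ruleref{upgradeRef} and \ruleref{downgradeRef} rewrite only the \emph{inner} $Lb^{\textsf{TCB}}$ node.

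With these two facts, step (ii) is a case analysis on the derivation of $\conf{\Sigma}{t}\lto\conf{\Sigma'}{t'}$, assuming $\Sigma$ well-labeled. For rules leaving $\mu_\textsc{fs}$ and $\lcurr$ untouched the claim is immediate; \ruleref{forkLIO} falls here, since at the configuration level it is a no-op on $\Sigma$. For the current-label-raising rules, any reference already present satisfies $l_o' \flows \lcurr \flows \lcurr'$ by well-labeledness of $\Sigma$ and monotonicity. For \ruleref{newRef-fs} the fresh reference has $l_o' = \lcurr = \lcurr'$ and the rest are untouched. The only delicate case is \ruleref{withRefs-Ctx}: the store is first restricted to the closure $\mu_\textsc{fs}'$ of the declared references, the body takes one inner step yielding $\mu_\textsc{fs}''$, and the result is merged as $\mu_\textsc{fs}'' \ltimes \mu_\textsc{fs}$. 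Here I would note the restricted configuration is well-labeled (same $\lcurr$, sub-store of a well-labeled store), apply the inductive hypothesis to the inner step to get that $\mu_\textsc{fs}''$ is well-labeled at $\lcurr'$, and then check the merge address by address: bindings taken from $\mu_\textsc{fs}''$ inherit the inner invariant, whereas addresses restricted out are copied unchanged from $\mu_\textsc{fs}$ and satisfy $l_o' \flows \lcurr \flows \lcurr'$ by monotonicity.

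The step I expect to be the main obstacle is exactly this \ruleref{withRefs-Ctx} case, because of the interaction of store restriction, the inner reduction, and the left-biased merge $\ltimes$: one must verify that no address is dropped from the domain (so the invariant genuinely still covers $a$ at the end) and that the merge cannot reinstate a stale binding that violates the invariant. The base case (i) is comparatively easy: well-labeledness holds vacuously when $\mu_\textsc{fs}$ is empty, as for a freshly started program, and otherwise follows from the well-formedness assumed of initial environments. Finally, since the reduction in the statement lives at the single-configuration level, the thread-pool rules (\ruleref{T-step} and the others) never enter the argument, so concurrency contributes no additional cases.
\end{proofsketch}
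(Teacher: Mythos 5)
Your sequential skeleton coincides with the paper's: both proofs rest on the same three observations, namely that \ruleref{newRef-\ensuremath{\textsc{fs}}} establishes $l_o = \lcurr$ at creation, that no reduction rule ever rewrites the outer label, and that the current label is monotone in \lioconc{} precisely because \textbf{toLabeled} has been removed. Had the lemma concerned only the sequential calculi, your invariant-plus-case-analysis (including the careful treatment of the restriction and left-biased merge in \ruleref{withRefs-Ctx}) would be an adequate, indeed more systematic, rendering of what the paper states in prose.

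The genuine gap is your closing claim that ``concurrency contributes no additional cases.'' The paper's proof devotes its entire second half to exactly that case, and it is the crux of the lemma: the stores $\mu_\textsc{fi}$ and $\mu_\textsc{fs}$ are shared, so a reference $r$ created by one thread (with $l_o$ equal to that thread's current label at creation) can later become reachable by a \emph{different} thread whose current label need not dominate $l_o$. The paper argues this cannot break the property: to hand $r$ to another thread one must go through a labeled intermediary in the shared state, whose label is at least the creating thread's $\lcurr$, so reading that intermediary taints the receiving thread to at least $l_o$ before $r$ can be used. Your blanket store invariant cannot express this and is in fact transiently violated in a concurrent run: a thread $B$ whose current label is incomparable to thread $A$'s may create a reference whose outer label equals $B$'s current label and write it into a reference lying in $A$'s bag closure; at that moment $A$'s configuration contains (via $\Varid{addrs}^+$) a reference whose outer label does not flow to $A$'s current label, even though $A$ cannot touch it without first raising its label past $l_o$ by reading the intermediary. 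So well-labeledness is not preserved under the thread-pool rules \ruleref{T-step} and \ruleref{T-fork}, and restricting attention to single-configuration reduction quietly discards the part of the statement that the TSNI theorem actually needs, since that theorem invokes the lemma per thread in a context where other threads mutate the shared stores. To repair your proposal you would have to weaken the invariant to the paper's guardedness property --- any accessible reference whose outer label does not yet flow to the thread's current label is reachable only through intermediaries labeled at or above $l_o$ --- and prove that this is preserved by the concurrent rules.
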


\begin{proof}
  Note that the result holds immediately after creating \ensuremath{\Varid{r}}, since
  the current label is the label on the label of \ensuremath{\Varid{r}}, i.e., \ensuremath{\Varid{l}_{\Varid{o}}\mathrel{=}\lcurr}. It is easy to show that \ensuremath{\Varid{l}_{\Varid{o}}} is immutable, since there are
  no reduction rules that modify it. Moreover, given that the current
  label is monotonic, the only way in which \ensuremath{\Varid{l}_{\Varid{o}}\;\flows\;\lcurr} can
  cease to hold is if \ensuremath{\Varid{r}} is accessed from a different thread. But in
  order to pass \ensuremath{\Varid{r}} to a different thread, a labeled object must be
  used as intermediary, and the label of such object would have to be
  at least \ensuremath{\lcurr}, the current label in the thread that created
  \ensuremath{\Varid{r}}. As a result, if we were to pass \ensuremath{\Varid{r}} to another thread in this
  way, then the target thread would also have to be tainted by
  \ensuremath{\lcurr}, and the result would still hold.
\end{proof}

We now prove our non-interference theorem for \lioconc. This result is
stronger than TINI, since it implies that there can be no termination
or internal timing leaks.

\begin{theorem}[TSNI for \lioconc{}]
  Consider two well-typed terms \ensuremath{\Varid{t}_{1}} and \ensuremath{\Varid{t}_{2}} in
  \lioconc{} which do not contain any \ensuremath{\cdot\;{}^{\Red{\textsf{{\tiny{TCB}}}}}} syntax nodes, such that \ensuremath{\Varid{t}_{1}\approx_{\Varid{l}}\Varid{t}_{2}}, where \ensuremath{\Varid{l}} is the attacker observation level.
  Let \ensuremath{\Sigma\mathrel{=}(\lcurr,\mu_\textsc{fi},\mu_\textsc{fs})} be an initial environment, and let
\[
 \ensuremath{\tconf{\mu_\textsc{fi}}{\mu_\textsc{fs}}{\thread{\lcurr}{\Varid{addrs}^{-1}(\textrm{dom}\;\mu_\textsc{fs})}{\Varid{t}_{1}}}\lto^*\Conid{M}_{1}}
\]
  Then, there exists some configuration \ensuremath{\Conid{M}_{2}} such that
\[
  \ensuremath{\tconf{\mu_\textsc{fi}}{\mu_\textsc{fs}}{\thread{\lcurr}{\Varid{addrs}^{-1}(\textrm{dom}\;\mu_\textsc{fs})}{\Varid{t}_{2}}}\lto^*\Conid{M}_{2}}
\]
  and \ensuremath{\Conid{M}_{1}\approx_{\Varid{l}}\Conid{M}_{2}}.
\end{theorem}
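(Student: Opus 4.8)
The plan is to reduce TSNI for \lioconc{} to the already-established TSNI result for concurrent LIO, following the same embedding strategy used for the sequential TINI proof but lifted to the concurrent setting and underpinned by Lemma~\ref{lem:lolFlowslCurr}. The high-level idea is identical to the sequential case: expand every flow-sensitive operation into flow-insensitive constructs via Figure~\ref{fig:fs-exts-semantics-impl}, obtain a program in ordinary concurrent LIO, and invoke the existing non-interference result of~\cite{stefan:addressing-covert}.

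First, I would lift the embedding of Theorem~\ref{thm:eq} from single configurations to concurrent thread pools. Concretely, I would translate a thread \ensuremath{\thread{\lcurr}{v}{t}} by expanding every flow-sensitive operation in \ensuremath{t} using the mapping of Figure~\ref{fig:fs-exts-semantics-impl}, and encode the shared store \ensuremath{\mu_\textsc{fs}} as nested flow-insensitive references inside \ensuremath{\mu_\textsc{fi}}, exactly as the environment translation \ensuremath{\llbracket(\lcurr,\mu_\textsc{fi},\mu_\textsc{fs})\rrbracket_{\textsc{fi}}} prescribes. The per-thread bag \ensuremath{v}, together with the \ensuremath{\textbf{withRefs}_\textsc{fs}} wrapping inserted by rule~\ruleref{T-step}, must be translated so that the store visible to each thread in the embedded program coincides with the restricted store that rule imposes.

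Second, I would show that this concurrent embedding is a simulation: each scheduler rule of Figure~\ref{fig:sos:conc} (namely \ruleref{T-step}, \ruleref{T-fork}, \ruleref{T-done}, and \ruleref{T-stuck}) is matched by a corresponding sequence of steps in concurrent LIO. The sequential content of each step follows directly from Theorem~\ref{thm:eq}; the genuinely new obligation is to check that context switching, the collapsing of nested blocks by rule~\ruleref{withRefs-Opt}, and thread forking all commute with the embedding, so that the interleaving observed by the scheduler is faithfully mirrored. Soundness of this mirroring across thread boundaries is exactly where Lemma~\ref{lem:lolFlowslCurr} is needed: because \ensuremath{\mu_\textsc{fs}} is global, a reference upgraded or written in one thread may be observed in another, and the lemma guarantees that the label on the label \ensuremath{\Varid{l}_{\Varid{o}}} of every flow-sensitive reference always satisfies \ensuremath{\Varid{l}_{\Varid{o}}\;\flows\;\lcurr'}. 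This is precisely the invariant that makes the ``write to the outer reference'' step in the embedded \ensuremath{\textbf{writeRef}_{\textsc{fs}}} and \ensuremath{\textbf{upgrade}_\textsc{fs}} code safe, ruling out the cross-thread label-change leak exhibited by the attack in Figure~\ref{fig:fs-conc-attack}.

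Finally, with the simulation in hand I would conclude by invoking the concurrent TSNI result for LIO~\cite{stefan:addressing-covert}. Since \ensuremath{\Varid{t}_{1}\approx_{\Varid{l}}\Varid{t}_{2}} and the embedding is applied homomorphically using the same labels, erasure commutes with the translation, so the two embedded programs remain \ensuremath{\Varid{l}}-equivalent. Given the run \ensuremath{\tconf{\mu_\textsc{fi}}{\mu_\textsc{fs}}{\thread{\lcurr}{\cdots}{\Varid{t}_{1}}}\lto^*\Conid{M}_{1}}, the simulation yields a corresponding concurrent-LIO run of the embedded \ensuremath{\Varid{t}_{1}}; LIO's TSNI then produces a matching run of the embedded \ensuremath{\Varid{t}_{2}} ending in an \ensuremath{\Varid{l}}-equivalent configuration, which I translate back through the embedding to obtain the required \ensuremath{\Conid{M}_{2}} with \ensuremath{\Conid{M}_{1}\approx_{\Varid{l}}\Conid{M}_{2}}. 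The main obstacle will be the concurrent simulation step, specifically reconciling the thread-local \ensuremath{\textbf{withRefs}_\textsc{fs}} restriction with the globally shared store under the nested-reference encoding; in the sequential case a single computation manipulates the store so Theorem~\ref{thm:eq} suffices directly, whereas here the interplay between \ruleref{withRefs-Opt}, the per-thread reference bags, and the shared \ensuremath{\mu_\textsc{fs}} must be shown to be mirrored step-for-step, with Lemma~\ref{lem:lolFlowslCurr} supplying the invariant that keeps the mirroring sound.
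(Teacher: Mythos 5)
Your overall reduction strategy (expand via Figure~\ref{fig:fs-exts-semantics-impl}, then invoke the concurrent result of~\cite{stefan:addressing-covert}) matches the paper's, but there is a genuine gap at its center: you never address the fact that the embedding produces terms containing \ensuremath{\textbf{toLabeled}}---it appears in the translations of \ensuremath{\textbf{writeRef}_{\textsc{fs}}}, \ensuremath{\textbf{upgrade}_\textsc{fs}}, and \ensuremath{\textbf{downgrade}_\textsc{fs}}---while \lioconc{} and the concurrent \lio{} calculus whose TSNI theorem you plan to cite have \ensuremath{\textbf{toLabeled}} \emph{removed}; that removal is precisely what makes termination-sensitive non-interference hold in the concurrent setting. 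Consequently, ``expand the flow-sensitive operations and invoke concurrent TSNI'' does not go through as stated: the expanded threads are not terms of the language to which the cited theorem applies, and your proposed simulation would stall at any \ruleref{T-step} in which a translated thread must execute a \ensuremath{\textbf{toLabeled}} block, since the concurrent calculus has no rule for it.

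Closing this gap is exactly where the paper spends its effort, and it is also the real job of Lemma~\ref{lem:lolFlowslCurr}---not, as you use it, to make cross-thread mirroring ``safe'' against the attack of Figure~\ref{fig:fs-conc-attack}. By the lemma, the label \ensuremath{\Varid{l}} on the label of every reference in scope always satisfies \ensuremath{\Varid{l}\;\flows\;\lcurr}, so the first \ensuremath{\textbf{readRef}_{\textsc{fi}}} inside each \ensuremath{\textbf{toLabeled}} block of the embedded code never actually raises the current label. Since the current label cannot move inside the block, there is nothing for \ensuremath{\textbf{toLabeled}} to restore; the wrapper is semantically vacuous and can be erased, and only after this erasure does the embedding yield genuine concurrent \lio{} terms to which the TSNI result of~\cite{stefan:addressing-covert} applies. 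In short, you cite the right lemma but deploy it for the wrong purpose, and without the elimination of \ensuremath{\textbf{toLabeled}} your final invocation of concurrent TSNI is unsound.
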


\begin{proof} From Lemma~\ref{lem:lolFlowslCurr} and
  looking at the embeddings of \ensuremath{\textbf{writeRef}_{\textsc{fs}}} and \ensuremath{\textbf{upgrade}_\textsc{fs}}, we note
  that the first \ensuremath{\textbf{readRef}_{\textsc{fi}}} operation in each \ensuremath{\textbf{toLabeled}} block will
  be trying to raise the current label to \ensuremath{\Varid{l}}. However, since \ensuremath{\Varid{l}\;\flows\;\lcurr}, these operations will never effectively raise the
  current label. This means that using \ensuremath{\textbf{toLabeled}} is not necessary to
  preserve the semantics, because there is no need to restore the
  current label afterwards. As a result, and after removing
  \ensuremath{\textbf{toLabeled}} in these two cases, we note that the embedding produces
  valid concurrent \lio{} terms (which does not have \ensuremath{\textbf{toLabeled}}).


  Finally, by expanding all the flow-sensitive operations in \ensuremath{\Varid{t}_{1}} and \ensuremath{\Varid{t}_{2}}
  using their definition given in
  Figure~\ref{fig:fs-exts-semantics-impl}, we get terms in concurrent
  \lio{}.  Therefore, the result follows from the
  termination-sensitive non-interference of concurrent
  \lio{}~\citep{stefan:addressing-covert}.
\end{proof}

The detailed proofs for the results in this section can be found in
Appendix~\ref{app:embedding}.
We lastly remark a limitation: while we preserve non-interference when
embedding the flow-sensitive calculus in the original LIO, our
embedding includes no synchronization to ensure atomicity of the
flow-sensitive operations, so certain interleaving that break semantic
equivalence are possible. 

\subsection{Permissiveness}
In Section~\ref{sec:related} we compare the permissiveness of our
system with previous flow-sensitive IFC systems. Here, we
solely remark that the above results imply that our
flow-sensitive calculus is as permissive as flow-insensitive
LIO. In particular, any flow-insensitive LIO program can be trivially
converted to a flow-sensitive LIO program (without auto-upgrades) by
using flow-sensitive references instead of flow-insensitive ones.
Since these references would never be upgraded, they will behave just
like their flow-insensitive counterparts.
This means that all existing LIO programs can be run in our
flow-sensitive monitor. This includes Hails~\cite{giffin:hails}, a
web framework using LIO, on top of which a number of applications
have been built (e.g., GitStar\footnote{\url{www.gitstar.com}}, a
code-hosting web platform,
LearnByHacking\footnote{\url{www.learnbyhacking.org}}, a
blog/tutorial platform similar to School of Haskell, and
LambdaChair~\cite{stefan:2012:arxiv-flexible}, an EasyChair-like
conference review system).





\section{Comparison with other policies for label change}

In this section, we compare our enforcement mechanism with two
policies for label change: \emph{\nsu} (originally proposed by
Zdancewic~\cite{Zdancewic02programminglanguages}) and \emph{\pu}, a
more permissive version of the former by Austin and
Flanagan~\citep{Austin:Flanagan:PLAS09,Austin:Flanagan:PLAS10}.
We introduce a simple imperative language to simplify our comparison
with the languages implementing the aforementioned policies.  This simple
language has variables, \ensuremath{\mathbf{if}}-statements, a \ensuremath{\textbf{skip}} command that does nothing, and
an \ensuremath{\textbf{output}} command that is used to produce public outputs.  This language is
easily implemented in \liofs{} as syntactic sugar.  For example, a conditional
statement \ensuremath{\mathbf{if}\;\Conid{C};\Conid{A}\;\mathbf{else}\;\Conid{B}} is desugared to \ensuremath{\textbf{toLabeled}\;\textbf{H}\;(\mathbf{do}\;\Varid{b}\leftarrow \textbf{unlabel}\;\Conid{C};\mathbf{if}\;\Varid{b}\;\mathbf{then}\;\Conid{A}\;\mathbf{else}\;\Conid{B})}.  (Here, \ensuremath{\textbf{toLabeled}} is used to ensure that the current
label is restored after leaving the \ensuremath{\mathbf{if}}-statement.)

\subsection{\Nsu}
The {\nsu} discipline stops execution on any attempt to change the
label of a public variable inside a secret context. Our \liofs{}
calculus essentially implements this discipline as
well---see~\ruleref{upgradeRef} in Figure~\ref{fig:sos:fs}.  The
original presentation of {\nsu} allows for variables with a secret
label to be downgraded, as long as the original contents are
discarded. Our \liofs{} calculus similarly allows for this
with the \ensuremath{\textbf{downgrade}} operation.  Our approach differs
in also allowing code to explicitly upgrade a variable
before entering a secret context, permissively allowing writes to
originally-public variables in secret contexts.

\subsection{\Pu}

The {\pu} policy differs from {\nsu} in allowing code to
change the label of a public variables in secret contexts, but
subsequently disallowing branches on
such permissively-upgraded, or ``marked,'' variables.
When upgrading a public variable in a secret context, the
security label of the variable
is changed to \ensuremath{\textbf{P}} where \ensuremath{\textbf{L}\;\flows\;\textbf{H}\;\flows\;\textbf{P}}.

\begin{wrapfigure}{l}{4cm}
\small
\begin{hscode}\SaveRestoreHook
\column{B}{@{}>{\hspre}l<{\hspost}@{}}%
\column{4}{@{}>{\hspre}l<{\hspost}@{}}%
\column{E}{@{}>{\hspre}l<{\hspost}@{}}%
\>[B]{}\textbf{upgrade}\;\Varid{x}\;\textbf{H}{}\<[E]%
\\
\>[B]{}\mathbf{if}\;\Varid{h}{}\<[E]%
\\
\>[B]{}\hsindent{4}{}\<[4]%
\>[4]{}\Varid{x}\mathbin{:=}\Conid{True}{}\<[E]%
\\
\>[B]{}\mathbf{if}\;\Varid{x}{}\<[E]%
\\
\>[B]{}\hsindent{4}{}\<[4]%
\>[4]{}\textbf{skip}{}\<[E]%
\ColumnHook
\end{hscode}\resethooks
\caption{A secure program that {\liofs} accepts.\label{fig:pu-rejects}}
\end{wrapfigure}

In general, the permissiveness of our approach is incomparable with
that of {\pu}.
\begin{wrapfigure}{l}{4cm}
\vspace{-12pt}
\small
\begin{hscode}\SaveRestoreHook
\column{B}{@{}>{\hspre}l<{\hspost}@{}}%
\column{4}{@{}>{\hspre}l<{\hspost}@{}}%
\column{E}{@{}>{\hspre}l<{\hspost}@{}}%
\>[B]{}\mathbf{if}\;\Varid{h}{}\<[E]%
\\
\>[B]{}\hsindent{4}{}\<[4]%
\>[4]{}\Varid{x}\mathbin{:=}\Conid{True}{}\<[E]%
\\
\>[B]{}\mathbf{if}\;\Varid{x}{}\<[E]%
\\
\>[B]{}\hsindent{4}{}\<[4]%
\>[4]{}\textbf{skip}{}\<[E]%
\ColumnHook
\end{hscode}\resethooks
\caption{A secure program that {\pu} rejects and {\lioafs} accepts.\label{fig:pu-rejects2}}
\vspace{-10pt}
\end{wrapfigure}
For example, without the \ensuremath{\textbf{upgrade}} operation, \liofs{} is as
expressive as {\nsu}, and thus less permissive than {\pu}.
But, with \ensuremath{\textbf{upgrade}} we can write programs in \liofs{} that would be
rejected by a {\pu} monitor.
Figure~\ref{fig:pu-rejects} shows an example of one such case.
In the example, the \ensuremath{\textbf{upgrade}} operation is used to ensure that reference \ensuremath{\Varid{x}}, which
would be marked \ensuremath{\textbf{P}} by {\pu}, ends up as \ensuremath{\textbf{H}} in all runs;
without the \ensuremath{\textbf{upgrade}}, a {\pu} monitor would reject the branch on \ensuremath{\Varid{x}}.
By inserting \ensuremath{\textbf{upgrade}} operations in the ``right'' places, our approach can
become more flexible than {\pu}.


Of course, the challenge lies in upgrading references in a permissive
fashion.
And automatically upgrading references whenever the current label is
raised is not necessarily more permissive than a {\pu} monitor.
Indeed, the permissiveness of {\pu} and \lioafs{} are also incomparable.

Figure~\ref{fig:pu-rejects2} shows a program that is rejected by {\pu}
but accepted by \lioafs{}. With {\pu}, the first branch on \ensuremath{\Varid{h}}
upgrades \ensuremath{\Varid{x}} to \ensuremath{\textbf{P}}, which leads to a failure when subsequently
trying to branch \ensuremath{\Varid{x}}. With \lioafs{}, on the other hand, reference \ensuremath{\Varid{x}} would be
upgraded to \ensuremath{\textbf{H}}, permissively allowing the second branch.


%
Conversely, Figure~\ref{fig:pu-accepts} shows a secure program that is
accepted by {\pu} but rejected by \lioafs{}. In this program, there
are two variables in scope: \ensuremath{\Varid{x}} and \ensuremath{\Varid{y}}, both initially public.  In
the first secret conditional block we assign to \ensuremath{\Varid{y}}, which with {\pu}
only upgrades variable \ensuremath{\Varid{y}} to \ensuremath{\textbf{P}}; \ensuremath{\Varid{x}} remains \ensuremath{\textbf{L}} and thus the
second branch is executed, producing a public output.  With \lioafs{},
however, unlabeling \ensuremath{\Varid{h}} (an operation implicit in the first
conditional, which inspects \ensuremath{\Varid{h}}) auto-upgrades both \ensuremath{\Varid{x}} and \ensuremath{\Varid{y}} to
\ensuremath{\textbf{H}}. As a result, the current label at the point of the \ensuremath{\textbf{output}} is
\ensuremath{\textbf{H}}, causing a failure.

\begin{figure}[b]
\small
\begin{minipage}{.5\textwidth}
\begin{hscode}\SaveRestoreHook
\column{B}{@{}>{\hspre}l<{\hspost}@{}}%
\column{4}{@{}>{\hspre}l<{\hspost}@{}}%
\column{E}{@{}>{\hspre}l<{\hspost}@{}}%
\>[B]{}\Varid{x},\Varid{y}\mathbin{:=}\Conid{True}{}\<[E]%
\\
\>[B]{}\mathbf{if}\;\Varid{h}{}\<[E]%
\\
\>[B]{}\hsindent{4}{}\<[4]%
\>[4]{}\Varid{y}\mathbin{:=}\Conid{False}{}\<[E]%
\\
\>[B]{}\mathbf{if}\;\Varid{x}{}\<[E]%
\\
\>[B]{}\hsindent{4}{}\<[4]%
\>[4]{}\textbf{output}\;(\mathrm{1}){}\<[E]%
\ColumnHook
\end{hscode}\resethooks
\caption{A secure program that {\pu} accepts and {\lioafs}
  rejects.\label{fig:pu-accepts}}
\end{minipage}
\begin{minipage}{.5\textwidth}
\small
\begin{hscode}\SaveRestoreHook
\column{B}{@{}>{\hspre}l<{\hspost}@{}}%
\column{3}{@{}>{\hspre}l<{\hspost}@{}}%
\column{4}{@{}>{\hspre}l<{\hspost}@{}}%
\column{6}{@{}>{\hspre}l<{\hspost}@{}}%
\column{E}{@{}>{\hspre}l<{\hspost}@{}}%
\>[B]{}\Varid{x},\Varid{y}\mathbin{:=}\Conid{False}{}\<[E]%
\\
\>[B]{}\textbf{withRefs}_\textsc{fs}\;(\Varid{y})\;\{\mskip1.5mu {}\<[E]%
\\
\>[B]{}\hsindent{3}{}\<[3]%
\>[3]{}\mathbf{if}\;\Varid{h}{}\<[E]%
\\
\>[3]{}\hsindent{3}{}\<[6]%
\>[6]{}\Varid{y}\mathbin{:=}\Conid{True}{}\<[E]%
\\
\>[B]{}\mskip1.5mu\}{}\<[E]%
\\
\>[B]{}\mathbf{if}\;\Varid{x}{}\<[E]%
\\
\>[B]{}\hsindent{4}{}\<[4]%
\>[4]{}\textbf{output}\;(\mathrm{1}){}\<[E]%
\ColumnHook
\end{hscode}\resethooks
\caption{A secure program that {\pu} and {\lioafs} with \ensuremath{\textbf{withRefs}_\textsc{fs}}
  accept.\label{fig:pu-accepts2}}
\vspace{-30pt}
\end{minipage}
\end{figure}
One way to address the permissiveness issues of \lioafs{} is by using
\ensuremath{\textbf{withRefs}_\textsc{fs}} to delimit the scope of the upgrades.
Figure~\ref{fig:pu-accepts2} shows a modified version of the previous
example, where \ensuremath{\Varid{y}} is explicitly marked as the only variable that
should be upgraded in the first branch. As a consequence, \ensuremath{\Varid{x}} does not get
upgraded and the program does not fail---the \ensuremath{\textbf{output}} operation is
allowed.  More generally, if there is enough static information to guide
the use of \ensuremath{\textbf{withRefs}_\textsc{fs}}, we believe that \lioafs{} could match (or exceed) the
permissiveness of {\pu}.

\section{Related work}
\label{sec:related}

The \emph{label on the label} could be seen as a fixed label that dictates which
principals can read or modify the policy (inner label) of a flow-sensitive
entity. In a different setting, trust management frameworks have
explored this idea~\cite{4556677}, where role-based rules are labeled to
restrict the view on policies---the mere presence of certain policies could
become inappropriate conduits of information.

Several authors propose an
\emph{existence security label} to remove leaks due to the termination covert
channel~\cite{DBLP:conf/csfw/RafnssonHS12,Rafnson:2013} or certain behaviors
in dynamic nested data structures~\cite{Russo:2009,DBLP:conf/csfw/HedinS12}. While the
existence security label and the label on the label are structurally isomorphic,
they are used for different purposes and in different scenarios, e.g., 
inspecting labels is not allowed in~\cite{Russo:2009,DBLP:conf/csfw/HedinS12,DBLP:conf/csfw/RafnssonHS12,Rafnson:2013}.

Hunt and Sands~\citep{Hunt:2006} show the equivalence (modulo code
transformation) between flow-sensitive and flow-insensitive type-systems.
Russo and Sabelfeld~\citep{Russo:2010} formally pin down
the challenge of mutable labels when using purely dynamic monitors. They prove that
monitors require static analysis in order to be more permissive than traditional
flow-sensitive type-systems.
Austin and Flanagan propose a \emph{privatization} operation to boost the
permissiveness of {\pu}. This technique has been recently generalized to
arbitrary lattices~\citep{Bichhawat:2014}.  Moreover, the privatization
operation can only enforce non-interference when outputs are suppressed after
branching on a marked flow-sensitive reference. Unfortunately, none of the
mentioned work consider flow-sensitive in the presence of concurrency.  In fact,
the notion of {\pu} does not easily generalize to the concurrent setting, as
this would require tracking occurrences of branches across threads.

Recently, Hritcu et al.~\citep{Breeze} propose a floating-label
system called Breeze. Like LIO, Breeze allows changes in the current
context label (i.e., \ensuremath{\Varid{pc}}) and only considered values with
flow-insensitive labels. Given the design similarities with
LIO~\citep{stefan:lio}, we believe that our results could be easily
adapted to Breeze.

Hedin et al. \citep{Hedin13} recently developed JSFlow, an IFC flow-sensitive
monitor for JavaScript. The monitor uses the {\nsu} label changing
policy. To overcome some of the restrictions imposed by this discipline, the
primitive \textbf{upgrade} is introduced to explicitly change labels. Our
upgrade operation resembles that proposed by Hedin et al. Moreover, the
extension to \ensuremath{\textbf{unlabel}} as described Section \ref{sec:flow-sensitive} can be seen as an
automatic application of \textbf{upgrade} every time that the current label gets
raised. Using testing, Birgisson et al.~\citep{Arnar2012} automatically insert
\textbf{upgrade} instructions to boost the permissiveness of {\nsu}.
We further extend this concept of (automatic) \textbf{upgrade}s to a concurrent setting.

The Operating System IFC community has also treated the mutable label
problem in the presence of purely dynamic monitors.
Specifically, IFC OSes such as
Asbestos~\cite{Efstathopoulos:2005}, HiStar~\cite{zeldovich:histar},
and Flume~\cite{krohn:flume} distinguish between subjects
(processes), and objects (files, sockets, etc.) such that the
security labels for objects are immutable, while subject labels change
according to the sensitivity of data being read.
As in language-based IFC systems, changing the label of subjects and
object can become a covert channel, if not handled appropriated.
Hence, HiStar and Flume require that the label of a subject be
done explicitly by the subject code.
Asbestos, on the other hand, allowed (unsafe) changes to labels as the result
of receiving messages under specific and safe conditions.
Our work extends these concepts with a level of indirection to allow
for changes in object labels.

Coarse-grained IFC enforcements, similar to the ones found in IFC OS
work, have been applied to web browsers. e.g, BFlow~\citep{Yip:2009}
and COWL~\cite{stefan:2014:protecting} track the flow of information at the granularity of
protection zones and context, respectively. Both can be understood as
tracking IFC at the iframe-level granularity. As in LIO, an iframe's
label, i.e., a subject's label, must be explicitly updated. While our
techniques can be applied to COWL, BFlow
does not have fine-grained labeled objects; hence the flow-sensitivity
result is only applicable at the protection zone level.  By
taking a more fine grained approach, the DOM-tree could be thought of
as being composed of flow-sensitive objects, whose security
labels change according to the dynamic behavior of the web
page, as done in~\citep{Russo:2009}.

Hoare-like logics for IFC are often
flow-sensitive~\citep[e.g.][]{Amtoft:2006,Nanevski:2011}. Different from dynamic
approaches, these logics have the ability to observe all the execution paths and
safely approximate label changes. As a result, no leaks due to label changes are
present in provably secure programs.
Le Guernic et al.~\citep{LeGuernic:2006,Guernic:2007:ACM} combine dynamic and
static checks in a flow-sensitive execution monitor.
For a flow-sensitive type-system, Foster et
al.~\cite{Foster:2002:FTQ:512529.512531} propose a \textbf{restrict}
primitive that limits the use of variables' aliases in a block of
code. Our \ensuremath{\textbf{withRefs}_\textsc{fs}} is similar to \textbf{restrict} in being used to
increase the permissiveness of the analysis.


\section{Conclusions}
\label{sec:conclusion}

We presented an extension of LIO with flow-sensitive references.
As in previous flow-sensitive work, our approach allows secure label changes
using \ensuremath{\textbf{upgrade}} and \ensuremath{\textbf{downgrade}} operations, as a way to boost the permissiveness
of the IFC system, i.e., \ensuremath{\textbf{upgrade}} can be used to allow for the encoding of
programs that would otherwise be rejected by the IFC monitor.
Since manually inserting \ensuremath{\textbf{upgrade}} operations can be cumbersome, we
extend the calculus to automatically insert upgrades whenever the
current label is raised, while still giving programmers fine-grained
control over which references untrusted code can upgrade.
Importantly, our approach extends to a concurrent setting.
To the best of our knowledge, this is the first work to address the
problem of flow-sensitive label changes for a concurrent, dynamic IFC
language.
A further insight of this work was to show that, by leveraging nested
labeled objects, both the sequential and concurrent calculi with
flow-sensitive references can be encoded using only flow-insensitive
constructs.
As a consequence, our soundness proof can be reduced to an invocation
of previous results for LIO.


\ifextended

\appendix

\section{Semantics for the base calculus}
\label{sec:app:sem}

\begin{figure}[t] 
\small
\begin{mathpar}
\inferrule[app]
{ } { \ensuremath{\Red{E}\;[\mskip1.5mu (\lambda \Varid{x}.\Varid{t}_{1})\;\Varid{t}_{2}\mskip1.5mu]\lto\Red{E}\;[\mskip1.5mu \{\mskip1.5mu \Varid{t}_{2}\mathbin{/}\Varid{x}\mskip1.5mu\}\;\Varid{t}_{1}\mskip1.5mu]} }
\and
\inferrule[fix]
{ } { \ensuremath{\Red{E}\;[\mskip1.5mu \mathbf{fix}\;(\lambda \Varid{x}.\Varid{t})\mskip1.5mu]\lto\Red{E}\;[\mskip1.5mu \{\mskip1.5mu \mathbf{fix}\;(\lambda \Varid{x}.\Varid{t})\mathbin{/}\Varid{x}\mskip1.5mu\}\;\Varid{t}}] }
\and
\inferrule[ifTrue]
{ } { \ensuremath{\Red{E}\;[\mskip1.5mu \mathbf{if}\;\Conid{True}\;\mathbf{then}\;\Varid{t}_{2}\;\mathbf{else}\;\Varid{t}_{3}\mskip1.5mu]\lto\Red{E}\;[\mskip1.5mu \Varid{t}_{2}\mskip1.5mu]} }
\and
\inferrule[ifFalse]
{ } { \ensuremath{\Red{E}\;[\mskip1.5mu \mathbf{if}\;\Conid{False}\;\mathbf{then}\;\Varid{t}_{2}\;\mathbf{else}\;\Varid{t}_{3}\mskip1.5mu]\lto\Red{E}\;[\mskip1.5mu \Varid{t}_{3}\mskip1.5mu]} }
\and
\inferrule[labelOp]
{ \ensuremath{\Varid{v}\mathrel{=}\llbracket \Varid{l}_{1}\;\otimes\;\Varid{l}_{2}\rrbracket_{\lattice}} }
{ \ensuremath{\Red{E}\;[\mskip1.5mu \Varid{l}_{1}\;\otimes\;\Varid{l}_{2}\mskip1.5mu]\lto\Red{E}\;[\mskip1.5mu \Varid{v}\mskip1.5mu]} }
\and
\inferrule[return]
{ } { \ensuremath{\conf{\Sigma}{\textbf{\Blue{E}}\;[\mskip1.5mu \mathbf{return}\;\Varid{t}\mskip1.5mu]}\lto\conf{\Sigma}{\textbf{\Blue{E}}\;[\mskip1.5mu LIO^{\Red{\textsf{{\tiny TCB}}}}\;\Varid{t}\mskip1.5mu]}} }
\and
\inferrule[bind]
{ }
{ \ensuremath{\conf{\Sigma}{\textbf{\Blue{E}}\;[\mskip1.5mu (LIO^{\Red{\textsf{{\tiny TCB}}}}\;\Varid{t}_{1})\bind \Varid{t}_{2}\mskip1.5mu]}\lto\conf{\Sigma}{\textbf{\Blue{E}}\;[\mskip1.5mu \Varid{t}_{2}\;\Varid{t}_{1}\mskip1.5mu]}} }
\end{mathpar}
\caption{Reduction rules for standard \lio{} terms.\label{fig:sos:rules}}
\end{figure}

The reduction rules for pure and monadic terms are given in
Figure~\ref{fig:sos:rules}.
We define substitution \ensuremath{\{\mskip1.5mu \Varid{t}_{2}\mathbin{/}\Varid{x}\mskip1.5mu\}\;\Varid{t}_{1}} in the
usual way: homomorphic on all operators and renaming bound names to avoid
captures.
Our label operations \ensuremath{\lub}, \ensuremath{\glb}, and \ensuremath{\flows} rely on the label-specific
implementation of these lattice operators, as used in the premise of rule
\ruleref{labelOp}; we use the meta-level partial function \ensuremath{\llbracket \cdot\rrbracket_{\lattice}}, which
maps terms to values, to precisely capture this implementation detail.

%
The reduction rules for pure terms are standard. For instance, in rule
\ruleref{ifTrue}, when the branch has a true condition, i.e., \ensuremath{\Red{E}\;[\mskip1.5mu \mathbf{if}\;\Conid{True}\;\mathbf{then}\;\Varid{t}_{2}\;\mathbf{else}\;\Varid{t}_{3}\mskip1.5mu]}, it reduces to the then branch (\ensuremath{\Red{E}\;[\mskip1.5mu \Varid{t}_{2}\mskip1.5mu]}).  The
rest are self-explanatory and we do not discuss them any further.
%

%
%
Since all the IFC checks are performed by individual LIO terms, the
definition for \ensuremath{\mathbf{return}} and \ensuremath{(\bind )} are trivial.
The former simply reduces to a monadic value by wrapping the term with
the \ensuremath{LIO^{\Red{\textsf{{\tiny TCB}}}}} constructor, while the latter evaluates the left-hand
term and supplies the result to the right-hand term, as usual.

\section{Attack on naive flow-sensitive references}
\label{sec:app:fs-attack2}
\begin{figure}
\small
\begin{hscode}\SaveRestoreHook
\column{B}{@{}>{\hspre}l<{\hspost}@{}}%
\column{3}{@{}>{\hspre}l<{\hspost}@{}}%
\column{9}{@{}>{\hspre}l<{\hspost}@{}}%
\column{24}{@{}>{\hspre}l<{\hspost}@{}}%
\column{E}{@{}>{\hspre}l<{\hspost}@{}}%
\>[B]{}\Varid{leakRef}\mathbin{::}Ref_{\textsf{{\tiny FS}}}\;{}\;\Conid{Bool}\to \Conid{LIO}\;\Conid{Bool}{}\<[E]%
\\
\>[B]{}\Varid{leakRef}\;\Varid{href}\mathrel{=}\mathbf{do}{}\<[E]%
\\
\>[B]{}\hsindent{3}{}\<[3]%
\>[3]{}\Varid{lref}{}\<[9]%
\>[9]{}\leftarrow \textbf{newRef}_{{}}\;\textbf{L}\;\Conid{True}{}\<[E]%
\\
\>[B]{}\hsindent{3}{}\<[3]%
\>[3]{}\Varid{tmp}{}\<[9]%
\>[9]{}\leftarrow \textbf{newRef}_{{}}\;\textbf{L}\;\Conid{False}{}\<[E]%
\\
\>[B]{}\hsindent{3}{}\<[3]%
\>[3]{}\textbf{toLabeled}\;\textbf{H}\mathbin{\$}\mathbf{do}\;{}\<[24]%
\>[24]{}\Varid{h}\leftarrow \textbf{readRef}_{{}}\;\Varid{href}{}\<[E]%
\\
\>[24]{}\textbf{when}\;\Varid{h}\mathbin{\$}\textbf{writeRef}_{{}}\;\Varid{tmp}\;\Conid{True}{}\<[E]%
\\
\>[B]{}\hsindent{3}{}\<[3]%
\>[3]{}\textbf{toLabeled}\;\textbf{H}\mathbin{\$}\mathbf{do}\;{}\<[24]%
\>[24]{}\Varid{t}\leftarrow \textbf{readRef}_{{}}\;\Varid{tmp}{}\<[E]%
\\
\>[24]{}\textbf{when}\;(\neg \;\Varid{t})\mathbin{\$}\textbf{writeRef}_{{}}\;\Varid{lref}\;\Conid{False}{}\<[E]%
\\
\>[B]{}\hsindent{3}{}\<[3]%
\>[3]{}\textbf{readRef}_{{}}\;\Varid{lref}{}\<[E]%
\ColumnHook
\end{hscode}\resethooks
\vspace{-5pt}
\caption{An attack in LIO with naive flow-sensitive reference
extension without \ensuremath{\textbf{labelOf}_{{}}}.\label{fig:fs-attack2}}
\end{figure}
As in the attack of Figure~\ref{fig:fs-attack}, the \ensuremath{\Varid{leakRef}} of
Figure~\ref{fig:fs-attack2} can be used to leak the value stored in a \ensuremath{\textbf{H}}
reference \ensuremath{\Varid{href}}, while keeping the current label \ensuremath{\textbf{L}}, without using \ensuremath{\textbf{labelOf}_{{}}}.
Internally, the value is leaked into public reference \ensuremath{\Varid{lref}} by leveraging the
fact that, based on a secret value, the label of a public reference (\ensuremath{\Varid{tmp}}) can
be changed (or not).
In the first \ensuremath{\textbf{toLabeled}} block, if \ensuremath{\Varid{h}\equiv \Conid{True}}, then the label of \ensuremath{\Varid{tmp}} is raised
to \ensuremath{\textbf{H}} and its value is set to \ensuremath{\Conid{True}}.
In the second \ensuremath{\textbf{toLabeled}} block, we read \ensuremath{\Varid{tmp}}, which may raise the current
label to \ensuremath{\textbf{H}} if the secret is \ensuremath{\Conid{True}} (and thus \ensuremath{\Varid{tmp}} was upgraded).
Indeed, if the secret is \ensuremath{\Conid{True}} (and thus \ensuremath{\Varid{t}\equiv \Conid{True}}) we leave the public
reference intact: \ensuremath{\Conid{True}}.
However, if the secret is \ensuremath{\Conid{False}}, the \ensuremath{\Varid{tmp}} reference is not modified in the
first \ensuremath{\textbf{toLabeled}} block and thus when reading it in the second \ensuremath{\textbf{toLabeled}}
block, the current label remains \ensuremath{\textbf{L}}, and since \ensuremath{\Varid{t}\equiv \Conid{False}}, we write \ensuremath{\Conid{False}}
into the public reference.
In both cases the value stored in \ensuremath{\Varid{lref}} corresponds to that of \ensuremath{\Varid{href}}, yet
leaving the current label and the label of \ensuremath{\Varid{lref}} intact (\ensuremath{\textbf{L}}).

\section{Embedding Theorem}
\label{app:embedding}

In this section we prove that the embedding from \liofs{} into \lio{}
preserves semantics.

We will use the following lemma for single \liofs{} steps:

\begin{lemma} [Single-step embedding]
\label{lem:step-embedding}
Let \ensuremath{\Varid{t}} be a well-typed term in \liofs{}.  Then if \ensuremath{\conf{\Sigma}{\Varid{t}}\lto\conf{\Sigma'}{\Varid{t}'}}, then there is a configuration \ensuremath{\Conid{Y}} such that \ensuremath{\conf{\llbracket \Sigma\rrbracket_{\textsc{fi}}}{\llbracket \Varid{t}\rrbracket_{\textsc{fi}}^{\Sigma}}\lto^*\Conid{Y}} and \ensuremath{\conf{\llbracket \Sigma'\rrbracket_{\textsc{fi}}}{\llbracket \Varid{t}'\rrbracket_{\textsc{fi}}^{\Sigma}}\lto^*\Conid{Y}}, i.e. \ensuremath{\conf{\llbracket \Sigma\rrbracket_{\textsc{fi}}}{\llbracket \Varid{t}\rrbracket_{\textsc{fi}}^{\Sigma}}} and \ensuremath{\conf{\llbracket \Sigma'\rrbracket_{\textsc{fi}}}{\llbracket \Varid{t}'\rrbracket_{\textsc{fi}}^{\Sigma}}} are
$\beta$-equivalent.
\end{lemma}

\begin{proof}

  Case analysis on the next redex in \ensuremath{\Varid{t}}. Most cases show a stronger
  version of the lemma, i.e. that \ensuremath{\conf{\Sigma}{\Varid{t}}\lto\conf{\Sigma'}{\Varid{t}'}} implies
  \ensuremath{\conf{\llbracket \Sigma\rrbracket_{\textsc{fi}}}{\llbracket \Varid{t}\rrbracket_{\textsc{fi}}^{\Sigma}}\lto^*\conf{\llbracket \Sigma'\rrbracket_{\textsc{fi}}}{\llbracket \Varid{t}'\rrbracket_{\textsc{fi}}^{\Sigma}}}.

{\bf Case} \ensuremath{\textbf{\Blue{E}}\;[\mskip1.5mu \textbf{newRef}_{\textsc{fs}}\;\Varid{l}\;\Varid{t}\mskip1.5mu]}.

We have \ensuremath{\conf{\Sigma}{\textbf{\Blue{E}}\;[\mskip1.5mu \textbf{newRef}_{\textsc{fs}}\;\Varid{l}\;\Varid{t}\mskip1.5mu]}\lto\conf{\Sigma'}{\textbf{\Blue{E}}\;[\mskip1.5mu \mathbf{return}\;(Ref^{\Red{\textsf{{\tiny TCB}}}}_{\textsc{fs}}\;\Varid{a})\mskip1.5mu]}}, where \ensuremath{\Sigma'\mathrel{=}\Sigma\;[\mskip1.5mu \mu_\textsc{fs}\;\mapsto\;\Sigma.\mu_\textsc{fs}\;[\mskip1.5mu \Varid{a}\;\mapsto\;Lb^{\Red{\textsf{{\tiny TCB}}}}\;\Sigma.\lcurr\;(Lb^{\Red{\textsf{{\tiny TCB}}}}\;\Varid{l}\;\Varid{t})\mskip1.5mu]\mskip1.5mu]}, and we know that \ensuremath{\Sigma.\lcurr\;\flows\;\Varid{l}}.

Let \ensuremath{\Sigma_{1}\mathrel{=}\llbracket \Sigma\rrbracket_{\textsc{fi}}}. We argue

\begin{hscode}\SaveRestoreHook
\column{B}{@{}>{\hspre}l<{\hspost}@{}}%
\column{7}{@{}>{\hspre}l<{\hspost}@{}}%
\column{38}{@{}>{\hspre}l<{\hspost}@{}}%
\column{E}{@{}>{\hspre}l<{\hspost}@{}}%
\>[7]{}\conf{\Sigma_{1}}{\llbracket \textbf{\Blue{E}}\;[\mskip1.5mu \textbf{newRef}_{\textsc{fs}}\;\Varid{l}\;\Varid{t}\mskip1.5mu]\rrbracket_{\textsc{fi}}^{\Sigma}}{}\<[E]%
\\
\>[B]{}\lto{}\<[7]%
\>[7]{}\conf{\Sigma_{1}'}{(\llbracket \textbf{\Blue{E}}\rrbracket_{\textsc{fi}}^{\Sigma})\;[\mskip1.5mu \mathbf{do}\;\lcurr\leftarrow \mathbf{getLabel};\mathbf{return}\;(\Varid{wrap}\;(Ref^{\Red{\textsf{{\tiny TCB}}}}_{\textsc{fi}}\;\lcurr\;\Varid{a}))\mskip1.5mu]}{}\<[E]%
\\
\>[7]{}(\Sigma_{1}'\mathrel{=}\Sigma_{1}\;[\mskip1.5mu \mu_\textsc{fi}\;\mapsto\;\Sigma_{1}.\mu_\textsc{fi}\;[\mskip1.5mu \Varid{i}\;\mapsto\;Lb^{\Red{\textsf{{\tiny TCB}}}}\;\Varid{l}\;\Varid{t}\mskip1.5mu]\mskip1.5mu]){}\<[E]%
\\
\>[B]{}\lto{}\<[7]%
\>[7]{}\conf{\Sigma_{1}''}{(\llbracket \textbf{\Blue{E}}\rrbracket_{\textsc{fi}}^{\Sigma})\;[\mskip1.5mu \mathbf{return}\;(\Varid{wrap}\;(Ref^{\Red{\textsf{{\tiny TCB}}}}_{\textsc{fi}}\;\lcurr\;\Varid{a}))\mskip1.5mu]}{}\<[E]%
\\
\>[7]{}(\Sigma_{1}''\mathrel{=}\Sigma_{1}\;[\mskip1.5mu \mu_\textsc{fi}\;\mapsto\;\Sigma_{1}.\mu_\textsc{fi}\;[\mskip1.5mu {}\<[38]%
\>[38]{}\Varid{i}\;\mapsto\;Lb^{\Red{\textsf{{\tiny TCB}}}}\;\Varid{l}\;\Varid{t};{}\<[E]%
\\
\>[38]{}\Varid{a}\;\mapsto\;Lb^{\Red{\textsf{{\tiny TCB}}}}\;\lcurr\;(Ref^{\Red{\textsf{{\tiny TCB}}}}_{\textsc{fi}}\;\Varid{l}\;\Varid{i})\mskip1.5mu]\mskip1.5mu]){}\<[E]%
\ColumnHook
\end{hscode}\resethooks

We now have to check that \ensuremath{\llbracket \mathbf{return}\;(Ref^{\Red{\textsf{{\tiny TCB}}}}_{\textsc{fs}}\;\Varid{a})\rrbracket_{\textsc{fi}}^{\Sigma'}\mathrel{=}\mathbf{return}\;(\Varid{wrap}\;(Ref^{\Red{\textsf{{\tiny TCB}}}}_{\textsc{fi}}\;\lcurr\;\Varid{a}))} and \ensuremath{\llbracket \Sigma'\rrbracket_{\textsc{fi}}\mathrel{=}\Sigma_{1}''}, which follow directly from
the definition of \ensuremath{\llbracket \cdot \rrbracket_{\textsc{fi}}} for references and states.
\medskip

{\bf Case} \ensuremath{\textbf{\Blue{E}}\;[\mskip1.5mu \textbf{readRef}_{\textsc{fs}}\;(Ref^{\Red{\textsf{{\tiny TCB}}}}_{\textsc{fs}}\;\Varid{a})\mskip1.5mu]}.

We have \ensuremath{\conf{\Sigma}{\textbf{\Blue{E}}\;[\mskip1.5mu \textbf{readRef}_{\textsc{fs}}\;(Ref^{\Red{\textsf{{\tiny TCB}}}}_{\textsc{fs}}\;\Varid{a})\mskip1.5mu]}\lto\conf{\Sigma}{\textbf{\Blue{E}}\;[\mskip1.5mu \textbf{unlabel}\;(Lb^{\Red{\textsf{{\tiny TCB}}}}\;(\Varid{l}\;\lub\;\Varid{l'})\;\Varid{t})\mskip1.5mu]}}, where \ensuremath{\Sigma.\mu_\textsc{fs}\;(\Varid{a})\mathrel{=}Lb^{\Red{\textsf{{\tiny TCB}}}}\;\Varid{l}\;(Lb^{\Red{\textsf{{\tiny TCB}}}}\;\Varid{l'}\;\Varid{t})}.

Let \ensuremath{\Sigma_{1}\mathrel{=}\llbracket \Sigma\rrbracket_{\textsc{fi}}}. We argue

\begin{hscode}\SaveRestoreHook
\column{B}{@{}>{\hspre}l<{\hspost}@{}}%
\column{7}{@{}>{\hspre}l<{\hspost}@{}}%
\column{E}{@{}>{\hspre}l<{\hspost}@{}}%
\>[7]{}\conf{\Sigma_{1}}{\llbracket \textbf{\Blue{E}}\;[\mskip1.5mu \textbf{readRef}_{\textsc{fs}}\;(Ref^{\Red{\textsf{{\tiny TCB}}}}_{\textsc{fs}}\;\Varid{a})\mskip1.5mu]\rrbracket_{\textsc{fi}}^{\Sigma}}{}\<[E]%
\\
\>[B]{}\lto{}\<[7]%
\>[7]{}\conf{\Sigma_{1}}{(\llbracket \textbf{\Blue{E}}\rrbracket_{\textsc{fi}}^{\Sigma})\;[\mskip1.5mu \textbf{readRef}_{\textsc{fi}}\;(\llbracket Ref^{\Red{\textsf{{\tiny TCB}}}}_{\textsc{fs}}\;\Varid{a}\rrbracket_{\textsc{fi}}^{\cdot })\bind \textbf{readRef}_{\textsc{fi}}\mskip1.5mu]}{}\<[E]%
\\
\>[B]{}\lto{}\<[7]%
\>[7]{}\conf{\Sigma_{1}}{(\llbracket \textbf{\Blue{E}}\rrbracket_{\textsc{fi}}^{\Sigma})\;[\mskip1.5mu \textbf{unlabel}\;(\Sigma_{1}.\mu_\textsc{fi}\;(\Varid{a}))\bind \textbf{readRef}_{\textsc{fi}}\mskip1.5mu]}{}\<[E]%
\\
\>[B]{}\lto{}\<[7]%
\>[7]{}\conf{\Sigma_{1}'}{(\llbracket \textbf{\Blue{E}}\rrbracket_{\textsc{fi}}^{\Sigma})\;[\mskip1.5mu \mathbf{return}\;(Ref^{\Red{\textsf{{\tiny TCB}}}}_{\textsc{fi}}\;\Varid{l'}\;\Varid{i})\bind \textbf{readRef}_{\textsc{fi}}\mskip1.5mu]}{}\<[E]%
\\
\>[7]{}(\Sigma_{1}'\mathrel{=}\Sigma_{1}\;[\mskip1.5mu \lcurr\;\mapsto\;\Sigma_{1}.\lcurr\;\lub\;\Varid{l}\mskip1.5mu]){}\<[E]%
\\
\>[B]{}\lto{}\<[7]%
\>[7]{}\conf{\Sigma_{1}'}{(\llbracket \textbf{\Blue{E}}\rrbracket_{\textsc{fi}}^{\Sigma})\;[\mskip1.5mu \textbf{readRef}_{\textsc{fi}}\;(Ref^{\Red{\textsf{{\tiny TCB}}}}_{\textsc{fi}}\;\Varid{l'}\;\Varid{i})\mskip1.5mu]}{}\<[E]%
\\
\>[B]{}\lto{}\<[7]%
\>[7]{}\conf{\Sigma_{1}'}{(\llbracket \textbf{\Blue{E}}\rrbracket_{\textsc{fi}}^{\Sigma})\;[\mskip1.5mu \textbf{unlabel}\;(\Sigma_{1}'.\mu_\textsc{fi}\;(\Varid{i}))\mskip1.5mu]}{}\<[E]%
\\
\>[B]{}\lto{}\<[7]%
\>[7]{}\conf{\Sigma_{1}''}{(\llbracket \textbf{\Blue{E}}\rrbracket_{\textsc{fi}}^{\Sigma})\;[\mskip1.5mu \mathbf{return}\;\Varid{t}\mskip1.5mu]}{}\<[E]%
\\
\>[7]{}(\Sigma_{1}''\mathrel{=}\Sigma_{1}'\;[\mskip1.5mu \lcurr\;\mapsto\;\Sigma_{1}'.\lcurr\;\lub\;\Varid{l'}\mskip1.5mu]){}\<[E]%
\ColumnHook
\end{hscode}\resethooks

Now if we consider \ensuremath{\conf{\llbracket \Sigma\rrbracket_{\textsc{fi}}}{\llbracket \textbf{\Blue{E}}\;[\mskip1.5mu \textbf{unlabel}\;(Lb^{\Red{\textsf{{\tiny TCB}}}}\;(\Varid{l}\;\lub\;\Varid{l'})\;\Varid{t})\mskip1.5mu]\rrbracket_{\textsc{fi}}^{\cdot }}}, we have

\begin{hscode}\SaveRestoreHook
\column{B}{@{}>{\hspre}l<{\hspost}@{}}%
\column{7}{@{}>{\hspre}l<{\hspost}@{}}%
\column{E}{@{}>{\hspre}l<{\hspost}@{}}%
\>[7]{}\conf{\llbracket \Sigma\rrbracket_{\textsc{fi}}}{\llbracket \textbf{\Blue{E}}\;[\mskip1.5mu \textbf{unlabel}\;(Lb^{\Red{\textsf{{\tiny TCB}}}}\;(\Varid{l}\;\lub\;\Varid{l'})\;\Varid{t})\mskip1.5mu]\rrbracket_{\textsc{fi}}^{\cdot }}{}\<[E]%
\\
\>[B]{}\lto{}\<[7]%
\>[7]{}\conf{\Sigma_{2}}{(\llbracket \textbf{\Blue{E}}\rrbracket_{\textsc{fi}}^{\Sigma})\;[\mskip1.5mu \mathbf{return}\;\Varid{t}\mskip1.5mu]{}\<[E]%
\\
\>[7]{}(\Sigma_{2}\mathrel{=}\Sigma_{2}\;[\mskip1.5mu \lcurr\;\mapsto\;(\llbracket \Sigma\rrbracket_{\textsc{fi}}).\lcurr\;\lub\;\Varid{l}\;\lub\;\Varid{l'}\mskip1.5mu])}{}\<[E]%
\ColumnHook
\end{hscode}\resethooks

Note that \ensuremath{\Sigma_{1}''.\lcurr\mathrel{=}(\llbracket \Sigma\rrbracket_{\textsc{fi}}).\lcurr\;\lub\;\Varid{l}\;\lub\;\Varid{l'}\mathrel{=}\Sigma_{2}.\lcurr}.

{\bf Case} \ensuremath{\textbf{\Blue{E}}\;[\mskip1.5mu \textbf{writeRef}_{\textsc{fs}}\;(Ref^{\Red{\textsf{{\tiny TCB}}}}_{\textsc{fs}}\;\Varid{a})\;\Varid{t}\mskip1.5mu]}.

We have \ensuremath{\conf{\Sigma}{\textbf{\Blue{E}}\;[\mskip1.5mu \textbf{writeRef}_{\textsc{fs}}\;(Ref^{\Red{\textsf{{\tiny TCB}}}}_{\textsc{fs}}\;\Varid{a})\;\Varid{t}\mskip1.5mu]}\lto\conf{\Sigma'}{\textbf{\Blue{E}}\;[\mskip1.5mu \mathbf{return}\;()\mskip1.5mu]}}, where \ensuremath{\Sigma.\mu_\textsc{fs}\;(\Varid{a})\mathrel{=}Lb^{\Red{\textsf{{\tiny TCB}}}}\;\Varid{l}\;(Lb^{\Red{\textsf{{\tiny TCB}}}}\;\Varid{l'}\;\Varid{v})}, \ensuremath{\Sigma'\mathrel{=}\Sigma\;[\mskip1.5mu \mu_\textsc{fs}\;\mapsto\;\Sigma.\mu_\textsc{fs}\;[\mskip1.5mu \Varid{a}\;\mapsto\;Lb^{\Red{\textsf{{\tiny TCB}}}}\;\Varid{l}\;(Lb^{\Red{\textsf{{\tiny TCB}}}}\;\Varid{l'}\;\Varid{v})\mskip1.5mu]\mskip1.5mu]},
and we know that \ensuremath{\Sigma.\lcurr\;\flows\;\Varid{l}\;\lub\;\Varid{l'}}.

Let \ensuremath{\Sigma_{1}\mathrel{=}\llbracket \Sigma\rrbracket_{\textsc{fi}}}. Then there exists a function \ensuremath{\mu} such that:

\begin{hscode}\SaveRestoreHook
\column{B}{@{}>{\hspre}l<{\hspost}@{}}%
\column{7}{@{}>{\hspre}l<{\hspost}@{}}%
\column{E}{@{}>{\hspre}l<{\hspost}@{}}%
\>[7]{}\conf{\Sigma_{1}}{\llbracket \textbf{\Blue{E}}\;[\mskip1.5mu \textbf{upgrade}_\textsc{fs}\;(Ref^{\Red{\textsf{{\tiny TCB}}}}_{\textsc{fs}}\;\Varid{a})\;\Varid{l'}\mskip1.5mu]\rrbracket_{\textsc{fi}}^{\Sigma}}{}\<[E]%
\\
\>[B]{}\lto{}\<[7]%
\>[7]{}\conf{\Sigma_{1}}{(\llbracket \textbf{\Blue{E}}\rrbracket_{\textsc{fi}}^{\Sigma})\;[\mskip1.5mu \textbf{toLabeled}\;(\Sigma_{1}.\lcurr\;\lub\;\Varid{l})\;(\mu\;\Sigma_{1}.\lcurr)\mskip1.5mu]}{}\<[E]%
\ColumnHook
\end{hscode}\resethooks

We now step through the evaluation of \ensuremath{\conf{\Sigma_{1}}{\mu\;\Sigma_{1}.\lcurr}}, as follows:

\begin{hscode}\SaveRestoreHook
\column{B}{@{}>{\hspre}l<{\hspost}@{}}%
\column{7}{@{}>{\hspre}l<{\hspost}@{}}%
\column{E}{@{}>{\hspre}l<{\hspost}@{}}%
\>[7]{}\conf{\Sigma_{1}}{\mu\;\Sigma_{1}.\lcurr}{}\<[E]%
\\
\>[B]{}\lto{}\<[7]%
\>[7]{}\conf{\Sigma_{1}}{\;\mathbf{do}\;\Varid{i}\leftarrow \textbf{readRef}_{\llbracket Ref^{\Red{\textsf{{\tiny TCB}}}}_{\textsc{fs}}\;\Varid{a}\rrbracket_{\textsc{fi}}^{\cdot }};\textbf{writeRef}_{\textsc{fi}}\;\Varid{i}\;\Varid{t}}{}\<[E]%
\\
\>[B]{}\lto{}\<[7]%
\>[7]{}\conf{\Sigma_{1}'}{\textbf{writeRef}_{\textsc{fi}}\;\Varid{i}\;\Varid{t}}{}\<[E]%
\\
\>[7]{}(\Sigma_{1}'\mathrel{=}\Sigma_{1}\;[\mskip1.5mu \lcurr\;\mapsto\;\Sigma_{1}.\lcurr\;\lub\;\Varid{l}\mskip1.5mu]){}\<[E]%
\\
\>[B]{}\lto\conf{\Sigma_{1}''}{\mathbf{return}\;()}\;{}\<[E]%
\\
\>[B]{}\hsindent{7}{}\<[7]%
\>[7]{}(\Sigma_{1}''\mathrel{=}\Sigma_{1}'\;[\mskip1.5mu \mu_\textsc{fi}\;\mapsto\;\Sigma_{1}'.\mu_\textsc{fi}\;[\mskip1.5mu \Varid{i}\;\mapsto\;Lb^{\Red{\textsf{{\tiny TCB}}}}\;\Varid{l'}\;\Varid{t}\mskip1.5mu]\mskip1.5mu]){}\<[E]%
\ColumnHook
\end{hscode}\resethooks

Finally, this allows us to conclude (from the rule for \ensuremath{\textbf{toLabeled}}), that

\begin{hscode}\SaveRestoreHook
\column{B}{@{}>{\hspre}l<{\hspost}@{}}%
\column{E}{@{}>{\hspre}l<{\hspost}@{}}%
\>[B]{}\conf{\Sigma_{1}}{(\llbracket \textbf{\Blue{E}}\rrbracket_{\textsc{fi}}^{\Sigma})\;[\mskip1.5mu \textbf{toLabeled}\;(\Sigma_{1}.\lcurr\;\lub\;\Varid{l})\;(\mu\;\Sigma_{1}.\lcurr)\mskip1.5mu]}{}\<[E]%
\\
\>[B]{}\lto\conf{\Sigma_{2}}{(\llbracket \textbf{\Blue{E}}\rrbracket_{\textsc{fi}}^{\Sigma})\;[\mskip1.5mu \mathbf{return}\;()\mskip1.5mu]}{}\<[E]%
\ColumnHook
\end{hscode}\resethooks

where \ensuremath{\Sigma_{2}\mathrel{=}(\Sigma_{1}.\lcurr,\Sigma_{1}''.\mu_\textsc{fi})}. Now we can check that \ensuremath{\llbracket \Sigma'\rrbracket_{\textsc{fi}}\mathrel{=}\Sigma_{2}} from the definition of \ensuremath{\llbracket \cdot \rrbracket_{\textsc{fi}}} for states.

{\bf Case} \ensuremath{\textbf{\Blue{E}}\;[\mskip1.5mu \textbf{upgrade}_\textsc{fs}\;(Ref^{\Red{\textsf{{\tiny TCB}}}}_{\textsc{fs}}\;\Varid{a})\;\Varid{l'}\mskip1.5mu]}.

We have \ensuremath{\conf{\Sigma}{\textbf{\Blue{E}}\;[\mskip1.5mu \textbf{upgrade}_\textsc{fs}\;(Ref^{\Red{\textsf{{\tiny TCB}}}}_{\textsc{fs}}\;\Varid{a})\;\Varid{l'}\mskip1.5mu]}\lto\conf{\Sigma'}{\textbf{\Blue{E}}\;[\mskip1.5mu \mathbf{return}\;()\mskip1.5mu]}}, where \ensuremath{\Sigma.\mu_\textsc{fs}\;(\Varid{a})\mathrel{=}Lb^{\Red{\textsf{{\tiny TCB}}}}\;\Varid{l}\;(Lb^{\Red{\textsf{{\tiny TCB}}}}\;\Varid{l''}\;\Varid{v})},
\ensuremath{\Sigma'\mathrel{=}\Sigma\;[\mskip1.5mu \mu_\textsc{fs}\;\mapsto\;\Sigma.\mu_\textsc{fs}\;[\mskip1.5mu \Varid{a}\;\mapsto\;Lb^{\Red{\textsf{{\tiny TCB}}}}\;\Varid{l}\;(Lb^{\Red{\textsf{{\tiny TCB}}}}\;(\Varid{l}\;\lub\;\Varid{l''}\;\lub\;\Varid{l'})\;\Varid{v})\mskip1.5mu]\mskip1.5mu]}, and we know that \ensuremath{\Sigma.\lcurr\;\flows\;\Varid{l}}.

Let \ensuremath{\Sigma_{1}\mathrel{=}\llbracket \Sigma\rrbracket_{\textsc{fi}}}. Then there exists a function \ensuremath{\mu} such that:

\begin{hscode}\SaveRestoreHook
\column{B}{@{}>{\hspre}l<{\hspost}@{}}%
\column{7}{@{}>{\hspre}l<{\hspost}@{}}%
\column{E}{@{}>{\hspre}l<{\hspost}@{}}%
\>[7]{}\conf{\Sigma_{1}}{\llbracket \textbf{\Blue{E}}\;[\mskip1.5mu \textbf{upgrade}_\textsc{fs}\;(Ref^{\Red{\textsf{{\tiny TCB}}}}_{\textsc{fs}}\;\Varid{a})\;\Varid{l'}\mskip1.5mu]\rrbracket_{\textsc{fi}}^{\Sigma}}{}\<[E]%
\\
\>[B]{}\lto{}\<[7]%
\>[7]{}\conf{\Sigma_{1}}{(\llbracket \textbf{\Blue{E}}\rrbracket_{\textsc{fi}}^{\Sigma})\;[\mskip1.5mu \textbf{toLabeled}\;(\Sigma_{1}.\lcurr\;\lub\;\Varid{l})\;(\mu\;\Sigma_{1}.\lcurr)\mskip1.5mu]}{}\<[E]%
\ColumnHook
\end{hscode}\resethooks

We now step through the evaluation of \ensuremath{\conf{\Sigma_{1}}{\mu\;\Sigma_{1}.\lcurr}}, as follows:

\begin{hscode}\SaveRestoreHook
\column{B}{@{}>{\hspre}l<{\hspost}@{}}%
\column{7}{@{}>{\hspre}l<{\hspost}@{}}%
\column{35}{@{}>{\hspre}l<{\hspost}@{}}%
\column{51}{@{}>{\hspre}l<{\hspost}@{}}%
\column{E}{@{}>{\hspre}l<{\hspost}@{}}%
\>[7]{}\conf{\Sigma_{1}}{\mu\;\Sigma_{1}.\lcurr}{}\<[E]%
\\
\>[B]{}\lto{}\<[7]%
\>[7]{}\conf{\Sigma_{1}}{\;\mathbf{do}\;\Varid{i}\leftarrow \textbf{readRef}_{\llbracket Ref^{\Red{\textsf{{\tiny TCB}}}}_{\textsc{fs}}\;\Varid{a}\rrbracket_{\textsc{fi}}^{\cdot }};\Varid{lc}\leftarrow \mathbf{getLabel};\mathbin{...}}{}\<[E]%
\\
\>[B]{}\lto{}\<[7]%
\>[7]{}\conf{\Sigma_{1}'}{\;\mathbf{do}\;\Varid{lc}\leftarrow \mathbf{getLabel};\Varid{n}\leftarrow \textbf{newRef}_{\textsc{fi}}\;{}\<[51]%
\>[51]{}(\Varid{lc}\;\lub\;(\Varid{l'}\;\lub\;\Varid{l}))\;\bot ;\mathbin{...}}{}\<[E]%
\\
\>[7]{}(\Sigma_{1}'\mathrel{=}\Sigma_{1}\;[\mskip1.5mu \lcurr\;\mapsto\;\Sigma_{1}.\lcurr\;\lub\;\Varid{l}\mskip1.5mu]){}\<[E]%
\\
\>[B]{}\lto{}\<[7]%
\>[7]{}\conf{\Sigma_{1}'}{\;\mathbf{do}\;\Varid{n}\leftarrow \textbf{newRef}_{\textsc{fi}}\;{}\<[35]%
\>[35]{}(\Varid{lc}\;\lub\;(\Varid{l'}\;\lub\;\Varid{l}))\;\bot ;\textbf{copyRef}\;\Varid{i}\;\Varid{n};\mathbin{...}}{}\<[E]%
\\
\>[B]{}\lto{}\<[7]%
\>[7]{}\conf{\Sigma_{1}''}{\;\mathbf{do}\;\textbf{copyRef}\;\Varid{i}\;\Varid{n};\textbf{writeRef}_{\textsc{fi}}\;(\llbracket Ref^{\Red{\textsf{{\tiny TCB}}}}_{\textsc{fs}}\;\Varid{a}\rrbracket_{\textsc{fi}}^{\cdot })\;\Varid{n}}{}\<[E]%
\\
\>[7]{}(\Sigma_{1}''\mathrel{=}\Sigma_{1}'\;[\mskip1.5mu \mu_\textsc{fi}\;\mapsto\;\Sigma_{1}'.\mu_\textsc{fi}\;[\mskip1.5mu \Varid{n}\;\mapsto\;Lb^{\Red{\textsf{{\tiny TCB}}}}\;(\Varid{lc}\;\lub\;(\Varid{l'}\;\lub\;\Varid{l}))\;\bot \mskip1.5mu]\mskip1.5mu]){}\<[E]%
\\
\>[B]{}\lto{}\<[7]%
\>[7]{}\conf{\Sigma_{1}'''}{\textbf{writeRef}_{\textsc{fi}}\;(\llbracket Ref^{\Red{\textsf{{\tiny TCB}}}}_{\textsc{fs}}\;\Varid{a}\rrbracket_{\textsc{fi}}^{\cdot })\;\Varid{n}}{}\<[E]%
\\
\>[7]{}(\Sigma_{1}'''\mathrel{=}\Sigma_{1}''\;[\mskip1.5mu \mu_\textsc{fi}\;\mapsto\;\Sigma_{1}''.\mu_\textsc{fi}\;[\mskip1.5mu \Varid{n}\;\mapsto\;Lb^{\Red{\textsf{{\tiny TCB}}}}\;(\Varid{lc}\;\lub\;(\Varid{l'}\;\lub\;\Varid{l}))\;\Varid{v}\mskip1.5mu]\mskip1.5mu]){}\<[E]%
\\
\>[B]{}\lto\conf{\Sigma_{1}'''}{\mathbf{return}\;()}\;{}\<[E]%
\\
\>[B]{}\hsindent{7}{}\<[7]%
\>[7]{}(\Sigma_{1}''''\mathrel{=}\Sigma_{1}'''\;[\mskip1.5mu \mu_\textsc{fi}\;\mapsto\;\Sigma_{1}'''.\mu_\textsc{fi}\;[\mskip1.5mu \Varid{a}\;\mapsto\;Lb^{\Red{\textsf{{\tiny TCB}}}}\;\Varid{l}\;(Ref^{\Red{\textsf{{\tiny TCB}}}}_{\textsc{fi}}\;(\Varid{lc}\;\lub\;\Varid{l'}\;\lub\;\Varid{l})\;\Varid{n})\mskip1.5mu]\mskip1.5mu]){}\<[E]%
\ColumnHook
\end{hscode}\resethooks

Finally, this allows us to conclude (from the rule for \ensuremath{\textbf{toLabeled}}), that

\begin{hscode}\SaveRestoreHook
\column{B}{@{}>{\hspre}l<{\hspost}@{}}%
\column{E}{@{}>{\hspre}l<{\hspost}@{}}%
\>[B]{}\conf{\Sigma_{1}}{(\llbracket \textbf{\Blue{E}}\rrbracket_{\textsc{fi}}^{\Sigma})\;[\mskip1.5mu \textbf{toLabeled}\;\Varid{l'}\;(\mu\;\Sigma_{1}.\lcurr)\mskip1.5mu]}{}\<[E]%
\\
\>[B]{}\lto\conf{\Sigma_{2}}{(\llbracket \textbf{\Blue{E}}\rrbracket_{\textsc{fi}}^{\Sigma})\;[\mskip1.5mu \mathbf{return}\;()\mskip1.5mu]}{}\<[E]%
\ColumnHook
\end{hscode}\resethooks

where \ensuremath{\Sigma_{2}\mathrel{=}(\Sigma_{1}.\lcurr,\Sigma_{1}''''.\mu_\textsc{fi})}. Now we can check that \ensuremath{\llbracket \Sigma'\rrbracket_{\textsc{fi}}\mathrel{=}\Sigma_{2}} from the definition of \ensuremath{\llbracket \cdot \rrbracket_{\textsc{fi}}} for states.

{\bf Case} \ensuremath{\textbf{\Blue{E}}\;[\mskip1.5mu \textbf{downgrade}_\textsc{fs}\;(Ref^{\Red{\textsf{{\tiny TCB}}}}_{\textsc{fs}}\;\Varid{a})\;\Varid{l'}\mskip1.5mu]}.

We have \ensuremath{\conf{\Sigma}{\textbf{\Blue{E}}\;[\mskip1.5mu \textbf{downgrade}_\textsc{fs}\;(Ref^{\Red{\textsf{{\tiny TCB}}}}_{\textsc{fs}}\;\Varid{a})\;\Varid{l'}\mskip1.5mu]}\lto\conf{\Sigma'}{\textbf{\Blue{E}}\;[\mskip1.5mu \mathbf{return}\;()\mskip1.5mu]}}, where \ensuremath{\Sigma.\mu_\textsc{fs}\;(\Varid{a})\mathrel{=}Lb^{\Red{\textsf{{\tiny TCB}}}}\;\Varid{l}\;(Lb^{\Red{\textsf{{\tiny TCB}}}}\;\Varid{l''}\;\Varid{v})},
\ensuremath{\Sigma'\mathrel{=}\Sigma\;[\mskip1.5mu \mu_\textsc{fs}\;\mapsto\;\Sigma.\mu_\textsc{fs}\;[\mskip1.5mu \Varid{a}\;\mapsto\;Lb^{\Red{\textsf{{\tiny TCB}}}}\;\Varid{l}\;(Lb^{\Red{\textsf{{\tiny TCB}}}}\;(\Varid{l}\;\lub\;\Varid{l''}\;\glb\;\Varid{l'})\;\bot )\mskip1.5mu]\mskip1.5mu]}, and we know that \ensuremath{\Sigma.\lcurr\;\flows\;\Varid{l}}.

Let \ensuremath{\Sigma_{1}\mathrel{=}\llbracket \Sigma\rrbracket_{\textsc{fi}}}. Then there exists a function \ensuremath{\mu} such that:

\begin{hscode}\SaveRestoreHook
\column{B}{@{}>{\hspre}l<{\hspost}@{}}%
\column{7}{@{}>{\hspre}l<{\hspost}@{}}%
\column{E}{@{}>{\hspre}l<{\hspost}@{}}%
\>[7]{}\conf{\Sigma_{1}}{\llbracket \textbf{\Blue{E}}\;[\mskip1.5mu \textbf{downgrade}_\textsc{fs}\;(Ref^{\Red{\textsf{{\tiny TCB}}}}_{\textsc{fs}}\;\Varid{a})\;\Varid{l'}\mskip1.5mu]\rrbracket_{\textsc{fi}}^{\Sigma}}{}\<[E]%
\\
\>[B]{}\lto{}\<[7]%
\>[7]{}\conf{\Sigma_{1}}{(\llbracket \textbf{\Blue{E}}\rrbracket_{\textsc{fi}}^{\Sigma})\;[\mskip1.5mu \textbf{toLabeled}\;(\Sigma_{1}.\lcurr\;\lub\;\Varid{l})\;(\mu\;\Sigma_{1}.\lcurr)\mskip1.5mu]}{}\<[E]%
\ColumnHook
\end{hscode}\resethooks

We now step through the evaluation of \ensuremath{\conf{\Sigma_{1}}{\mu\;\Sigma_{1}.\lcurr}}, as follows:

\begin{hscode}\SaveRestoreHook
\column{B}{@{}>{\hspre}l<{\hspost}@{}}%
\column{7}{@{}>{\hspre}l<{\hspost}@{}}%
\column{13}{@{}>{\hspre}l<{\hspost}@{}}%
\column{35}{@{}>{\hspre}l<{\hspost}@{}}%
\column{51}{@{}>{\hspre}l<{\hspost}@{}}%
\column{E}{@{}>{\hspre}l<{\hspost}@{}}%
\>[7]{}\conf{\Sigma_{1}}{\mu\;\Sigma_{1}.\lcurr}{}\<[E]%
\\
\>[B]{}\lto{}\<[7]%
\>[7]{}\conf{\Sigma_{1}}{\;\mathbf{do}\;\Varid{i}\leftarrow \textbf{readRef}_{\llbracket Ref^{\Red{\textsf{{\tiny TCB}}}}_{\textsc{fs}}\;\Varid{a}\rrbracket_{\textsc{fi}}^{\cdot }};\Varid{lc}\leftarrow \Varid{getlabel};\mathbin{...}}{}\<[E]%
\\
\>[B]{}\lto{}\<[7]%
\>[7]{}\conf{\Sigma_{1}'}{\;\mathbf{do}\;\Varid{lc}\leftarrow \mathbf{getLabel};\Varid{n}\leftarrow \textbf{newRef}_{\textsc{fi}}\;{}\<[51]%
\>[51]{}(\Varid{lc}\;\lub\;(\Varid{l'}\;\glb\;\Varid{l}))\;\bot ;\mathbin{...}}{}\<[E]%
\\
\>[7]{}(\Sigma_{1}'\mathrel{=}\Sigma_{1}\;[\mskip1.5mu \lcurr\;\mapsto\;\Sigma_{1}.\lcurr\;\lub\;\Varid{l}\mskip1.5mu]){}\<[E]%
\\
\>[B]{}\lto{}\<[7]%
\>[7]{}\conf{\Sigma_{1}'}{\;\mathbf{do}\;\Varid{n}\leftarrow \textbf{newRef}_{\textsc{fi}}\;{}\<[35]%
\>[35]{}(\Varid{lc}\;\lub\;(\Varid{l'}\;\glb\;\Varid{l}))\;\bot ;\textbf{writeRef}_{\textsc{fi}}\;(\llbracket Ref^{\Red{\textsf{{\tiny TCB}}}}_{\textsc{fs}}\;\Varid{a}\rrbracket_{\textsc{fi}}^{\cdot })\;\Varid{n}}{}\<[E]%
\\
\>[B]{}\lto{}\<[7]%
\>[7]{}\conf{\Sigma_{1}''}{\textbf{writeRef}_{\textsc{fi}}\;(\llbracket Ref^{\Red{\textsf{{\tiny TCB}}}}_{\textsc{fs}}\;\Varid{a}\rrbracket_{\textsc{fi}}^{\cdot })\;\Varid{n}}{}\<[E]%
\\
\>[7]{}(\Sigma_{1}''\mathrel{=}\Sigma_{1}'\;[\mskip1.5mu \mu_\textsc{fi}\;\mapsto\;\Sigma_{1}'.\mu_\textsc{fi}\;[\mskip1.5mu \Varid{n}\;\mapsto\;Lb^{\Red{\textsf{{\tiny TCB}}}}\;(\Varid{lc}\;\lub\;(\Varid{l'}\;\glb\;\Varid{l}))\;\bot \mskip1.5mu]\mskip1.5mu]){}\<[E]%
\\
\>[B]{}\lto\conf{\Sigma_{1}'''}{\mathbf{return}\;()}{}\<[E]%
\\[\blanklineskip]%
\ColumnHook
\end{hscode}\resethooks

Finally, this allows us to conclude (from the rule for \ensuremath{\textbf{toLabeled}}), that

\begin{hscode}\SaveRestoreHook
\column{B}{@{}>{\hspre}l<{\hspost}@{}}%
\column{E}{@{}>{\hspre}l<{\hspost}@{}}%
\>[B]{}\conf{\Sigma_{1}}{(\llbracket \textbf{\Blue{E}}\rrbracket_{\textsc{fi}}^{\Sigma})\;[\mskip1.5mu \textbf{toLabeled}\;\Varid{l'}\;(\mu\;\Sigma_{1}.\lcurr)\mskip1.5mu]}{}\<[E]%
\\
\>[B]{}\lto\conf{\Sigma_{2}}{(\llbracket \textbf{\Blue{E}}\rrbracket_{\textsc{fi}}^{\Sigma})\;[\mskip1.5mu \mathbf{return}\;()\mskip1.5mu]}{}\<[E]%
\ColumnHook
\end{hscode}\resethooks

where \ensuremath{\Sigma_{2}\mathrel{=}(\Sigma_{1}.\lcurr,\Sigma_{1}'''.\mu_\textsc{fi})}. Now we can check that \ensuremath{\llbracket \Sigma'\rrbracket_{\textsc{fi}}\mathrel{=}\Sigma_{2}} from the definition of \ensuremath{\llbracket \cdot \rrbracket_{\textsc{fi}}} for states.

\end{proof}

Now we can state the main theorem of this section.

{\bf Theorem.} [Embedding \liofs{} in \lio{}]
Let \ensuremath{\Varid{t}} be a well-typed term in \liofs{}.
  Then if \ensuremath{\conf{\Sigma}{\Varid{t}}\lto^*\conf{\Sigma'}{\Varid{v}}}, we have \ensuremath{\conf{\llbracket \Sigma\rrbracket_{\textsc{fi}}}{\llbracket \Varid{t}\rrbracket_{\textsc{fi}}^{\Sigma}}\lto^*\conf{\llbracket \Sigma'\rrbracket_{\textsc{fi}}}{\llbracket \Varid{v}\rrbracket_{\textsc{fi}}^{\Sigma}}}, and if
  \ensuremath{\conf{\Sigma}{\Varid{t}}\lto^*\conf{\Sigma'}{{\Uparrow}}}, then
  \ensuremath{\conf{\llbracket \Sigma\rrbracket_{\textsc{fi}}}{\llbracket \Varid{t}\rrbracket_{\textsc{fi}}^{\Sigma}}\lto^*\conf{\llbracket \Sigma'\rrbracket_{\textsc{fi}}}{{\Uparrow}}}.

\begin{proof}
  By induction on the number of steps in \ensuremath{\conf{\Sigma}{\Varid{t}}\lto^*\conf{\Sigma'}{\Varid{v}}},
  using Lemma~\ref{lem:step-embedding} and uniqueness of normal forms
  in \lio{}.
\end{proof}
\fi

\section*{Acknowledgments}
We thank our colleagues in the ProSec group at Chalmers, Stefan Heule,
David Mazi{\`e}res, and Edward Z. Yang for the useful discussions.
We thank the anonymous reviewers for constructive feedback on an
earlier version of this work.
This work was funded by DARPA CRASH under contract \#N66001-10-2-4088,
the Swedish research agency VR, and a grant from Mozilla.
Deian Stefan was supported by the DoD through the NDSEG Fellowship
Program.

{\frenchspacing
\bibliographystyle{abbrvnat}
\bibliography{conferences,dm,local}
}
\balance

\appendix

\section{Semantics for the base calculus}
\label{sec:app:sem}

\begin{figure}[t] 
\small
\begin{mathpar}
\inferrule[app]
{ } { \ensuremath{\Red{E}\;[\mskip1.5mu (\lambda \Varid{x}.\Varid{t}_{1})\;\Varid{t}_{2}\mskip1.5mu]\lto\Red{E}\;[\mskip1.5mu \{\mskip1.5mu \Varid{t}_{2}\mathbin{/}\Varid{x}\mskip1.5mu\}\;\Varid{t}_{1}\mskip1.5mu]} }
\and
\inferrule[fix]
{ } { \ensuremath{\Red{E}\;[\mskip1.5mu \mathbf{fix}\;(\lambda \Varid{x}.\Varid{t})\mskip1.5mu]\lto\Red{E}\;[\mskip1.5mu \{\mskip1.5mu \mathbf{fix}\;(\lambda \Varid{x}.\Varid{t})\mathbin{/}\Varid{x}\mskip1.5mu\}\;\Varid{t}}] }
\and
\inferrule[ifTrue]
{ } { \ensuremath{\Red{E}\;[\mskip1.5mu \mathbf{if}\;\Conid{True}\;\mathbf{then}\;\Varid{t}_{2}\;\mathbf{else}\;\Varid{t}_{3}\mskip1.5mu]\lto\Red{E}\;[\mskip1.5mu \Varid{t}_{2}\mskip1.5mu]} }
\and
\inferrule[ifFalse]
{ } { \ensuremath{\Red{E}\;[\mskip1.5mu \mathbf{if}\;\Conid{False}\;\mathbf{then}\;\Varid{t}_{2}\;\mathbf{else}\;\Varid{t}_{3}\mskip1.5mu]\lto\Red{E}\;[\mskip1.5mu \Varid{t}_{3}\mskip1.5mu]} }
\and
\inferrule[labelOp]
{ \ensuremath{\Varid{v}\mathrel{=}\llbracket \Varid{l}_{1}\;\otimes\;\Varid{l}_{2}\rrbracket_{\lattice}} }
{ \ensuremath{\Red{E}\;[\mskip1.5mu \Varid{l}_{1}\;\otimes\;\Varid{l}_{2}\mskip1.5mu]\lto\Red{E}\;[\mskip1.5mu \Varid{v}\mskip1.5mu]} }
\and
\inferrule[return]
{ } { \ensuremath{\conf{\Sigma}{\textbf{\Blue{E}}\;[\mskip1.5mu \mathbf{return}\;\Varid{t}\mskip1.5mu]}\lto\conf{\Sigma}{\textbf{\Blue{E}}\;[\mskip1.5mu LIO^{\Red{\textsf{{\tiny TCB}}}}\;\Varid{t}\mskip1.5mu]}} }
\and
\inferrule[bind]
{ }
{ \ensuremath{\conf{\Sigma}{\textbf{\Blue{E}}\;[\mskip1.5mu (LIO^{\Red{\textsf{{\tiny TCB}}}}\;\Varid{t}_{1})\bind \Varid{t}_{2}\mskip1.5mu]}\lto\conf{\Sigma}{\textbf{\Blue{E}}\;[\mskip1.5mu \Varid{t}_{2}\;\Varid{t}_{1}\mskip1.5mu]}} }
\end{mathpar}
\caption{Reduction rules for standard \lio{} terms.\label{fig:sos:rules}}
\end{figure}

The reduction rules for pure and monadic terms are given in
Figure~\ref{fig:sos:rules}.
We define substitution \ensuremath{\{\mskip1.5mu \Varid{t}_{2}\mathbin{/}\Varid{x}\mskip1.5mu\}\;\Varid{t}_{1}} in the
usual way: homomorphic on all operators and renaming bound names to avoid
captures.
Our label operations \ensuremath{\lub}, \ensuremath{\glb}, and \ensuremath{\flows} rely on the label-specific
implementation of these lattice operators, as used in the premise of rule
\ruleref{labelOp}; we use the meta-level partial function \ensuremath{\llbracket \cdot\rrbracket_{\lattice}}, which
maps terms to values, to precisely capture this implementation detail.

%
The reduction rules for pure terms are standard. For instance, in rule
\ruleref{ifTrue}, when the branch has a true condition, i.e., \ensuremath{\Red{E}\;[\mskip1.5mu \mathbf{if}\;\Conid{True}\;\mathbf{then}\;\Varid{t}_{2}\;\mathbf{else}\;\Varid{t}_{3}\mskip1.5mu]}, it reduces to the then branch (\ensuremath{\Red{E}\;[\mskip1.5mu \Varid{t}_{2}\mskip1.5mu]}).  The
rest are self-explanatory and we do not discuss them any further.
%

%
%
Since all the IFC checks are performed by individual LIO terms, the
definition for \ensuremath{\mathbf{return}} and \ensuremath{(\bind )} are trivial.
The former simply reduces to a monadic value by wrapping the term with
the \ensuremath{LIO^{\Red{\textsf{{\tiny TCB}}}}} constructor, while the latter evaluates the left-hand
term and supplies the result to the right-hand term, as usual.

\section{Attack on naive flow-sensitive references}
\label{sec:app:fs-attack2}
\begin{figure}
\small
\begin{hscode}\SaveRestoreHook
\column{B}{@{}>{\hspre}l<{\hspost}@{}}%
\column{3}{@{}>{\hspre}l<{\hspost}@{}}%
\column{9}{@{}>{\hspre}l<{\hspost}@{}}%
\column{24}{@{}>{\hspre}l<{\hspost}@{}}%
\column{E}{@{}>{\hspre}l<{\hspost}@{}}%
\>[B]{}\Varid{leakRef}\mathbin{::}Ref_{\textsf{{\tiny FS}}}\;{}\;\Conid{Bool}\to \Conid{LIO}\;\Conid{Bool}{}\<[E]%
\\
\>[B]{}\Varid{leakRef}\;\Varid{href}\mathrel{=}\mathbf{do}{}\<[E]%
\\
\>[B]{}\hsindent{3}{}\<[3]%
\>[3]{}\Varid{lref}{}\<[9]%
\>[9]{}\leftarrow \textbf{newRef}_{{}}\;\textbf{L}\;\Conid{True}{}\<[E]%
\\
\>[B]{}\hsindent{3}{}\<[3]%
\>[3]{}\Varid{tmp}{}\<[9]%
\>[9]{}\leftarrow \textbf{newRef}_{{}}\;\textbf{L}\;\Conid{False}{}\<[E]%
\\
\>[B]{}\hsindent{3}{}\<[3]%
\>[3]{}\textbf{toLabeled}\;\textbf{H}\mathbin{\$}\mathbf{do}\;{}\<[24]%
\>[24]{}\Varid{h}\leftarrow \textbf{readRef}_{{}}\;\Varid{href}{}\<[E]%
\\
\>[24]{}\textbf{when}\;\Varid{h}\mathbin{\$}\textbf{writeRef}_{{}}\;\Varid{tmp}\;\Conid{True}{}\<[E]%
\\
\>[B]{}\hsindent{3}{}\<[3]%
\>[3]{}\textbf{toLabeled}\;\textbf{H}\mathbin{\$}\mathbf{do}\;{}\<[24]%
\>[24]{}\Varid{t}\leftarrow \textbf{readRef}_{{}}\;\Varid{tmp}{}\<[E]%
\\
\>[24]{}\textbf{when}\;(\neg \;\Varid{t})\mathbin{\$}\textbf{writeRef}_{{}}\;\Varid{lref}\;\Conid{False}{}\<[E]%
\\
\>[B]{}\hsindent{3}{}\<[3]%
\>[3]{}\textbf{readRef}_{{}}\;\Varid{lref}{}\<[E]%
\ColumnHook
\end{hscode}\resethooks
\vspace{-5pt}
\caption{An attack in LIO with naive flow-sensitive reference
extension without \ensuremath{\textbf{labelOf}_{{}}}.\label{fig:fs-attack2}}
\end{figure}
As in the attack of Figure~\ref{fig:fs-attack}, the \ensuremath{\Varid{leakRef}} of
Figure~\ref{fig:fs-attack2} can be used to leak the value stored in a \ensuremath{\textbf{H}}
reference \ensuremath{\Varid{href}}, while keeping the current label \ensuremath{\textbf{L}}, without using \ensuremath{\textbf{labelOf}_{{}}}.
Internally, the value is leaked into public reference \ensuremath{\Varid{lref}} by leveraging the
fact that, based on a secret value, the label of a public reference (\ensuremath{\Varid{tmp}}) can
be changed (or not).
In the first \ensuremath{\textbf{toLabeled}} block, if \ensuremath{\Varid{h}\equiv \Conid{True}}, then the label of \ensuremath{\Varid{tmp}} is raised
to \ensuremath{\textbf{H}} and its value is set to \ensuremath{\Conid{True}}.
In the second \ensuremath{\textbf{toLabeled}} block, we read \ensuremath{\Varid{tmp}}, which may raise the current
label to \ensuremath{\textbf{H}} if the secret is \ensuremath{\Conid{True}} (and thus \ensuremath{\Varid{tmp}} was upgraded).
Indeed, if the secret is \ensuremath{\Conid{True}} (and thus \ensuremath{\Varid{t}\equiv \Conid{True}}) we leave the public
reference intact: \ensuremath{\Conid{True}}.
However, if the secret is \ensuremath{\Conid{False}}, the \ensuremath{\Varid{tmp}} reference is not modified in the
first \ensuremath{\textbf{toLabeled}} block and thus when reading it in the second \ensuremath{\textbf{toLabeled}}
block, the current label remains \ensuremath{\textbf{L}}, and since \ensuremath{\Varid{t}\equiv \Conid{False}}, we write \ensuremath{\Conid{False}}
into the public reference.
In both cases the value stored in \ensuremath{\Varid{lref}} corresponds to that of \ensuremath{\Varid{href}}, yet
leaving the current label and the label of \ensuremath{\Varid{lref}} intact (\ensuremath{\textbf{L}}).

\section{Embedding Theorem}
\label{app:embedding}

In this section we prove that the embedding from \liofs{} into \lio{}
preserves semantics.

We will use the following lemma for single \liofs{} steps:

\begin{lemma} [Single-step embedding]
\label{lem:step-embedding}
Let \ensuremath{\Varid{t}} be a well-typed term in \liofs{}.  Then if \ensuremath{\conf{\Sigma}{\Varid{t}}\lto\conf{\Sigma'}{\Varid{t}'}}, then there is a configuration \ensuremath{\Conid{Y}} such that \ensuremath{\conf{\llbracket \Sigma\rrbracket_{\textsc{fi}}}{\llbracket \Varid{t}\rrbracket_{\textsc{fi}}^{\Sigma}}\lto^*\Conid{Y}} and \ensuremath{\conf{\llbracket \Sigma'\rrbracket_{\textsc{fi}}}{\llbracket \Varid{t}'\rrbracket_{\textsc{fi}}^{\Sigma}}\lto^*\Conid{Y}}, i.e. \ensuremath{\conf{\llbracket \Sigma\rrbracket_{\textsc{fi}}}{\llbracket \Varid{t}\rrbracket_{\textsc{fi}}^{\Sigma}}} and \ensuremath{\conf{\llbracket \Sigma'\rrbracket_{\textsc{fi}}}{\llbracket \Varid{t}'\rrbracket_{\textsc{fi}}^{\Sigma}}} are
$\beta$-equivalent.
\end{lemma}

\begin{proof}

  Case analysis on the next redex in \ensuremath{\Varid{t}}. Most cases show a stronger
  version of the lemma, i.e. that \ensuremath{\conf{\Sigma}{\Varid{t}}\lto\conf{\Sigma'}{\Varid{t}'}} implies
  \ensuremath{\conf{\llbracket \Sigma\rrbracket_{\textsc{fi}}}{\llbracket \Varid{t}\rrbracket_{\textsc{fi}}^{\Sigma}}\lto^*\conf{\llbracket \Sigma'\rrbracket_{\textsc{fi}}}{\llbracket \Varid{t}'\rrbracket_{\textsc{fi}}^{\Sigma}}}.

{\bf Case} \ensuremath{\textbf{\Blue{E}}\;[\mskip1.5mu \textbf{newRef}_{\textsc{fs}}\;\Varid{l}\;\Varid{t}\mskip1.5mu]}.

We have \ensuremath{\conf{\Sigma}{\textbf{\Blue{E}}\;[\mskip1.5mu \textbf{newRef}_{\textsc{fs}}\;\Varid{l}\;\Varid{t}\mskip1.5mu]}\lto\conf{\Sigma'}{\textbf{\Blue{E}}\;[\mskip1.5mu \mathbf{return}\;(Ref^{\Red{\textsf{{\tiny TCB}}}}_{\textsc{fs}}\;\Varid{a})\mskip1.5mu]}}, where \ensuremath{\Sigma'\mathrel{=}\Sigma\;[\mskip1.5mu \mu_\textsc{fs}\;\mapsto\;\Sigma.\mu_\textsc{fs}\;[\mskip1.5mu \Varid{a}\;\mapsto\;Lb^{\Red{\textsf{{\tiny TCB}}}}\;\Sigma.\lcurr\;(Lb^{\Red{\textsf{{\tiny TCB}}}}\;\Varid{l}\;\Varid{t})\mskip1.5mu]\mskip1.5mu]}, and we know that \ensuremath{\Sigma.\lcurr\;\flows\;\Varid{l}}.

Let \ensuremath{\Sigma_{1}\mathrel{=}\llbracket \Sigma\rrbracket_{\textsc{fi}}}. We argue

\begin{hscode}\SaveRestoreHook
\column{B}{@{}>{\hspre}l<{\hspost}@{}}%
\column{7}{@{}>{\hspre}l<{\hspost}@{}}%
\column{38}{@{}>{\hspre}l<{\hspost}@{}}%
\column{E}{@{}>{\hspre}l<{\hspost}@{}}%
\>[7]{}\conf{\Sigma_{1}}{\llbracket \textbf{\Blue{E}}\;[\mskip1.5mu \textbf{newRef}_{\textsc{fs}}\;\Varid{l}\;\Varid{t}\mskip1.5mu]\rrbracket_{\textsc{fi}}^{\Sigma}}{}\<[E]%
\\
\>[B]{}\lto{}\<[7]%
\>[7]{}\conf{\Sigma_{1}'}{(\llbracket \textbf{\Blue{E}}\rrbracket_{\textsc{fi}}^{\Sigma})\;[\mskip1.5mu \mathbf{do}\;\lcurr\leftarrow \mathbf{getLabel};\mathbf{return}\;(\Varid{wrap}\;(Ref^{\Red{\textsf{{\tiny TCB}}}}_{\textsc{fi}}\;\lcurr\;\Varid{a}))\mskip1.5mu]}{}\<[E]%
\\
\>[7]{}(\Sigma_{1}'\mathrel{=}\Sigma_{1}\;[\mskip1.5mu \mu_\textsc{fi}\;\mapsto\;\Sigma_{1}.\mu_\textsc{fi}\;[\mskip1.5mu \Varid{i}\;\mapsto\;Lb^{\Red{\textsf{{\tiny TCB}}}}\;\Varid{l}\;\Varid{t}\mskip1.5mu]\mskip1.5mu]){}\<[E]%
\\
\>[B]{}\lto{}\<[7]%
\>[7]{}\conf{\Sigma_{1}''}{(\llbracket \textbf{\Blue{E}}\rrbracket_{\textsc{fi}}^{\Sigma})\;[\mskip1.5mu \mathbf{return}\;(\Varid{wrap}\;(Ref^{\Red{\textsf{{\tiny TCB}}}}_{\textsc{fi}}\;\lcurr\;\Varid{a}))\mskip1.5mu]}{}\<[E]%
\\
\>[7]{}(\Sigma_{1}''\mathrel{=}\Sigma_{1}\;[\mskip1.5mu \mu_\textsc{fi}\;\mapsto\;\Sigma_{1}.\mu_\textsc{fi}\;[\mskip1.5mu {}\<[38]%
\>[38]{}\Varid{i}\;\mapsto\;Lb^{\Red{\textsf{{\tiny TCB}}}}\;\Varid{l}\;\Varid{t};{}\<[E]%
\\
\>[38]{}\Varid{a}\;\mapsto\;Lb^{\Red{\textsf{{\tiny TCB}}}}\;\lcurr\;(Ref^{\Red{\textsf{{\tiny TCB}}}}_{\textsc{fi}}\;\Varid{l}\;\Varid{i})\mskip1.5mu]\mskip1.5mu]){}\<[E]%
\ColumnHook
\end{hscode}\resethooks

We now have to check that \ensuremath{\llbracket \mathbf{return}\;(Ref^{\Red{\textsf{{\tiny TCB}}}}_{\textsc{fs}}\;\Varid{a})\rrbracket_{\textsc{fi}}^{\Sigma'}\mathrel{=}\mathbf{return}\;(\Varid{wrap}\;(Ref^{\Red{\textsf{{\tiny TCB}}}}_{\textsc{fi}}\;\lcurr\;\Varid{a}))} and \ensuremath{\llbracket \Sigma'\rrbracket_{\textsc{fi}}\mathrel{=}\Sigma_{1}''}, which follow directly from
the definition of \ensuremath{\llbracket \cdot \rrbracket_{\textsc{fi}}} for references and states.
\medskip

{\bf Case} \ensuremath{\textbf{\Blue{E}}\;[\mskip1.5mu \textbf{readRef}_{\textsc{fs}}\;(Ref^{\Red{\textsf{{\tiny TCB}}}}_{\textsc{fs}}\;\Varid{a})\mskip1.5mu]}.

We have \ensuremath{\conf{\Sigma}{\textbf{\Blue{E}}\;[\mskip1.5mu \textbf{readRef}_{\textsc{fs}}\;(Ref^{\Red{\textsf{{\tiny TCB}}}}_{\textsc{fs}}\;\Varid{a})\mskip1.5mu]}\lto\conf{\Sigma}{\textbf{\Blue{E}}\;[\mskip1.5mu \textbf{unlabel}\;(Lb^{\Red{\textsf{{\tiny TCB}}}}\;(\Varid{l}\;\lub\;\Varid{l'})\;\Varid{t})\mskip1.5mu]}}, where \ensuremath{\Sigma.\mu_\textsc{fs}\;(\Varid{a})\mathrel{=}Lb^{\Red{\textsf{{\tiny TCB}}}}\;\Varid{l}\;(Lb^{\Red{\textsf{{\tiny TCB}}}}\;\Varid{l'}\;\Varid{t})}.

Let \ensuremath{\Sigma_{1}\mathrel{=}\llbracket \Sigma\rrbracket_{\textsc{fi}}}. We argue

\begin{hscode}\SaveRestoreHook
\column{B}{@{}>{\hspre}l<{\hspost}@{}}%
\column{7}{@{}>{\hspre}l<{\hspost}@{}}%
\column{E}{@{}>{\hspre}l<{\hspost}@{}}%
\>[7]{}\conf{\Sigma_{1}}{\llbracket \textbf{\Blue{E}}\;[\mskip1.5mu \textbf{readRef}_{\textsc{fs}}\;(Ref^{\Red{\textsf{{\tiny TCB}}}}_{\textsc{fs}}\;\Varid{a})\mskip1.5mu]\rrbracket_{\textsc{fi}}^{\Sigma}}{}\<[E]%
\\
\>[B]{}\lto{}\<[7]%
\>[7]{}\conf{\Sigma_{1}}{(\llbracket \textbf{\Blue{E}}\rrbracket_{\textsc{fi}}^{\Sigma})\;[\mskip1.5mu \textbf{readRef}_{\textsc{fi}}\;(\llbracket Ref^{\Red{\textsf{{\tiny TCB}}}}_{\textsc{fs}}\;\Varid{a}\rrbracket_{\textsc{fi}}^{\cdot })\bind \textbf{readRef}_{\textsc{fi}}\mskip1.5mu]}{}\<[E]%
\\
\>[B]{}\lto{}\<[7]%
\>[7]{}\conf{\Sigma_{1}}{(\llbracket \textbf{\Blue{E}}\rrbracket_{\textsc{fi}}^{\Sigma})\;[\mskip1.5mu \textbf{unlabel}\;(\Sigma_{1}.\mu_\textsc{fi}\;(\Varid{a}))\bind \textbf{readRef}_{\textsc{fi}}\mskip1.5mu]}{}\<[E]%
\\
\>[B]{}\lto{}\<[7]%
\>[7]{}\conf{\Sigma_{1}'}{(\llbracket \textbf{\Blue{E}}\rrbracket_{\textsc{fi}}^{\Sigma})\;[\mskip1.5mu \mathbf{return}\;(Ref^{\Red{\textsf{{\tiny TCB}}}}_{\textsc{fi}}\;\Varid{l'}\;\Varid{i})\bind \textbf{readRef}_{\textsc{fi}}\mskip1.5mu]}{}\<[E]%
\\
\>[7]{}(\Sigma_{1}'\mathrel{=}\Sigma_{1}\;[\mskip1.5mu \lcurr\;\mapsto\;\Sigma_{1}.\lcurr\;\lub\;\Varid{l}\mskip1.5mu]){}\<[E]%
\\
\>[B]{}\lto{}\<[7]%
\>[7]{}\conf{\Sigma_{1}'}{(\llbracket \textbf{\Blue{E}}\rrbracket_{\textsc{fi}}^{\Sigma})\;[\mskip1.5mu \textbf{readRef}_{\textsc{fi}}\;(Ref^{\Red{\textsf{{\tiny TCB}}}}_{\textsc{fi}}\;\Varid{l'}\;\Varid{i})\mskip1.5mu]}{}\<[E]%
\\
\>[B]{}\lto{}\<[7]%
\>[7]{}\conf{\Sigma_{1}'}{(\llbracket \textbf{\Blue{E}}\rrbracket_{\textsc{fi}}^{\Sigma})\;[\mskip1.5mu \textbf{unlabel}\;(\Sigma_{1}'.\mu_\textsc{fi}\;(\Varid{i}))\mskip1.5mu]}{}\<[E]%
\\
\>[B]{}\lto{}\<[7]%
\>[7]{}\conf{\Sigma_{1}''}{(\llbracket \textbf{\Blue{E}}\rrbracket_{\textsc{fi}}^{\Sigma})\;[\mskip1.5mu \mathbf{return}\;\Varid{t}\mskip1.5mu]}{}\<[E]%
\\
\>[7]{}(\Sigma_{1}''\mathrel{=}\Sigma_{1}'\;[\mskip1.5mu \lcurr\;\mapsto\;\Sigma_{1}'.\lcurr\;\lub\;\Varid{l'}\mskip1.5mu]){}\<[E]%
\ColumnHook
\end{hscode}\resethooks

Now if we consider \ensuremath{\conf{\llbracket \Sigma\rrbracket_{\textsc{fi}}}{\llbracket \textbf{\Blue{E}}\;[\mskip1.5mu \textbf{unlabel}\;(Lb^{\Red{\textsf{{\tiny TCB}}}}\;(\Varid{l}\;\lub\;\Varid{l'})\;\Varid{t})\mskip1.5mu]\rrbracket_{\textsc{fi}}^{\cdot }}}, we have

\begin{hscode}\SaveRestoreHook
\column{B}{@{}>{\hspre}l<{\hspost}@{}}%
\column{7}{@{}>{\hspre}l<{\hspost}@{}}%
\column{E}{@{}>{\hspre}l<{\hspost}@{}}%
\>[7]{}\conf{\llbracket \Sigma\rrbracket_{\textsc{fi}}}{\llbracket \textbf{\Blue{E}}\;[\mskip1.5mu \textbf{unlabel}\;(Lb^{\Red{\textsf{{\tiny TCB}}}}\;(\Varid{l}\;\lub\;\Varid{l'})\;\Varid{t})\mskip1.5mu]\rrbracket_{\textsc{fi}}^{\cdot }}{}\<[E]%
\\
\>[B]{}\lto{}\<[7]%
\>[7]{}\conf{\Sigma_{2}}{(\llbracket \textbf{\Blue{E}}\rrbracket_{\textsc{fi}}^{\Sigma})\;[\mskip1.5mu \mathbf{return}\;\Varid{t}\mskip1.5mu]{}\<[E]%
\\
\>[7]{}(\Sigma_{2}\mathrel{=}\Sigma_{2}\;[\mskip1.5mu \lcurr\;\mapsto\;(\llbracket \Sigma\rrbracket_{\textsc{fi}}).\lcurr\;\lub\;\Varid{l}\;\lub\;\Varid{l'}\mskip1.5mu])}{}\<[E]%
\ColumnHook
\end{hscode}\resethooks

Note that \ensuremath{\Sigma_{1}''.\lcurr\mathrel{=}(\llbracket \Sigma\rrbracket_{\textsc{fi}}).\lcurr\;\lub\;\Varid{l}\;\lub\;\Varid{l'}\mathrel{=}\Sigma_{2}.\lcurr}.

{\bf Case} \ensuremath{\textbf{\Blue{E}}\;[\mskip1.5mu \textbf{writeRef}_{\textsc{fs}}\;(Ref^{\Red{\textsf{{\tiny TCB}}}}_{\textsc{fs}}\;\Varid{a})\;\Varid{t}\mskip1.5mu]}.

We have \ensuremath{\conf{\Sigma}{\textbf{\Blue{E}}\;[\mskip1.5mu \textbf{writeRef}_{\textsc{fs}}\;(Ref^{\Red{\textsf{{\tiny TCB}}}}_{\textsc{fs}}\;\Varid{a})\;\Varid{t}\mskip1.5mu]}\lto\conf{\Sigma'}{\textbf{\Blue{E}}\;[\mskip1.5mu \mathbf{return}\;()\mskip1.5mu]}}, where \ensuremath{\Sigma.\mu_\textsc{fs}\;(\Varid{a})\mathrel{=}Lb^{\Red{\textsf{{\tiny TCB}}}}\;\Varid{l}\;(Lb^{\Red{\textsf{{\tiny TCB}}}}\;\Varid{l'}\;\Varid{v})}, \ensuremath{\Sigma'\mathrel{=}\Sigma\;[\mskip1.5mu \mu_\textsc{fs}\;\mapsto\;\Sigma.\mu_\textsc{fs}\;[\mskip1.5mu \Varid{a}\;\mapsto\;Lb^{\Red{\textsf{{\tiny TCB}}}}\;\Varid{l}\;(Lb^{\Red{\textsf{{\tiny TCB}}}}\;\Varid{l'}\;\Varid{v})\mskip1.5mu]\mskip1.5mu]},
and we know that \ensuremath{\Sigma.\lcurr\;\flows\;\Varid{l}\;\lub\;\Varid{l'}}.

Let \ensuremath{\Sigma_{1}\mathrel{=}\llbracket \Sigma\rrbracket_{\textsc{fi}}}. Then there exists a function \ensuremath{\mu} such that:

\begin{hscode}\SaveRestoreHook
\column{B}{@{}>{\hspre}l<{\hspost}@{}}%
\column{7}{@{}>{\hspre}l<{\hspost}@{}}%
\column{E}{@{}>{\hspre}l<{\hspost}@{}}%
\>[7]{}\conf{\Sigma_{1}}{\llbracket \textbf{\Blue{E}}\;[\mskip1.5mu \textbf{upgrade}_\textsc{fs}\;(Ref^{\Red{\textsf{{\tiny TCB}}}}_{\textsc{fs}}\;\Varid{a})\;\Varid{l'}\mskip1.5mu]\rrbracket_{\textsc{fi}}^{\Sigma}}{}\<[E]%
\\
\>[B]{}\lto{}\<[7]%
\>[7]{}\conf{\Sigma_{1}}{(\llbracket \textbf{\Blue{E}}\rrbracket_{\textsc{fi}}^{\Sigma})\;[\mskip1.5mu \textbf{toLabeled}\;(\Sigma_{1}.\lcurr\;\lub\;\Varid{l})\;(\mu\;\Sigma_{1}.\lcurr)\mskip1.5mu]}{}\<[E]%
\ColumnHook
\end{hscode}\resethooks

We now step through the evaluation of \ensuremath{\conf{\Sigma_{1}}{\mu\;\Sigma_{1}.\lcurr}}, as follows:

\begin{hscode}\SaveRestoreHook
\column{B}{@{}>{\hspre}l<{\hspost}@{}}%
\column{7}{@{}>{\hspre}l<{\hspost}@{}}%
\column{E}{@{}>{\hspre}l<{\hspost}@{}}%
\>[7]{}\conf{\Sigma_{1}}{\mu\;\Sigma_{1}.\lcurr}{}\<[E]%
\\
\>[B]{}\lto{}\<[7]%
\>[7]{}\conf{\Sigma_{1}}{\;\mathbf{do}\;\Varid{i}\leftarrow \textbf{readRef}_{\llbracket Ref^{\Red{\textsf{{\tiny TCB}}}}_{\textsc{fs}}\;\Varid{a}\rrbracket_{\textsc{fi}}^{\cdot }};\textbf{writeRef}_{\textsc{fi}}\;\Varid{i}\;\Varid{t}}{}\<[E]%
\\
\>[B]{}\lto{}\<[7]%
\>[7]{}\conf{\Sigma_{1}'}{\textbf{writeRef}_{\textsc{fi}}\;\Varid{i}\;\Varid{t}}{}\<[E]%
\\
\>[7]{}(\Sigma_{1}'\mathrel{=}\Sigma_{1}\;[\mskip1.5mu \lcurr\;\mapsto\;\Sigma_{1}.\lcurr\;\lub\;\Varid{l}\mskip1.5mu]){}\<[E]%
\\
\>[B]{}\lto\conf{\Sigma_{1}''}{\mathbf{return}\;()}\;{}\<[E]%
\\
\>[B]{}\hsindent{7}{}\<[7]%
\>[7]{}(\Sigma_{1}''\mathrel{=}\Sigma_{1}'\;[\mskip1.5mu \mu_\textsc{fi}\;\mapsto\;\Sigma_{1}'.\mu_\textsc{fi}\;[\mskip1.5mu \Varid{i}\;\mapsto\;Lb^{\Red{\textsf{{\tiny TCB}}}}\;\Varid{l'}\;\Varid{t}\mskip1.5mu]\mskip1.5mu]){}\<[E]%
\ColumnHook
\end{hscode}\resethooks

Finally, this allows us to conclude (from the rule for \ensuremath{\textbf{toLabeled}}), that

\begin{hscode}\SaveRestoreHook
\column{B}{@{}>{\hspre}l<{\hspost}@{}}%
\column{E}{@{}>{\hspre}l<{\hspost}@{}}%
\>[B]{}\conf{\Sigma_{1}}{(\llbracket \textbf{\Blue{E}}\rrbracket_{\textsc{fi}}^{\Sigma})\;[\mskip1.5mu \textbf{toLabeled}\;(\Sigma_{1}.\lcurr\;\lub\;\Varid{l})\;(\mu\;\Sigma_{1}.\lcurr)\mskip1.5mu]}{}\<[E]%
\\
\>[B]{}\lto\conf{\Sigma_{2}}{(\llbracket \textbf{\Blue{E}}\rrbracket_{\textsc{fi}}^{\Sigma})\;[\mskip1.5mu \mathbf{return}\;()\mskip1.5mu]}{}\<[E]%
\ColumnHook
\end{hscode}\resethooks

where \ensuremath{\Sigma_{2}\mathrel{=}(\Sigma_{1}.\lcurr,\Sigma_{1}''.\mu_\textsc{fi})}. Now we can check that \ensuremath{\llbracket \Sigma'\rrbracket_{\textsc{fi}}\mathrel{=}\Sigma_{2}} from the definition of \ensuremath{\llbracket \cdot \rrbracket_{\textsc{fi}}} for states.

{\bf Case} \ensuremath{\textbf{\Blue{E}}\;[\mskip1.5mu \textbf{upgrade}_\textsc{fs}\;(Ref^{\Red{\textsf{{\tiny TCB}}}}_{\textsc{fs}}\;\Varid{a})\;\Varid{l'}\mskip1.5mu]}.

We have \ensuremath{\conf{\Sigma}{\textbf{\Blue{E}}\;[\mskip1.5mu \textbf{upgrade}_\textsc{fs}\;(Ref^{\Red{\textsf{{\tiny TCB}}}}_{\textsc{fs}}\;\Varid{a})\;\Varid{l'}\mskip1.5mu]}\lto\conf{\Sigma'}{\textbf{\Blue{E}}\;[\mskip1.5mu \mathbf{return}\;()\mskip1.5mu]}}, where \ensuremath{\Sigma.\mu_\textsc{fs}\;(\Varid{a})\mathrel{=}Lb^{\Red{\textsf{{\tiny TCB}}}}\;\Varid{l}\;(Lb^{\Red{\textsf{{\tiny TCB}}}}\;\Varid{l''}\;\Varid{v})},
\ensuremath{\Sigma'\mathrel{=}\Sigma\;[\mskip1.5mu \mu_\textsc{fs}\;\mapsto\;\Sigma.\mu_\textsc{fs}\;[\mskip1.5mu \Varid{a}\;\mapsto\;Lb^{\Red{\textsf{{\tiny TCB}}}}\;\Varid{l}\;(Lb^{\Red{\textsf{{\tiny TCB}}}}\;(\Varid{l}\;\lub\;\Varid{l''}\;\lub\;\Varid{l'})\;\Varid{v})\mskip1.5mu]\mskip1.5mu]}, and we know that \ensuremath{\Sigma.\lcurr\;\flows\;\Varid{l}}.

Let \ensuremath{\Sigma_{1}\mathrel{=}\llbracket \Sigma\rrbracket_{\textsc{fi}}}. Then there exists a function \ensuremath{\mu} such that:

\begin{hscode}\SaveRestoreHook
\column{B}{@{}>{\hspre}l<{\hspost}@{}}%
\column{7}{@{}>{\hspre}l<{\hspost}@{}}%
\column{E}{@{}>{\hspre}l<{\hspost}@{}}%
\>[7]{}\conf{\Sigma_{1}}{\llbracket \textbf{\Blue{E}}\;[\mskip1.5mu \textbf{upgrade}_\textsc{fs}\;(Ref^{\Red{\textsf{{\tiny TCB}}}}_{\textsc{fs}}\;\Varid{a})\;\Varid{l'}\mskip1.5mu]\rrbracket_{\textsc{fi}}^{\Sigma}}{}\<[E]%
\\
\>[B]{}\lto{}\<[7]%
\>[7]{}\conf{\Sigma_{1}}{(\llbracket \textbf{\Blue{E}}\rrbracket_{\textsc{fi}}^{\Sigma})\;[\mskip1.5mu \textbf{toLabeled}\;(\Sigma_{1}.\lcurr\;\lub\;\Varid{l})\;(\mu\;\Sigma_{1}.\lcurr)\mskip1.5mu]}{}\<[E]%
\ColumnHook
\end{hscode}\resethooks

We now step through the evaluation of \ensuremath{\conf{\Sigma_{1}}{\mu\;\Sigma_{1}.\lcurr}}, as follows:

\begin{hscode}\SaveRestoreHook
\column{B}{@{}>{\hspre}l<{\hspost}@{}}%
\column{7}{@{}>{\hspre}l<{\hspost}@{}}%
\column{35}{@{}>{\hspre}l<{\hspost}@{}}%
\column{51}{@{}>{\hspre}l<{\hspost}@{}}%
\column{E}{@{}>{\hspre}l<{\hspost}@{}}%
\>[7]{}\conf{\Sigma_{1}}{\mu\;\Sigma_{1}.\lcurr}{}\<[E]%
\\
\>[B]{}\lto{}\<[7]%
\>[7]{}\conf{\Sigma_{1}}{\;\mathbf{do}\;\Varid{i}\leftarrow \textbf{readRef}_{\llbracket Ref^{\Red{\textsf{{\tiny TCB}}}}_{\textsc{fs}}\;\Varid{a}\rrbracket_{\textsc{fi}}^{\cdot }};\Varid{lc}\leftarrow \mathbf{getLabel};\mathbin{...}}{}\<[E]%
\\
\>[B]{}\lto{}\<[7]%
\>[7]{}\conf{\Sigma_{1}'}{\;\mathbf{do}\;\Varid{lc}\leftarrow \mathbf{getLabel};\Varid{n}\leftarrow \textbf{newRef}_{\textsc{fi}}\;{}\<[51]%
\>[51]{}(\Varid{lc}\;\lub\;(\Varid{l'}\;\lub\;\Varid{l}))\;\bot ;\mathbin{...}}{}\<[E]%
\\
\>[7]{}(\Sigma_{1}'\mathrel{=}\Sigma_{1}\;[\mskip1.5mu \lcurr\;\mapsto\;\Sigma_{1}.\lcurr\;\lub\;\Varid{l}\mskip1.5mu]){}\<[E]%
\\
\>[B]{}\lto{}\<[7]%
\>[7]{}\conf{\Sigma_{1}'}{\;\mathbf{do}\;\Varid{n}\leftarrow \textbf{newRef}_{\textsc{fi}}\;{}\<[35]%
\>[35]{}(\Varid{lc}\;\lub\;(\Varid{l'}\;\lub\;\Varid{l}))\;\bot ;\textbf{copyRef}\;\Varid{i}\;\Varid{n};\mathbin{...}}{}\<[E]%
\\
\>[B]{}\lto{}\<[7]%
\>[7]{}\conf{\Sigma_{1}''}{\;\mathbf{do}\;\textbf{copyRef}\;\Varid{i}\;\Varid{n};\textbf{writeRef}_{\textsc{fi}}\;(\llbracket Ref^{\Red{\textsf{{\tiny TCB}}}}_{\textsc{fs}}\;\Varid{a}\rrbracket_{\textsc{fi}}^{\cdot })\;\Varid{n}}{}\<[E]%
\\
\>[7]{}(\Sigma_{1}''\mathrel{=}\Sigma_{1}'\;[\mskip1.5mu \mu_\textsc{fi}\;\mapsto\;\Sigma_{1}'.\mu_\textsc{fi}\;[\mskip1.5mu \Varid{n}\;\mapsto\;Lb^{\Red{\textsf{{\tiny TCB}}}}\;(\Varid{lc}\;\lub\;(\Varid{l'}\;\lub\;\Varid{l}))\;\bot \mskip1.5mu]\mskip1.5mu]){}\<[E]%
\\
\>[B]{}\lto{}\<[7]%
\>[7]{}\conf{\Sigma_{1}'''}{\textbf{writeRef}_{\textsc{fi}}\;(\llbracket Ref^{\Red{\textsf{{\tiny TCB}}}}_{\textsc{fs}}\;\Varid{a}\rrbracket_{\textsc{fi}}^{\cdot })\;\Varid{n}}{}\<[E]%
\\
\>[7]{}(\Sigma_{1}'''\mathrel{=}\Sigma_{1}''\;[\mskip1.5mu \mu_\textsc{fi}\;\mapsto\;\Sigma_{1}''.\mu_\textsc{fi}\;[\mskip1.5mu \Varid{n}\;\mapsto\;Lb^{\Red{\textsf{{\tiny TCB}}}}\;(\Varid{lc}\;\lub\;(\Varid{l'}\;\lub\;\Varid{l}))\;\Varid{v}\mskip1.5mu]\mskip1.5mu]){}\<[E]%
\\
\>[B]{}\lto\conf{\Sigma_{1}'''}{\mathbf{return}\;()}\;{}\<[E]%
\\
\>[B]{}\hsindent{7}{}\<[7]%
\>[7]{}(\Sigma_{1}''''\mathrel{=}\Sigma_{1}'''\;[\mskip1.5mu \mu_\textsc{fi}\;\mapsto\;\Sigma_{1}'''.\mu_\textsc{fi}\;[\mskip1.5mu \Varid{a}\;\mapsto\;Lb^{\Red{\textsf{{\tiny TCB}}}}\;\Varid{l}\;(Ref^{\Red{\textsf{{\tiny TCB}}}}_{\textsc{fi}}\;(\Varid{lc}\;\lub\;\Varid{l'}\;\lub\;\Varid{l})\;\Varid{n})\mskip1.5mu]\mskip1.5mu]){}\<[E]%
\ColumnHook
\end{hscode}\resethooks

Finally, this allows us to conclude (from the rule for \ensuremath{\textbf{toLabeled}}), that

\begin{hscode}\SaveRestoreHook
\column{B}{@{}>{\hspre}l<{\hspost}@{}}%
\column{E}{@{}>{\hspre}l<{\hspost}@{}}%
\>[B]{}\conf{\Sigma_{1}}{(\llbracket \textbf{\Blue{E}}\rrbracket_{\textsc{fi}}^{\Sigma})\;[\mskip1.5mu \textbf{toLabeled}\;\Varid{l'}\;(\mu\;\Sigma_{1}.\lcurr)\mskip1.5mu]}{}\<[E]%
\\
\>[B]{}\lto\conf{\Sigma_{2}}{(\llbracket \textbf{\Blue{E}}\rrbracket_{\textsc{fi}}^{\Sigma})\;[\mskip1.5mu \mathbf{return}\;()\mskip1.5mu]}{}\<[E]%
\ColumnHook
\end{hscode}\resethooks

where \ensuremath{\Sigma_{2}\mathrel{=}(\Sigma_{1}.\lcurr,\Sigma_{1}''''.\mu_\textsc{fi})}. Now we can check that \ensuremath{\llbracket \Sigma'\rrbracket_{\textsc{fi}}\mathrel{=}\Sigma_{2}} from the definition of \ensuremath{\llbracket \cdot \rrbracket_{\textsc{fi}}} for states.

{\bf Case} \ensuremath{\textbf{\Blue{E}}\;[\mskip1.5mu \textbf{downgrade}_\textsc{fs}\;(Ref^{\Red{\textsf{{\tiny TCB}}}}_{\textsc{fs}}\;\Varid{a})\;\Varid{l'}\mskip1.5mu]}.

We have \ensuremath{\conf{\Sigma}{\textbf{\Blue{E}}\;[\mskip1.5mu \textbf{downgrade}_\textsc{fs}\;(Ref^{\Red{\textsf{{\tiny TCB}}}}_{\textsc{fs}}\;\Varid{a})\;\Varid{l'}\mskip1.5mu]}\lto\conf{\Sigma'}{\textbf{\Blue{E}}\;[\mskip1.5mu \mathbf{return}\;()\mskip1.5mu]}}, where \ensuremath{\Sigma.\mu_\textsc{fs}\;(\Varid{a})\mathrel{=}Lb^{\Red{\textsf{{\tiny TCB}}}}\;\Varid{l}\;(Lb^{\Red{\textsf{{\tiny TCB}}}}\;\Varid{l''}\;\Varid{v})},
\ensuremath{\Sigma'\mathrel{=}\Sigma\;[\mskip1.5mu \mu_\textsc{fs}\;\mapsto\;\Sigma.\mu_\textsc{fs}\;[\mskip1.5mu \Varid{a}\;\mapsto\;Lb^{\Red{\textsf{{\tiny TCB}}}}\;\Varid{l}\;(Lb^{\Red{\textsf{{\tiny TCB}}}}\;(\Varid{l}\;\lub\;\Varid{l''}\;\glb\;\Varid{l'})\;\bot )\mskip1.5mu]\mskip1.5mu]}, and we know that \ensuremath{\Sigma.\lcurr\;\flows\;\Varid{l}}.

Let \ensuremath{\Sigma_{1}\mathrel{=}\llbracket \Sigma\rrbracket_{\textsc{fi}}}. Then there exists a function \ensuremath{\mu} such that:

\begin{hscode}\SaveRestoreHook
\column{B}{@{}>{\hspre}l<{\hspost}@{}}%
\column{7}{@{}>{\hspre}l<{\hspost}@{}}%
\column{E}{@{}>{\hspre}l<{\hspost}@{}}%
\>[7]{}\conf{\Sigma_{1}}{\llbracket \textbf{\Blue{E}}\;[\mskip1.5mu \textbf{downgrade}_\textsc{fs}\;(Ref^{\Red{\textsf{{\tiny TCB}}}}_{\textsc{fs}}\;\Varid{a})\;\Varid{l'}\mskip1.5mu]\rrbracket_{\textsc{fi}}^{\Sigma}}{}\<[E]%
\\
\>[B]{}\lto{}\<[7]%
\>[7]{}\conf{\Sigma_{1}}{(\llbracket \textbf{\Blue{E}}\rrbracket_{\textsc{fi}}^{\Sigma})\;[\mskip1.5mu \textbf{toLabeled}\;(\Sigma_{1}.\lcurr\;\lub\;\Varid{l})\;(\mu\;\Sigma_{1}.\lcurr)\mskip1.5mu]}{}\<[E]%
\ColumnHook
\end{hscode}\resethooks

We now step through the evaluation of \ensuremath{\conf{\Sigma_{1}}{\mu\;\Sigma_{1}.\lcurr}}, as follows:

\begin{hscode}\SaveRestoreHook
\column{B}{@{}>{\hspre}l<{\hspost}@{}}%
\column{7}{@{}>{\hspre}l<{\hspost}@{}}%
\column{13}{@{}>{\hspre}l<{\hspost}@{}}%
\column{35}{@{}>{\hspre}l<{\hspost}@{}}%
\column{51}{@{}>{\hspre}l<{\hspost}@{}}%
\column{E}{@{}>{\hspre}l<{\hspost}@{}}%
\>[7]{}\conf{\Sigma_{1}}{\mu\;\Sigma_{1}.\lcurr}{}\<[E]%
\\
\>[B]{}\lto{}\<[7]%
\>[7]{}\conf{\Sigma_{1}}{\;\mathbf{do}\;\Varid{i}\leftarrow \textbf{readRef}_{\llbracket Ref^{\Red{\textsf{{\tiny TCB}}}}_{\textsc{fs}}\;\Varid{a}\rrbracket_{\textsc{fi}}^{\cdot }};\Varid{lc}\leftarrow \Varid{getlabel};\mathbin{...}}{}\<[E]%
\\
\>[B]{}\lto{}\<[7]%
\>[7]{}\conf{\Sigma_{1}'}{\;\mathbf{do}\;\Varid{lc}\leftarrow \mathbf{getLabel};\Varid{n}\leftarrow \textbf{newRef}_{\textsc{fi}}\;{}\<[51]%
\>[51]{}(\Varid{lc}\;\lub\;(\Varid{l'}\;\glb\;\Varid{l}))\;\bot ;\mathbin{...}}{}\<[E]%
\\
\>[7]{}(\Sigma_{1}'\mathrel{=}\Sigma_{1}\;[\mskip1.5mu \lcurr\;\mapsto\;\Sigma_{1}.\lcurr\;\lub\;\Varid{l}\mskip1.5mu]){}\<[E]%
\\
\>[B]{}\lto{}\<[7]%
\>[7]{}\conf{\Sigma_{1}'}{\;\mathbf{do}\;\Varid{n}\leftarrow \textbf{newRef}_{\textsc{fi}}\;{}\<[35]%
\>[35]{}(\Varid{lc}\;\lub\;(\Varid{l'}\;\glb\;\Varid{l}))\;\bot ;\textbf{writeRef}_{\textsc{fi}}\;(\llbracket Ref^{\Red{\textsf{{\tiny TCB}}}}_{\textsc{fs}}\;\Varid{a}\rrbracket_{\textsc{fi}}^{\cdot })\;\Varid{n}}{}\<[E]%
\\
\>[B]{}\lto{}\<[7]%
\>[7]{}\conf{\Sigma_{1}''}{\textbf{writeRef}_{\textsc{fi}}\;(\llbracket Ref^{\Red{\textsf{{\tiny TCB}}}}_{\textsc{fs}}\;\Varid{a}\rrbracket_{\textsc{fi}}^{\cdot })\;\Varid{n}}{}\<[E]%
\\
\>[7]{}(\Sigma_{1}''\mathrel{=}\Sigma_{1}'\;[\mskip1.5mu \mu_\textsc{fi}\;\mapsto\;\Sigma_{1}'.\mu_\textsc{fi}\;[\mskip1.5mu \Varid{n}\;\mapsto\;Lb^{\Red{\textsf{{\tiny TCB}}}}\;(\Varid{lc}\;\lub\;(\Varid{l'}\;\glb\;\Varid{l}))\;\bot \mskip1.5mu]\mskip1.5mu]){}\<[E]%
\\
\>[B]{}\lto\conf{\Sigma_{1}'''}{\mathbf{return}\;()}{}\<[E]%
\\[\blanklineskip]%
\ColumnHook
\end{hscode}\resethooks

Finally, this allows us to conclude (from the rule for \ensuremath{\textbf{toLabeled}}), that

\begin{hscode}\SaveRestoreHook
\column{B}{@{}>{\hspre}l<{\hspost}@{}}%
\column{E}{@{}>{\hspre}l<{\hspost}@{}}%
\>[B]{}\conf{\Sigma_{1}}{(\llbracket \textbf{\Blue{E}}\rrbracket_{\textsc{fi}}^{\Sigma})\;[\mskip1.5mu \textbf{toLabeled}\;\Varid{l'}\;(\mu\;\Sigma_{1}.\lcurr)\mskip1.5mu]}{}\<[E]%
\\
\>[B]{}\lto\conf{\Sigma_{2}}{(\llbracket \textbf{\Blue{E}}\rrbracket_{\textsc{fi}}^{\Sigma})\;[\mskip1.5mu \mathbf{return}\;()\mskip1.5mu]}{}\<[E]%
\ColumnHook
\end{hscode}\resethooks

where \ensuremath{\Sigma_{2}\mathrel{=}(\Sigma_{1}.\lcurr,\Sigma_{1}'''.\mu_\textsc{fi})}. Now we can check that \ensuremath{\llbracket \Sigma'\rrbracket_{\textsc{fi}}\mathrel{=}\Sigma_{2}} from the definition of \ensuremath{\llbracket \cdot \rrbracket_{\textsc{fi}}} for states.

\end{proof}

Now we can state the main theorem of this section.

{\bf Theorem.} [Embedding \liofs{} in \lio{}]
Let \ensuremath{\Varid{t}} be a well-typed term in \liofs{}.
  Then if \ensuremath{\conf{\Sigma}{\Varid{t}}\lto^*\conf{\Sigma'}{\Varid{v}}}, we have \ensuremath{\conf{\llbracket \Sigma\rrbracket_{\textsc{fi}}}{\llbracket \Varid{t}\rrbracket_{\textsc{fi}}^{\Sigma}}\lto^*\conf{\llbracket \Sigma'\rrbracket_{\textsc{fi}}}{\llbracket \Varid{v}\rrbracket_{\textsc{fi}}^{\Sigma}}}, and if
  \ensuremath{\conf{\Sigma}{\Varid{t}}\lto^*\conf{\Sigma'}{{\Uparrow}}}, then
  \ensuremath{\conf{\llbracket \Sigma\rrbracket_{\textsc{fi}}}{\llbracket \Varid{t}\rrbracket_{\textsc{fi}}^{\Sigma}}\lto^*\conf{\llbracket \Sigma'\rrbracket_{\textsc{fi}}}{{\Uparrow}}}.

\begin{proof}
  By induction on the number of steps in \ensuremath{\conf{\Sigma}{\Varid{t}}\lto^*\conf{\Sigma'}{\Varid{v}}},
  using Lemma~\ref{lem:step-embedding} and uniqueness of normal forms
  in \lio{}.
\end{proof}

\end{document}
